\NewDocumentCommand\CiteExtended{}{the appendices}
\title{Cubical Syntax for Reflection-Free Extensional Equality}
\author{Jonathan Sterling}{%
  Carnegie Mellon University
  \and
  \url{http://cs.cmu.edu/~jmsterli}%
}{%
  jmsterli@cs.cmu.edu%
}{%
  https://orcid.org/0000-0002-0585-5564%
}{}
\author{Carlo Angiuli}{%
  Carnegie Mellon University
  \and
  \url{http://cs.cmu.edu/~cangiuli}%
}{%
  cangiuli@cs.cmu.edu%
}{%
  https://orcid.org/0000-0002-9590-3303%
}{}
\author{Daniel Gratzer}{%
  Aarhus University
  \and
  \url{http://jozefg.github.io}
}{%
  gratzer@cs.au.dk%
}{%
  https://orcid.org/0000-0003-1944-0789%
}{}
\authorrunning{J. Sterling, C. Angiuli, and D. Gratzer}
\keywords{Dependent type theory, extensional equality, cubical type theory, categorical gluing, canonicity}
\begin{document}

\maketitle

\begin{abstract} We contribute \XTT, a cubical reconstruction of Observational
  Type Theory~\Cite{altenkirch-mcbride-swierstra:2007} which extends
  Martin-L\"of's intensional type theory with a \emph{dependent equality type}
  that enjoys function extensionality and a judgmental version of the
  \emph{unicity of identity proofs} principle (UIP): any two elements of the
  same equality type are judgmentally equal.
  Moreover, we conjecture that the typing relation can be decided in a
  practical way. In this paper, we establish an algebraic canonicity theorem
  using a novel cubical extension (independently proposed by Awodey) of the
  \emph{logical families} or \emph{categorical gluing} argument inspired by
  \Cite[Coquand and Shulman]{coquand:2018,shulman:2015}: every closed element
  of boolean type is derivably equal to either $\true$ or $\false$.

\end{abstract}

\section{Introduction}

The past fifty years of constructive type theory can be summed up as the search
for a scientific understanding of \emph{equality}, punctuated by moments of
qualitative change in our perception of the boundary between semantics (actual
construction) and syntax (proof theory) from a type-theoretic point of view.
Computation is critical to both the semantics and syntax of type theory---from
Martin-L\"of's meaning explanations \Cite{martin-lof:1984}, supplying type
theory with its direct semantics and intuitionistic grounding, to syntactic
properties such as closed and open \emph{canonicity} which establish
computation as the indispensible method for deriving equations.

For too long, a limiting perspective on extensional type theory has
prevailed, casting it as a particular syntactic artifact (for instance, the
formalism obtained by stripping of their meaning the rules which
incidentally appear in~\Cite[Martin-L\"of's monograph]{martin-lof:1984}), a formal system which enjoys
precious few desirable syntactic properties and is distinguished primarily by its
\emph{equality reflection} rule.

We insist on the contrary that the importance of extensional type
theory lies not in the specific choice of syntactic presentation (historically,
via equality reflection), but rather in the \emph{semantic} characteristics of
its equality connective, which are invariant under choice of syntax.
The specifics of how such an equality construct is presented syntactically are
entirely negotiable (the internal language of a doctrine is determined only up
to equivalence), and therefore has an empirical component.

\subsection{Internalizing equality: from judgments to types}

Equality in type theory begins with a form of judgment $\Gamma\vdash A = B\
\mathit{type}$, which expresses that $A$ and $B$ are exactly the same type;
because types can depend on terms, one also includes a form of judgment
$\Gamma\vdash M = N : A$ to express that $M$ and $N$ are exactly the same
element of $A$. This kind of equality, called \emph{judgmental equality}, is
silent in the sense that if $\Gamma\vdash A = B\ \mathit{type}$ holds and
$\Gamma\vdash M : A$ holds, then $\Gamma\vdash M : B$ without further ado.

Judgmental equality in type theory is a completely top-level affair: it cannot
be assumed or negated. On the other hand, both programming and mathematics
require one to establish equations under the assumption of other
equations (for instance, as part of an induction hypothesis). For this reason,
it is necessary to \emph{internalize} the judgmental equality $M=N:A$ as a type
$\Eq{A}{M}{N}$ which can be assumed, negated, or inhabited by induction.

The simplest way to internalize judgmental equality as a type is to provide
introduction and elimination rules which make the existence of a proof of
$\Eq{A}{M}{N}$ equivalent to the judgment $M=N:A$:
\begin{mathpar}
  \inferrule[introduction]{
    \Gamma\vdash M = N : A
  }{
    \Gamma\vdash \mathsf{refl} : \Eq{A}{M}{N}
  }
  \and
  \inferrule[elimination]{
    \Gamma\vdash{}P : \Eq{A}{M}{N}
  }{
    \Gamma\vdash{}M=N:A
  }
\end{mathpar}

The \textsc{elimination} rule above is usually called \emph{equality
reflection}, and is characteristic of \emph{extensional} versions of
Martin-L\"of's type theory.
This presentation of the equality type is very strong, and broadens the reach of
judgmental equality into assertions of higher-level.

A consequence of the equality reflection rule is that judgmental equality is no
longer decidable, a pragmatic concern which affects implementation and
usability. On the other hand, equality reflection implies numerous critical
reasoning principles, including function extensionality (if two functions agree
on all inputs, then they are equal), a judgmental version of the famous
\emph{unicity of identity proofs} (UIP) principle (any two elements of the
equality type are equal), and perhaps the most crucial consequence of
internalized equality, \emph{coercion} (if $M:P(a_0)$ and $P:\Eq{A}{a_0}{a_1}$,
then there is some term $P^*(M) : P(a_1)$; in this case, $P^*(M) = M$).

\subsection{Extensional equality via equality reflection}

The earliest type-theoretic proof assistants employed the equality reflection
rule (or equivalent formulations) in order to internalize the judgmental
equality, a method most famously represented by \Nuprl{} \Cite{constable:1986}
and its descendents, including \RedPRL{}~\Cite{redprl:2018:lfmtp}. The
\Nuprl-style formalisms act as a ``window on the truth'' for a \emph{single}
intended semantics inspired by Martin-L\"of's computational meaning
explanations~\Cite{allen:1987:thesis}; semantic justification in the
computational ontology is the \emph{only} consideration when extending the
\Nuprl{} formalism with a new rule, in contrast to other traditions in which
global properties (e.g.\ admissibility of structural rules, decidability of
typing, interpretability in multiple models, etc.) are treated as definitive.

Rather than supporting \emph{type checking}, proof assistants in this style
rely heavily on interactive development of typing derivations using tactics and
partial decision procedures. A notable aspect of the \Nuprl{} family is that
their formal sequents range not over typed terms (proofs), but over untyped raw
terms (realizers); a consequence is that during the proof process, one must
repeatedly establish numerous \emph{type functionality} subgoals, which restore
the information that is lost when passing from a proof to a realizer.  To
mitigate the corresponding blow-up in proof size, \Nuprl{} relies heavily on
untyped computational reasoning via pointwise functionality, a non-standard
semantics for dependently typed sequents which has some surprising
consequences, such as refuting the principle of \emph{dependent
cut}~\Cite{kopylov:2004}.

Another approach to implementing type theory with equality reflection is
exemplified in the experimental \Andromeda{} proof
assistant~\Cite{bauer-gilbert-haselwarter-pretnar-stone:2016}, in which
proofs are also built interactively using tactics, but judgments range over
abstract proof derivations rather than realizers. This approach mitigates to
some degree the practical problems caused by erasing information prematurely,
and also enables interpretation into a broad class of semantic models.

Although \Nuprl/\RedPRL{} and \Andromeda{} illustrate that techniques beyond
mere type checking are profitable to explore, the authors' experiences building
and using \RedPRL{} for concrete formalization of mathematics underscored the
benefits of having a practical algorithm to check types, particularly in the
setting of cubical type theory (\cref{sec:cubical}), whose higher-dimensional
structure significantly reduces the applicability of \Nuprl-style untyped
reasoning.

In particular, whereas it is possible to treat \emph{all}
$\beta$-rules and many $\eta$-rules in non-cubical type theory as untyped
rewrites, such an approach is unsound for the cubical account of higher
inductive types and univalence~\cite{angiuli-favonia-harper:2018}; consequently,
in \RedPRL{} many $\beta$/$\eta$ rewrites must emit auxiliary proof obligations.
Synthesizing these experiences and challenges led to the creation of the
\redtt{} proof assistant for Cartesian cubical type theory~\Cite{redtt:2018:dagstuhl}.

\subsection{Equality in intensional type theory}

Martin-L\"of's Intensional Type Theory (\ITT)
\Cite{martin-lof:itt:1975,nordstrom-peterson-smith:1990} represents another
extremal point in the internalization of judgmental equality. \ITT{}
underapproximates the equality judgment via its \emph{identity type},
characterized by rules like the following:
\begin{mathpar}
  \inferrule[formation]{
    \Gamma\vdash{}A\ \mathit{type}
    \\
    \Gamma\vdash M:A
    \\
    \Gamma\vdash N:A
  }{
    \Gamma\vdash\mathsf{Id}_A(M,N)\ \mathit{type}
  }
  \and
  \inferrule[introduction]{
    \Gamma\vdash{}M:A
  }{
    \Gamma\vdash{}\mathsf{refl}_A(M):\mathsf{Id}_A(M,M)
  }
  \and
  \inferrule[elimination]{
    \Gamma,x:A,y:A,z:\mathsf{Id}_A(x,y)\vdash C(x,y,z)\ \mathsf{type}
    \\
    \Gamma\vdash P : \mathsf{Id}_A(M,N)
    \\
    \Gamma,x:A\vdash Q : C(x,x,\mathsf{refl}_A(x))
  }{
    \Gamma\vdash{}\mathsf{J}_{x,y,z.C}(P; x.Q) : C(M,N,P)
  }
  \and
  \cdots
\end{mathpar}

Symmetry, transitivity and coercion follow from the elimination rule of the
identity type. Other properties which follow directly from equality reflection,
such as the unicity of identity proofs and function extensionality, are not
validated by \ITT{}; indeed, there are sufficiently intensional models of the
identity type to refute both properties
\Cite{streicher:1994,hofmann-streicher:1998}. While
the desirability of the unicity principle is perhaps up for debate, especially
in light of recent developments in Homotopy Type Theory~\Cite{hottbook},
theorists and practitioners alike generally agree that
function extensionality is desirable.

A significant selling-point for \ITT{} is that, by avoiding equality reflection,
it presents a theory which can be implemented using type checking and
normalization. Consequently, $\beta$ and $\eta$ rules are totally automatic and
never require intervention from the user---in contrast to systems like
\RedPRL{}, whose users are accustomed to establishing $\beta/\eta$ equivalences
by hand at times when heuristical tactics prove inadequate. The downsides
of pure \ITT{}, however, are manifold: function extensionality is absolutely
critical in practice.

\subsection{Setoids and internal model constructions}\label{sec:setoids}

A standard technique for avoiding the deficiencies of the identity type in
\ITT{} is the \emph{setoid construction}~\Cite{hofmann:1995},
an exact completion which glues an equivalence relation ${=_A}$ onto each type
$\verts{A}$ in the spirit of Bishop~\Cite{bishop:1967}. When using setoids, a
function $A\to B$ consists of a type-theoretic function
$f:\verts{A}\to\verts{B}$ together with a proof that it preserves the
equivalence relation, $f_= : \picl{x,y}{\verts{A}}{x \mathrel{=_A} y \to f(x)
\mathrel{=_B} f(y)}$; a \emph{dependent setoid} (family of setoids) is a
type-theoretic family equipped with a coherent coercion operator.

Setoids are a discipline for expressing internally precisely the
extrinsic properties required for constructions to be
extensional (compatible with equality); these extra proof
obligations must be satisfied in parallel with constructions at every turn. The
state of affairs for setoids is essentially analogous to that of proof
assistants with equality reflection, in which type functionality subgoals play
a similar role to the auxiliary paperwork generated by setoids.

Paradoxically, however, every construction in ordinary \ITT{}
is automatically extensional in this sense.
A solution to the problem of equality in type theory should, unlike setoids,
take advantage of the fact that type theory is already restricted to
extensional constructions, adding to it only enough language to refer to
equality internally. This is the approach taken by both Observational Type
Theory and \XTT.

\subsection{Observational Type Theory}\label{sec:ott}

The first systematic solution to the problem of syntax for extensional equality
without equality reflection was Observational Type Theory
(\OTT)~\Cite{altenkirch-mcbride:2006,altenkirch-mcbride-swierstra:2007}, which
built on \Cite[early work by Altenkirch and McBride]{altenkirch:1999, mcbride:1999}. The central idea of
\OTT{} is to work with a \emph{closed} universe of types, defining by
recursion for each pair of types $A,B$ a type $\mathsf{Eq}(A,B)$ of proofs that $A$ and $B$
are equal, and for each pair of elements $M:A$ and $N:B$, a
type of proofs $\mathsf{Eq}_{A,B}(M,N)$ that $M$ and $N$ are (heterogeneously) equal. Finally, one
defines ``generic programs'' by recursion on type structure which calculate
coercions and coherences along proofs of equality.

One can think of \OTT{} as equipping the semantic setoid construction with a
direct-style type-theoretic language, and adding to it closed,
inductively defined universes of types. The heterogeneous equality of \OTT{},
initially a simplifying measure adopted from \Cite[McBride's thesis]{mcbride:1999}, is an early
precursor of the \emph{dependent paths} which appear in Homotopy Type
Theory~\Cite{hottbook}, Cubical Type
Theory~\Cite{cchm:2017,abcfhl:2019,angiuli-favonia-harper:2018}, and \XTT.

Recently, McBride and his collaborators have made progress toward a cubical
version of \OTT{}, using a different cube category and coercion structure, in
which one coerces only from $0$ to $1$, and obtains fillers using an affine
rescaling operation~\Cite{chapman-forsberg-mcbride:2018}.

\subsection{Cubical Type Theory}\label{sec:cubical}

In a rather different line of research, Voevodsky showed that Intensional Type
Theory is compatible with a \emph{univalence axiom} yielding an element of
$\mathsf{Id}_\mathcal{U}(A,B)$ for every equivalence (coherent isomorphism)
between types $A,B$~\Cite{kapulkin-lumsdaine:2016,hottbook}.
A univalent universe \emph{classifies} types under a certain size cut-off in the sense of higher topos theory~\Cite{lurie:2009}. However, Intensional Type Theory extended
with univalence lacks \emph{canonicity}, because identity elimination
computes only on $\mathsf{refl}$ and not on proofs constructed by univalence.

Since then, \emph{cubical type theories} have been developed to validate
univalence without disrupting
canonicity~\Cite{cchm:2017,angiuli-favonia-harper:2018}. These type theories
extend Martin-L\"of's type theory with an abstract interval, maps out of which
represent paths, a higher-dimensional analogue to equality;
the interval has abstract elements, represented by a new
sort of \emph{dimension} variable $i$, and constant endpoints $0,1$. Coercions
arise as an instance of \emph{Kan structure} governed directly by the structure of
paths between types, which are nothing more than types dependent on an
additional dimension variable.

There are currently two major formulations of cubical type theory. De Morgan cubical
type theory \Cite{cchm:2017} equips the interval with negation and binary
connection (minimum and maximum) operations. Cartesian cubical type theory
\Cite{abcfhl:2019,angiuli-favonia-harper:2018}, the closest relative of \XTT{},
has no additional structure on the interval, but equips types with a
much stronger notion of coercion generalizing the one described in
\cref{sec:coercion}.

\subsection{Our contribution: \texorpdfstring{\XTT}{XTT}}

We contribute \XTT{} (\cref{sec:xtt-rules}), a new type theory that supports
extensional equality without equality reflection, using ideas from cubical type theory
\Cite{cchm:2017,abcfhl:2019,angiuli-favonia-harper:2018}. In particular, we
obtain a compositional account of propositional equality satisfying function
extensionality and a \emph{judgmental} version of the unicity of identity
proofs---when $P,Q : \Eq{A}{M}{N}$, we have $P=Q$ judgmentally---enabling us to
substantially simplify our Kan operations (\cref{sec:composition}). Moreover,
\XTT{} is closed under a
cumulative hierarchy\footnote{As in previous work \Cite{sterling:2018:gat}, we employ an
\emph{algebraic} version of cumulativity which does not require subtyping.} of
closed universes \`a la Russell. We hope to integrate \XTT{} into the \redtt{} cubical
proof assistant~\Cite{redtt:2018:dagstuhl} as an implementation
of extensional equality in the style of two-level type
theory~\cite{angiuli-favonia-harper:2018}.

A common thread that runs through the \XTT{} formalism is the decomposition of
constructs from \OTT{} into more modular, \emph{judgmental} principles. For
instance, rather than defining equality separately at every type and entangling
the connectives, we define equality once and for all using the interval.
Likewise, rather than ensuring that equality proofs are unique through brute
force, we obtain unicity using a structural rule which does not mention the
equality type.

By first developing the model theory of \XTT{} in an algebraic way
(\cref{sec:model-theory}), we then prove a canonicity theorem for the
\emph{initial} model of \XTT{} (\cref{sec:canonicity-theorem}): any closed term
of boolean type is equal to either $\true$ or $\false$.  This result is
obtained using a novel extension of the categorical gluing technique described
by \Cite[Coquand and Shulman]{coquand:2018,shulman:2015}, in which one glues
along a cubical nerve functor from \XTT{}'s syntactic category into cubical
sets. We learned after completion of this paper that the idea of proving
canonicity for cubical type theories by gluing along a cubical nerve functor
was circulated informally by Awodey some years prior, and is being
independently developed by Awodey and Fiore.
Canonicity expresses
a form of ``computational adequacy''---in essence, that the equational theory
of \XTT{} suffices to derive any equation which ought to hold by (closed)
computation---and is one of many syntactical considerations that experience has
shown to be correlated to usability.

\section{Programming and proving in \texorpdfstring{\XTT}{XTT}}

Like other cubical type theories, the \XTT{} language extends Martin-L\"of's
type theory with a new sort of variable $i$ ranging over an abstract interval
with global elements $0$ and $1$; we call an element $r$ of the interval a
\emph{dimension}, and we write $\e$ to range over a constant dimension $0$ or
$1$. Cubical type theories like \XTT{} also use a special kind of hypothesis to
constrain the values of dimensions: when $r$ and $s$ are dimensions, then $r=s$
is a \emph{constraint}. In \XTT{}, a single context $\Psi$ accounts
for both dimension variables ($\Psi,i$) and constraints ($\Psi,r=s$). We will
write $\IsDim{r}$ for when a dimension $r$ is valid in a dimension context $\Psi$.
The judgment $\EqDim{r}{s}$ holds when $r$ and $s$ are equal as dimensions with respect
to the constraints in $\Psi$. Dimensions can be substituted for dimension variables,
an operation written $\dsubst{M}{r}{i}$.

Finally, ordinary type-theoretic assumptions $x:A$ are kept in a context
$\Gamma$ that depends on $\Psi$. In \XTT{}, a full context is therefore
written $\Psi\mid\Gamma$. The meaning of a judgment at context
$\parens{\Psi,i=r}$ is completely determined by its instance under the
substitution $r/i$. Under the false constraint $0=1$, all judgments hold; the
resulting collapse of the typing judgment and the judgmental equality does not
disrupt any important metatheoretic properties, because the theory of
dimensions is decidable.

\begin{figure}
  \noindent\makebox[\textwidth]{%
    \begin{minipage}{\paperwidth}
      \begin{grammar}
        cubes & \Psi,\Phi & \cdot \GrmSep \Psi, i \GrmSep \Psi, \xi
        \\
        contexts & \Gamma,\Delta & \cdot \GrmSep \Gamma, x:A
        \\
        dimensions & r, s & i\GrmSep \e
        \\
        constant dims. & \e & 0 \GrmSep 1
        \\
        constraints & \xi & r=r'
        \\
        universe levels & k,l & n\quad (n\in\mathbb{N})
        \\
        types & A,B & M\GrmSep \picl{x}{A}{B}\GrmSep \sigmacl{x}{A}{B} \GrmSep \Eq{i.A}{M}{N} \GrmSep \TyLift{k}{l}{A} \GrmSep \Univ[k] \GrmSep \bool
        \\
        terms & M, N & x \GrmSep A\GrmSep \lam{x}{M} \GrmSep \app[x:A.B]{M}{N} \GrmSep \pair{M}{N} \GrmSep \fst[x:A.B]{M} \GrmSep \snd[x:A.B]{M}\GrmSep\\
        \GrmContinue &
          \lam{i}{M} \GrmSep \papp[i.A]{M}{r}\GrmSep
          \true \GrmSep \false \GrmSep \ifb{x.A}{M}{N_0}{N_1}\GrmSep
          \\
        \GrmContinue &
          \coe{i.A}{r}{r'}{M}\GrmSep
          \hcom{A}{r}{r'}{M}{
            \sys{s}{j.N_0}{j.N_1}
          }
      \end{grammar}
    \end{minipage}
  }
  \caption{A summary of the raw syntax of \XTT. As a matter of top-level
  notation, we freely omit annotations that can be inferred from context, writing
  $\app{M}{N}$ for $\app[x:A.B]{M}{N}$. The annotations chosen in the raw syntax are the minimal ones required to establish a coherent interpretation into the initial \XTT-algebra; for instance, it is unnecessary to include an annotation on the $\lambda$-abstraction.}\label{fig:grammar}
\end{figure}

The general typehood judgment $\IsTy<\Psi>[\Gamma]{A}[k]$ means that $A$ is a
type of universe level $k$ in context $\Gamma$ over the cube $\Psi$; note that
this judgment presupposes the well-formedness of $\Psi,\Gamma$.
Likewise, the element typing judgment $\IsTm<\Psi>[\Gamma]{M}{A}$ means
that $M$ is an element of the type $A$ in $\Gamma$ over $\Psi$ as above; this form of judgment presupposes the
well-formedness of $A$ and thence $\Psi,\Gamma$. We also have typed
judgmental equality $\EqTy<\Psi>[\Gamma]{A}{B}$ and
$\EqTm<\Psi>[\Gamma]{M}{N}{A}$, which presuppose the well-formedness of
\emph{all} their constituents.

\paragraph*{Dependent equality types}

\XTT{} extends Martin-L\"of type theory with \emph{dependent equality types}
$\Eq{i.A}{N_0}{N_1}$ when $\IsTy<\Psi,i>{A}$ and
$\IsTm<\Psi>{N_0}{\dsubst{A}{0}{i}}$ and $\IsTm<\Psi>{N_1}{\dsubst{A}{1}{i}}$.
Geometrically, elements of this type are \emph{lines} or \emph{paths}
in the type $A$ ranging over dimension $i$, with left endpoint $N_0$ and
right endpoint $N_1$.\footnote{Our dependent equality types are
locally the same as dependent path types $\mathsf{Path}_{i.A}({N_0},{N_1})$
from cubical type theories; however, we have arranged in \XTT{} for them to
satisfy a unicity principle by which they earn the name ``equality'' rather than ``path''.}
This type captures internally the
equality of $N_0$ and $N_1$; dependency of $A$ on the dimension $i$ is in essence
a cubical reconstruction of heterogeneous equality, albeit with different
properties from the version invented by McBride in \Cite[his thesis]{mcbride:1999}.

An element of the equality type $\Eq{i.A}{N_0}{N_1}$ is formed by the dimension
$\lambda$-abstraction $\lam{i}{M}$, requiring that $M$ is an element of $A$ in the extended context, and
that $N_0,N_1$ are the left and right sides of $M$ respectively. Proofs $P$ of
equality are eliminated by dimension application, $\papp{P}{r}$, and are subject to
$\beta,\eta,\xi$ rules analogous to those for function types. Finally, we have
$P(\e) = N_\e$ always, extending Gentzen's principle of inversion to the side
condition that we placed on $M$.
More formally:
\begin{mathparpagebreakable}
  \inferrule{
    \IsTm<\Psi,i>{M}{A}
    \\\\
    \etc{
      \EqTm<\Psi,i=\e>{M}{N_\e}{A}
    }
  }{
    \IsTm<\Psi>{\lam{i}{M}}{\Eq{i.A}{N_0}{N_1}}
  }
  \and
  \inferrule{
    \IsDim{r}
    \\\\
    \IsTm{M}{\Eq{i.A}{N_0}{N_1}}
  }{
    \IsTm{M(r)}{\dsubst{A}{r}{i}}
  }
  \and
  \inferrule{
    \IsTm{M}{\Eq{i.A}{N_0}{N_1}}
  }{
    \EqTm{M(\e)}{N_\e}{\dsubst{A}{\e}{i}}
  }
  \and
  \inferrule{
    \IsTm{M}{\Eq{i.A}{N_0}{N_1}}
  }{
    \EqTm{M}{\lam{i}{M(i)}}{\Eq{i.A}{N_0}{N_1}}
  }
  \and
  \inferrule{
    \IsTm<\Psi,i>{M}{A}
  }{
    \EqTm{(\lam{i}{M})(r)}{\dsubst{M}{r}{i}}{\dsubst{A}{r}{i}}
  }
\end{mathparpagebreakable}

\paragraph*{Function extensionality}

A benefit of the cubical formulation of equality types is that the principle of
function extensionality is trivially derivable in a computationally
well-behaved way. Suppose that $f, g : \picl{x}{A}{B}$ and we have a family of
equalities $h : \picl{x}{A}{\Eq{\_. B}{f(x)}{g(x)}}$; then, we obtain a proof
that $f$ equals $g$ by abstraction and application:
\begin{mathpar}
  \lam{i}{\lam{x}{h(x)(i)}} :
  \Eq{\_.\picl{x}{A}{B}}{f}{g}
\end{mathpar}

In semantics of type theory, the structure of equality on a type usually
mirrors the structure of the \emph{elements} of that type in a straightforward
way: for instance, a function of equations is used to equate two functions, and
a pair of equations is used to equate two pairs. The benefit of the cubical
approach is that this observation, at first purely empirical, is systematized
by \emph{defining} equality in every type in terms of the elements of that type
in a context extended by a dimension.

\paragraph*{Judgmental unicity of equality: boundary separation}

In keeping with our desire to provide \emph{convenient} syntax for working with
extensional equality, we want proofs $P,Q:\Eq{i.A}{N_0}{N_1}$ of the same
equation to be judgmentally equal. Rather than adding a rule to that effect,
whose justification in the presence of the elimination rules for equality types
would be unclear, we instead impose a more primitive \emph{boundary separation}
principle at the judgmental level: every term is completely determined by its
boundary.\footnote{We call this principle ``boundary separation'' because it
turns out to be exactly the fact that the collections of types and elements,
when arranged into presheaves on the category of contexts, are \emph{separated}
with respect to a certain coverage on this category. We develop this
perspective in \CiteExtended.}
\begin{mathpar}
  \inferrule{
    \IsDim<\Psi>{r}
    \\
    \etc{
      \EqTm<\Psi,r=\e>{M}{N}{A}
    }
  }{
    \EqTm<\Psi>{M}{N}{A}
  }
\end{mathpar}
In this rule we have abbreviated $\EqTm<\Psi,r=0>{M}{N}{A}$ and $\EqTm<\Psi,r=1>{M}{N}{A}$ as
$\etc{\EqTm<\Psi,r=\e>{M}{N}{A}}$. We shall make use of this notation throughout the paper.

We can now \emph{derive} a rule that (judgmentally) equates all
$P,Q:\Eq{i.A}{N_0}{N_1}$.

\begin{proof}

  If $P,Q : \Eq{i.A}{M}{N}$, then to show that $P=Q$, it suffices to show that
  $\lam{i}{P(i)} = \lam{i}{Q(i)}$; by the congruence rule for equality
  abstraction, it suffices to show that $P(i) = Q(i)$ in the extended context.
  But by boundary separation, we may pivot on the boundary of $i$, and it suffices
  to show that $P(0)=Q(0)$ and $P(1)=Q(1)$. But these are automatic, because
  $P$ and $Q$ are both proofs of $\Eq{i.A}{N_0}{N_1}$, and therefore
  $P(\e)=Q(\e)=N_\e$. \qedhere

\end{proof}

In an unpublished note from 2017, Thierry Coquand
identifies a class of cubical sets equivalent to our separated types, calling
them ``Bishop sets''~\Cite{coquand:2017:bish}.

\subsection{Kan operations: coercion and composition}

How does one \emph{use} a proof of equality? We must have at least a coercion operation
which, given a proof $Q:\Eq{\_.\Univ}{A}{B}$, coherently transforms elements $M:A$ to
elements of $B$.

\subsubsection{Generalized coercion}\label{sec:coercion}

In \XTT{}, coercion and its coherence are obtained as instances of one general
operation: for any two dimensions $r,r'$ and a \emph{line} of types $i.C$, if
$M$ is an element of $\dsubst{C}{r}{i}$, then $\coe{i.C}{r}{r'}{M}$ is an
element of $\dsubst{C}{r'}{i}$.
\begin{mathpar}
  \inferrule{
    \IsDim<\Psi>{r,r'}
    \\
    \IsTy<\Psi,i>{C}
    \\
    \IsTm<\Psi>{M}{\dsubst{C}{r}{i}}
  }{
    \IsTm<\Psi>{\coe{i.C}{r}{r'}{M}}{\dsubst{C}{r'}{i}}
  }
\end{mathpar}

In the case of a proof $Q:\Eq{\_.\Univ}{A}{B}$ of equality between types, we
coerce $M:A$ to the type
$B$ using the instance $\coe{i.Q(i)}{0}{1}{M}$. But how does $M$ relate
to its coercion? \emph{Coherence} of coercion demands their
equality, although such an equation must relate terms of (formally) different
types; this heterogeneous equality is stated in \XTT{} using a
\emph{dependent} equality type $\Eq{i.Q(i)}{M}{\coe{i.Q(i)}{0}{1}{M}}$. To
construct an element of this equality type, we use the same coercion operator
but with a different choice of $r,r'$; we construct this \emph{filler} by
coercing from $0$ to a fresh dimension, obtaining
$\lam{j}{\coe{i.Q(i)}{0}{j}{M}} : \Eq{i.Q(i)}{M}{\coe{i.Q(i)}{0}{1}{M}}$:
\begin{diagram}
  j.Q(j)\quad \ni\qquad
  M
  & \rLine^{\quad j.\coe{i.Q(i)}{0}{j}{M}\quad}
  & \coe{i.Q(i)}{0}{1}{M}
\end{diagram}

To see that the filler $\coe{i.Q(i)}{0}{j}{M}$ has the correct
boundary with respect to $j$, we inspect its instances under the substitutions
$0/j, 1/j$. First, we observe that the right-hand side
$\dsubst{\parens{\coe{i.Q(i)}{0}{j}{M}}}{1}{j}$ is exactly
$\coe{i.Q(i)}{0}{1}{M}$; second, we must see that
$\dsubst{\parens{\coe{i.Q(i)}{0}{j}{M}}}{0}{j}$ is $M$, bringing us to an
important equation that we must impose generally:
\begin{mathpar}
  \inferrule{}{
    \EqTm<\Psi>{\coe{i.C}{r}{r}{M}}{M}{\dsubst{C}{r}{i}}
  }
\end{mathpar}

\paragraph*{How do coercions compute?}

In order to ensure that proofs in \XTT{} can be computed to a canonical form, we
need to explain generalized coercion in each type in terms of the elements of
that type. To warm up, we explain how coercion must compute in a non-dependent
function type:
\begin{mathpar}
  \coe{i.\arr{A}{B}}{r}{r'}{M}
  =
  \lam{x}{
    \coe{i.B}{r}{r'}{
      \parens[\big]{
        \app{M}{
          \coe{i.A}{r'}{r}{x}
        }
      }
    }
  }
\end{mathpar}

That is, we abstract a variable $x:\dsubst{A}{r'}{i}$ and need to
obtain an element of type $\dsubst{B}{r'}{i}$. By \emph{reverse} coercion, we
obtain $\coe{i.A}{r'}{r}{x}:\dsubst{A}{r}{i}$; by applying $M$ to this, we
obtain an element of type $\dsubst{B}{r}{i}$. Finally, we coerce from $r$
to $r'$. The version for dependent function types is not much harder, but
requires a filler:
\begin{mathpar}
  \inferrule{
    \widetilde{x}\triangleq
    \lam{j}{
      \coe{i.A}{r'}{j}{x}
    }
  }{
    \coe{i.\picl{x}{A}{B}}{r}{r'}{M}
    =
    \lam{x}{
      \coe{
        i. \subst{B}{\widetilde{x}(i)}{x}
      }{r}{r'}{
        \app{M}{
          \widetilde{x}(r)
        }
      }
    }
  }
\end{mathpar}

The case for dependent pair types is similar, but without the contravariance:
\begin{mathpar}
  \inferrule{
    \widetilde{M_0}\triangleq
    \lam{j}{\coe{i.A}{r}{j}{\fst{M}}}
  }{
    \coe{i.\sigmacl{x}{A}{B}}{r}{r'}{M}
    =
    \pair{
      \widetilde{M_0}(r')
    }{
      \coe{i.
        \subst{B}{
          \widetilde{M_0}(i)
        }{x}
      }{r}{r'}{\snd{M}}
    }
  }
\end{mathpar}

Coercions for base types (like $\bool$) are uniformly determined by
\emph{regularity}, a rule of \XTT{} stating that if $A$ is a type which doesn't
vary in the dimension $i$, then $\coe{i.A}{r}{r'}{M}$ is just $M$. Regularity
makes type sense because $\dsubst{A}{r}{i} = A = \dsubst{A}{r'}{i}$;
semantically, it is more difficult to justify in the presence of standard
universes, and is not known to be compatible with principles like
univalence.\footnote{Regularity is proved by Swan to be incompatible with
univalent universes assuming that certain standard techniques are used~\Cite{swan:2018}; however,
it is still possible that there is a different way to model univalent universes
with regularity. Awodey constructs a model of intensional type
theory \emph{without} universes in regular Kan cubical sets~\Cite{awodey:2018}, using the term \emph{normality} for what we have called regularity.}
But \XTT{} is specifically designed to provide a theory of \emph{equality}
rather than \emph{paths}, so we do not expect or desire to justify univalence
at this level.\footnote{Indeed, unicity of identity proofs is also
incompatible with univalence. \XTT{} is, however, compatible with a formulation
in which it is just one level of a two-level type theory, along the lines of
Voevodsky's Homotopy Type System, in which the other level would have a
univalent notion of path that coexists in harmony with our notion of
equality~\Cite{voevodsky:2013:hts,angiuli-favonia-harper:2018}.}

The only difficult case is to define coercion for equality types; at first,
we might try to define $\coe{i.\Eq{j.A}{N_0}{N_1}}{r}{r'}{P}$ as
$\lam{j}{\coe{i.A}{r}{r'}{P(j)}}$, but this does not make
type-sense: we need to see that
$\dsubst{\parens[\big]{\coe{i.A}{r}{r'}{P(j)}}}{\e}{j} = N_\e$, but we
only obtain $\dsubst{\parens[\big]{\coe{i.A}{r}{r'}{P(j)}}}{\e}{j} =
\coe{i.A}{r}{r'}{N_\e}$, which is ``off by'' a coercion. Intuitively, we can
solve this problem by specifying what values a coercion takes under certain
substitutions: in this case, $N_0$ under $0/j$, and $N_1$ under $1/j$. We call
the resulting operation \emph{generalized composition}.

\subsubsection{Generalized composition}\label{sec:composition}

For any dimensions $r,r',s$ and a line of types $i.C$, if $M$ is an element of
$\dsubst{C}{r}{i}$ and $i.N_0, i.N_1$ are lines of elements of $C$ defined
respectively on the subcubes $\parens{s=0},\parens{s=1}$ such that
$\dsubst{N_{\e}}{r}{i} = M$, then $\com{i.C}{r}{r'}{M}{\sys{s}{i.N_0}{i.N_1}}$ is an
element of $\dsubst{C}{r'}{i}$. This is called the \emph{composite} of $M$ with
$N_0,N_1$ from $r$ to $r'$, schematically abbreviated
$\com{i.C}{r}{r'}{M}{\etcsys[\e]{s}{i.N_\e}}$.
As with coercion, when $r=r'$, we have $\com{i.C}{r}{r'}{M}{\etcsys{s}{i.N_\e}}
= M$, and moreover, if $s=\e$, we have $\com{i.C}{r}{r'}{M}{\etcsys{s}{i.N_\e}}
= \dsubst{N_\e}{r'}{i}$.

Returning to coercion for equality types, we now have exactly what we need:
\begin{mathpar}
  \coe{i.\Eq{j.C}{N_0}{N_1}}{r}{r'}{P}
  =
  \lam{j}{(
    \com{i.C}{r}{r'}{P(j)}{
      \etcsys[\e]{j}{\_.N_\e}
    })
  }
\end{mathpar}


Next we must explain how the generalized composition operation
computes at each type; in previous works~\Cite{angiuli-favonia-harper:2018}, we have
seen that it is simpler to instead \emph{define} generalized composition in terms of a simpler
\emph{homogeneous} version, in which one composes in a type $C$ rather than a line of types $i.C$;
we write $\hcom{C}{r}{r'}{M}{\etcsys{s}{j.N_\e}}$ for this homogeneous composition, defining the
generalized composition in terms of it as follows:
\begin{mathpar}
  \com{i.C}{r}{r'}{M}{\etcsys[\e]{s}{i.N_\e}} =
  \hcom{
    \dsubst{C}{r'}{i}
  }{r}{r'}{
    \parens{\coe{i.C}{r}{r'}{M}}
  }{
    \etcsys[\e]{s}{
      i.\coe{i.C}{i}{r'}{N_\e}
    }
  }
\end{mathpar}

Surprisingly, in \XTT{} we do not need to build in any computation rules for
homogeneous composition, because they are completely
determined by judgmental boundary separation. For instance, we can derive a
computation rule already for homogeneous composition in the dependent function
type, by observing that the equands have the same boundary with respect to the
dimension $j$:
\begin{mathpar}
  \hcom{\picl{x}{A}{B}}{r}{r'}{M}{\etcsys{j}{i.N_\e}} =
  \lam{x}{
    \hcom{B}{r}{r'}{M(x)}{
      \etcsys{j}{i.N_\e}
    }
  }
\end{mathpar}

From homogeneous composition, we obtain \emph{symmetry and transitivity} for the
equality types. Given $P:\Eq{_.A}{M}{N}$, we obtain an element of type
$\Eq{\_.A}{N}{M}$ as follows:
\begin{mathpar}
  \lam{i}{\hcom{A}{0}{1}{P(0)}{\sys{i}{j.P(j)}{\_.P(0)}}}
\end{mathpar}
Furthermore, given $Q:\Eq{\_.A}{N}{O}$, we obtain an element of type
$\Eq{\_.A}{M}{O}$ as follows:
\begin{mathpar}
  \lam{i}{
    \hcom{A}{0}{1}{P(i)}{
      \sys{i}{\_.P(0)}{j.Q(j)}
    }
  }
\end{mathpar}

\begin{example}[Identity type]
  It is possible to \emph{define} Martin-L\"of's identity type and its eliminator,
  albeit with a much stronger computation rule than is customary.
  \begin{mathpar}
    \mathsf{Id}_A(M,N) \triangleq \Eq{\_.A}{M}{N}
    \and
    \mathsf{refl}_A(M) \triangleq \lam{\_}{M}
    \and
    \inferrule{
      \widetilde{P}\triangleq
      \lam{j}{\parens{
        \hcom{A}{0}{j}{P(0)}{
          \sys{i}{\_.P(0)}{k.P(k)}
        }
      }}
    }{
      \mathsf{J}_{x,y,p.C(x,y,p)}(P;x.Q(x)) \triangleq
      \coe{
        i.
        C\parens{
          P(0),P(i),\widetilde{P}
        }
      }{0}{1}{Q(P(0))}
    }
  \end{mathpar}
  This particular definition of $\mathsf{J}$ relies on \XTT{}'s \emph{boundary
  separation} rule, but one could instead define it in a more complicated way
  without boundary separation. However, that this construction of the identity
  type models the computation rule relies crucially on \emph{regularity}, which
  does not hold in other cubical type theories whose path types validate
  univalence. In the absence of regularity, one can define an operator with the
  same type as $\mathsf{J}$ but which satisfies its computation rule only up to
  a path.
\end{example}

\subsection{Closed universes and type-case}\label{sec:type-case}

In \cref{sec:coercion}, we showed how to calculate coercions $\coe{i.C}{r}{r'}{M}$
in each type former $C$. In previous cubical type theories
\Cite{cchm:2017,angiuli-favonia-harper:2018}, one could ``uncover'' all the
things that a coercion must be equal to by reducing according to the rules
which inspect the interior of the type line $i.C$. While this strategy can be
used to establish canonicity for closed terms, it fails to uncover certain
reductions for open terms, a prerequisite for algorithmic type checking.

Specifically, given a variable $q:\Eq{\_.\Univ}{A_0\to{}B_0}{A_1\to{}B_1}$, the
coercion $\coe{i.q(i)}{r}{r'}{M}$ is \emph{not} necessarily stuck, unlike in
other cubical type theories. Suppose that we can find further proofs
$Q_A:\Eq{\_.\Univ}{A_0}{A_1}$ and $Q_B:\Eq{\_.\Univ}{B_0}{B_1}$; in this case,
$\lam{i}{Q_A(i)\to Q_B(i)}$ is \emph{also} a proof of
$\Eq{\_.\Univ}{A_0\to{}B_0}{A_1\to{}B_1}$, so by boundary separation it must be
equal to $q$, and therefore $\coe{i.q(i)}{r}{r'}{M}$ must be equal to
$\coe{i.Q_A(i)\to Q_B(i)}{r}{r'}{M}$. But the type-directed reduction rule for
coercion applies only to the latter! Generally, to see how to reduce the first
coercion, it seems that we need to be able to ``dream up'' proofs $Q_A,Q_B$ out
of thin air, or determine that they can't exist, an impossible task.

In \XTT{}, we cut this Gordian knot by ensuring that $Q_A,Q_B$ \emph{always}
exist, following the approach employed in \OTT{}.  To invert the equation $q$
into $Q_A$ and $Q_B$, we add an intensional \emph{type-case} operator to
\XTT{}, committing to a closed and inductive notion of universe by allowing
pattern-matching on types~\cite{nordstrom-peterson-smith:1990}. It is also
possible to extend \XTT{} with open and/or univalent universes which themselves
lack boundary separation, as in two-level type theories.

For illustrative purposes, consider coercion along an equality between dependent
function types. Given $q:\Eq{\_.\Univ}{\picl{x}{A_0}{B_0}}{\picl{x}{A_1}{B_1}}$,
we define by type-case the following:
\begin{align*}
  Q_A &\triangleq
  \lam{i}{
    \UCase{q(i)}{
      \UCaseBranch{\Pi_A B}{A}
      \mid
      \UCaseBranch{\_}{\bool}
    }
  }
  : \Eq{\_.\Univ}{A_0}{A_1}
  \\
  Q_B &\triangleq
  \lam{i}{
    \UCase{q(i)}{
      \UCaseBranch{\Pi_A B}{B}
      \mid
      \UCaseBranch{\_}{\lam{\_}{\bool}}
    }
  }
  : \Eq{i.Q_A(i)\to\Univ}{\lam{x}{B_0}}{\lam{x}{B_1}}
\end{align*}
Because of $q$'s boundary, we are concerned only with the $\Pi$ branch of the
above expressions, and are free to emit a ``dummy'' answer in other
branches.
With $Q_A,Q_B$ in hand, we note that $q(i) = \picl{x}{Q_A(i)}{Q_B(i)(x)}$ using boundary separation;
therefore, we are free to calculate $\coe{i.q(i)}{r}{r'}{M}$ as follows:
\begin{mathpar}
  \inferrule{
    \widetilde{x} \triangleq \lam{j}{\coe{i.Q_A(i)}{r'}{j}{x}}
  }{
    \coe{i.q(i)}{r}{r'}{M} =
    \lam{x}{
      \coe{
        i. \app{Q_B(i)}{\widetilde{x}(i)}
      }{r}{r'}{
        \app{M}{
          \widetilde{x}(r)
        }
      }
    }
  }
\end{mathpar}

This \emph{lazy} style of computing with proofs of equality means, in
particular, that coercing along an equation cannot tell the difference between
a postulated axiom and a canonical proof of equality, making \XTT{} compatible
with extension by consistent equational axioms.

\begin{remark}

  One might wonder whether it is possible to tame the use of type-case above to
  something compatible with a \emph{parametric} understanding of types, in which
  (as in \OTT) one cannot branch on whether or not $C$ is a function type or a
  pair type, etc. It is likely that this can be done, but we stress that the
  fundamental difficulty is not resolved: whether or not we allow general
  type-case, we have not escaped the need for type constructors to be disjoint
  and injective, which contradicts the role of universes in mathematics as
  (weak) classifiers of small families. Future work on \XTT{} and its successors
  must focus on resolving this issue, quite apart from any considerations of
  parametricity.

\end{remark}

\subsection{Future extensions}

\paragraph*{Universe of propositions}
\XTT{} currently lacks one of the hallmarks of \OTT{}, an extensional universe
of proof-irrelevant propositions. In future work, we intend to extend \XTT{}
with a reflective subuniverse $\Omega$ of propositions closed under equality and
universal and existential quantification over arbitrary types, satisfying:

\begin{itemize}
  \item \emph{Proof irrelevance.} For each proposition $\IsTm{p}{\Omega}$, we have $\EqTm{M}{N}{p}$ for all $\IsTm{M,N}{p}$.
  \item \emph{Extensionality (univalence).} For all $\IsTm{p,q}{\Omega}$, we have an element of $\Eq{\_.\Omega}{p}{q}$ whenever there are functions $p\to q$ and $q\to p$.
\end{itemize}

The \emph{reflection} of the propositional subuniverse will take a type
$\IsTy{A}$ to a proof-irrelevant proposition $\IsTm{\Verts{A}}{\Omega}$, acting
as a strict truncation or squash
type~\cite{constable:1986,pfenning:2001,awodey-bauer:2004}. The addition of
$\Omega$ will allow \XTT{} to be used as a syntax for topos-theoretic
constructions, with $\Omega$ playing the role of the subobject classifier.

\paragraph*{(Indexed) Quotient Inductive Types}

Another natural extension of \XTT{} is the addition of \emph{quotient types};
already considered as an extension to \OTT{} by the Epigram
Team~\cite{brady-chapman-ped-gundry-mcbride-morris-norell-oury:2011} and more
recently by Atkey~\cite{atkey:sott}, quotient types are essential when using
type theory for either programming or mathematics. One of the ideas of
Homotopy Type Theory and cubical type theories in particular is to reconstruct
the notion of quotienting by an equivalence relation as a special case of
\emph{higher inductive type} (HITs), a generalization of ordinary inductive types
which allows constructors to target higher dimensions with a specified partial
boundary. When working purely at the level of sets, as in \XTT{}, these higher
inductive types are called \emph{quotient inductive types} (QITs)~\cite{altenkirch-kaposi:2016}.

We intend to adapt the work of Cavallo and Harper~\cite{cavallo-harper:2019} to
a general schema for \emph{indexed quotient inductive types} as an extension of
\XTT{}. The resulting system would support ordinary quotients by equivalence
relations \emph{en passant}, and when these equivalence relations are valued in
$\Omega$, one can show that they are effective. Quotient inductive types also
enable the construction of free algebras for infinitary algebraic theories,
usually obtained in classical set theory from the non-constructive axiom of
choice~\cite{blass:1983,lumsdaine-shulman:2017}.
Another
application of quotient inductive types is the definition of a
\emph{localization} functor with respect to a class of maps, enabling users of
the extended \XTT{} to work internally with sheaf subtoposes.

The extension of \XTT{} with quotient inductive types means that we must
account for \emph{formal homogeneous composites} in QITs which are
canonical forms~\cite{cavallo-harper:2019,chm:2018}. Ordinarily, this
introduces a severe complicating factor to a canonicity proof, because the
notion of canonical form ceases to be stable under all dimension
substitutions~\cite{angiuli-favonia-harper:2018,huber:2018}, but we expect the
\emph{proof-relevant} cubical logical families technique that we introduce in
\cref{sec:model-theory} to scale directly to the case of quotient inductive
types without significant change, in contrast with classical approaches based
on partial equivalence relations.

\section{Algebraic model theory and canonicity}\label{sec:model-theory}

We have been careful to formulate the \XTT{} language in a \emph{(generalized)
algebraic} way, obtaining automatically a category of algebras and homomorphisms
which is equipped with an initial
object~\Cite{cartmell:1978,cartmell:1986,kaposi-kovacs-altenkirch:2019}. That this initial object
is isomorphic to the model of \XTT{} obtained by constraining and quotienting
its raw syntax under judgmental equality (i.e. the Lindenbaum--Tarski algebra)
is an instance of Voevodsky's famous Initiality Conjecture~\cite{voevodsky:2016:templeton}, and we do not
attempt to prove it here; we merely observe that this result has been established
for several simpler type
theories~\Cite{streicher:1991,castellan-clairambault-dybjer:2017}.

Working within the category of \XTT-algebras enables us to formulate and prove
results like canonicity and normalization for the \emph{initial} \XTT-algebra
in an economical manner, avoiding the usual bureaucratic overhead of reduction
relations and partial equivalence relations, which were the state of the art
for type-theoretic metatheory prior to the work of Shulman~\Cite{shulman:2015},
Altenkirch and Kaposi~\Cite{altenkirch-kaposi:2016:nbe}, and
Coquand~\Cite{coquand:2018}.

Because our algebraic techniques involve defining families over \emph{only}
well-typed terms already quotiented by judgmental equality,
we avoid many of the technical difficulties arising from working
with the raw terms of cubical type theories, including the closure under
``coherent expansion'' which is critical to earlier cubical metatheories
\Cite{angiuli-favonia-harper:2018, huber:2018}. Our abstract gluing-based
approach therefore represents a methodological advance in metatheory for cubical type
theories.

\begin{theorem}[Canonicity]\label{thm:canonicity}
  In the initial \XTT-algebra, if $\IsTm<\cdot>[\cdot]{M}{\bool}$, then either
  $\EqTm<\cdot>[\cdot]{M}{\true}{\bool}$ or $\EqTm<\cdot>[\cdot]{M}{\false}{\bool}$.
\end{theorem}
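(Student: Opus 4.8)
The plan is to prove canonicity by the method of categorical gluing — more precisely, by constructing a *cubical logical family* over the initial $\XTT$-algebra, following the pattern of Coquand and Shulman but adapted to the cubical setting. The key move is to glue not along the global-sections functor into $\mathbf{Set}$, but along a *cubical nerve* functor $\mathbb{N}: \mathbf{Cx}_{\XTT} \to \widehat{\square}$ sending each syntactic context to the presheaf of substitutions into it indexed by the category $\square$ of cubes (the representables $\Psi$ of $\XTT$'s dimension theory). Concretely, I would form the comma-style gluing category $\mathbf{Gl}$ whose objects are pairs of a syntactic context $\Gamma$ together with a presheaf $\overline{\Gamma}$ on (the category of elements of) $\mathbb{N}\Gamma$ — a proof-relevant "computability structure" — and verify that $\mathbf{Gl}$ is itself a model of $\XTT$: it carries a category-with-families (or natural-model) structure, and is closed under $\Pi$, $\Sigma$, the dependent equality type, $\bool$, and the universe hierarchy, with the Kan operations $\coe$ and $\hcom$ interpreted by combining the syntactic operations with their computability witnesses. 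Each of these closure conditions is proved by a local calculation that reuses exactly the computation rules for $\coe$ and $\hcom$ displayed in \cref{sec:coercion,sec:composition}, together with boundary separation to glue the partial data.

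**Next**, since the initial $\XTT$-algebra is initial, there is a unique homomorphism into $\mathbf{Gl}$; composing with the first projection $\mathbf{Gl} \to \mathbf{Cx}_{\XTT}$ recovers the identity, so the homomorphism is a *section* of the projection. This section equips every closed type with a computability predicate and every closed term with a computability witness, in a way that is automatically stable under the structural operations. It remains only to read off the computability structure at $\bool$: the gluing interpretation of $\bool$ is designed so that a computability witness for a closed term $\IsTm<\cdot>[\cdot]{M}{\bool}$ is precisely a proof that $\EqTm<\cdot>[\cdot]{M}{\true}{\bool}$ or $\EqTm<\cdot>[\cdot]{M}{\false}{\bool}$ — this is the base case of the logical family, imposed by fiat in the model construction, and it is consistent there because $\true$ and $\false$ genuinely have the right (empty) boundary and $\bool$ is a closed type not varying in any dimension, so regularity makes its coercions trivial. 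Applying the section to the given $M$ therefore yields exactly the disjunction asserted by \cref{thm:canonicity}.

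**The main obstacle** is verifying that $\mathbf{Gl}$ is closed under the Kan operations — in particular $\hcom$ and $\coe$ — over the cubical base. Over $\mathbf{Set}$ this step is essentially trivial, but over $\widehat{\square}$ one must track how computability witnesses restrict along dimension substitutions and how partial witnesses on the faces $(s=0)$ and $(s=1)$ amalgamate; the coherence conditions here are the cubical analogue of the "coherent expansion" machinery of earlier operational metatheories \Cite{angiuli-favonia-harper:2018,huber:2018}, now repackaged as naturality in the presheaf $\widehat{\square}$. A secondary subtlety is the universe: because $\XTT$ has a closed, inductively-presented universe with a type-case operator (\cref{sec:type-case}), the computability structure at $\Univ[k]$ must itself be defined by recursion on the type structure, and one must check that $\mathrm{coe}$ along a variable equality $q : \Eq{\_.\Univ}{A}{B}$ is computable — which works precisely because, by the argument of \cref{sec:type-case}, type-case lets us invert $q$ into componentwise equalities whose computability we already have in hand, so the displayed lazy computation rule for $\coe{i.q(i)}{r}{r'}{M}$ can be matched against the computability data of its components. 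Boundary separation is used pervasively throughout, both to define the glued operations on their faces and to discharge the resulting equational obligations; the fact that it holds judgmentally in $\XTT$ is what makes the whole construction go through without the partial-equivalence-relation bookkeeping that burdens other cubical metatheories.
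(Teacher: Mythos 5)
Your proposal follows essentially the same approach as the paper's: glue the initial $\XTT$-algebra along a cubical nerve functor into cubical sets, exhibit the glued category as a model of $\XTT$, appeal to initiality to get a section of the projection, and read off the disjunction from the computability predicate at $\bool$. The final extraction step and the use of initiality match the paper verbatim.

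There is, however, one point at which your account is misleading enough to risk derailing the argument. You present the universe as a ``secondary subtlety'' layered on top of an already-working gluing model $\mathbf{Gl}$: first you say ``verify that $\mathbf{Gl}$ is itself a model of $\XTT$'' and only afterwards note that the computability structure at $\Univ[k]$ ``must itself be defined by recursion on the type structure.'' In the paper the situation is the reverse: the naive gluing category $\Pred$ (whose types are \emph{arbitrary} families over the nerve) is explicitly \emph{not} a model of $\XTT$, because neither boundary separation nor regular coercion holds for arbitrary families---these are global properties that fail for generic glued types, and no local repair recovers them. The fix is not to interpret the universe cleverly inside $\Pred$; rather, one builds a \emph{new} cwf $\CPred$ (over the same base category) in which the presheaf of types of level $n$ is \emph{by definition} the inductively-defined set of type codes $\Pred{\ClUni{n}}$, and the presheaf of elements is the decoding $\UPred{-}$ of those codes. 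Boundary separation and regularity are then proved by induction on codes (the ``typewise/elementwise separation'' lemma), and the Kan operations are defined recursively on codes rather than verified post hoc. So the inductive universe of codes is not a subtlety about the universe connective: it is the replacement of the entire type structure, and without it the rest of your Step 1 would be false. A smaller presentational point: the nerve should land in presheaves on the \emph{augmented} cube category $\AugCube$ (which adjoins an initial object to $\Cube$), not $\widehat{\square}$ as written; this is what makes the collapse of judgments under the constraint $0=1$ carry over to the glued model.
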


Following previous work \Cite{sterling:2018:gat}, we employ for our semantics a
variant of \emph{categories with families} (cwf) \Cite{dybjer:1996} which
supports a predicative hierarchy of universes \`a la Russell. A cwf in our
sense begins with a \emph{category of contexts} $\Cx$, and a presheaf of types
$\CwfTy : \Psh{\Cx\times\Lev}$; here $\Lev$ is the category of \emph{universe
levels}, with objects the natural numbers and unique arrows $l \rTo k$ if and
only if $k \leq l$.\footnote{Observe that $\Lev = \OpCat{\omega}$; reversing
arrows allows us to move types from smaller universes to larger ones.}
The fiber of the presheaf of types $\CwfTy : \Psh{\Cx\times\Lev}$ at
$(\Gamma,k)$ is written $\CwfTy[k]{\Gamma}$, and contains the types in context
$\Gamma$ of universe level $k$. Reindexing implements simultaneous substitution
$\ReIx{\gamma}{A}$ and universe level shifting $\TyLift{k}{l}{A}$.
In our metatheory, we assume the Grothendieck Universe Axiom, and consequently obtain a
transfinite ordinal-indexed hierarchy of meta-level universes
$\SemUniv[k]$. We impose the requirement that each collection of types
$\CwfTy[k]{\Gamma}$ is $k$-small, i.e.\ $\CwfTy[k]{\Gamma} \in \SemUniv[k]$.

Next, we require a \emph{dependent} presheaf of elements $\CwfEl:\Psh{\int
\CwfTy}$, whose fibers $\CwfEl((\Gamma,k),A)$ we write $\CwfEl{\Gamma}{A}$; to
interpret the actions of level lifting on terms properly, we require the
functorial actions
$\CwfEl{\Gamma}{A}\rTo\CwfEl{\Gamma}{\TyLift{k}{l}{A}}$ to be identities,
strictly equating the fibers $\CwfEl{\Gamma}{A}$ and
$\CwfEl{\Gamma}{\TyLift{k}{l}{A}}$.
The remaining data of a basic cwf is a \emph{context comprehension}, which for
every context $\Gamma$ and type $A\in\CwfTy[k]{\Gamma}$ determines an extended
context $\Gamma.A$ with a weakening substitution $\Gamma.A\rTo^{\Proj}\Gamma$
and a variable term $\Var\in\CwfEl{\Gamma.A}{\ReIx{\Proj}{A}}$.

Next, we specify what further structure is required to make such a cwf into an
\XTT-algebra. To represent contexts $\Psi$ semantically, we use the
\emph{augmented Cartesian cube category} $\AugCube$, which adjoins to the Cartesian
cube category $\Cube$ an initial object; from this, we obtain equalizers $0=1$
in addition to the equalizers $i=r$ which exist in $\Cube$. We then require a
\emph{split fibration} $\Cx\rProjto^{\CubeFib}\AugCube$ with a terminal object,
which implements the dependency of contexts $\Gamma$ on cubes $\Psi$ and forces
appropriate dimension restrictions to exist for contexts, types and elements.
The split fibration induces all the structure necessary to implement dimension
operations; we refer the reader to \CiteExtended{} for details. In the following
discussion, we limit ourselves to a few simpler consequences.
First, we can apply dimension substitutions in terms and types, writing
$\ReIxCleave{\psi}[\Gamma]{A}$ to apply $\psi$ in a type $A$ in context $\Gamma$. We can
also apply dimension substitutions to contexts, written $\ReIx{\psi}{\Gamma}$.
We write $\Wk{\imath}$ for the dimension substitution which weakens by a
dimension variable $i$.
Finally, we write $\CwfDim{\Gamma}$ for the set of valid dimensions
expressions generated from $\CubeFib(\Gamma)$.

\begin{requirement*}[Boundary separation in models]
  In order to enforce boundary separation in \XTT-algebras we require that types and elements over
  them satisfy a separation property. In \CiteExtended{} we phrase the full
  condition as a separation requirement with respect to a particular
  Grothendieck topology on the category of contexts. A specific
  consequence is the familiar boundary separation principle for types:
  given two types $A, B \in \CwfTy{\Gamma}$ and a dimension $i\in\CubeFib(\Gamma)$, if
  $\ReIxCleave*{\epsilon/i}{A} = \ReIxCleave*{\epsilon/i}{B}$ for each $\epsilon \in \braces{0,1}$ then $A = B$.
\end{requirement*}

\begin{requirement*}[Coercion in models]
  An \XTT-algebra must also come with a \emph{coercion structure}, specifying how generalized
  coercion is interpreted in each type. For every type
  $A\in\CwfTy[n]{\ReIx{\Wk{\imath}}{\Gamma}}$ over $\Psi,i$, dimensions $r,r'\in\CwfDim{\Gamma}$,
  and element
  $M\in\CwfEl{\Gamma}{\ReIxCleave*{r/i}[\ReIx{\Wk{\imath}}{\Gamma}]{A}}$, we require an element
  $\CwfCoe{i.A}{r}{r'}{M}\in\CwfEl{\Gamma}{\ReIxCleave*{r'/i}[\ReIx{\Wk{\imath}}{\Gamma}]{A}}$
  with the following properties (in addition to naturality requirements):
  \begin{itemize}
  \item \emph{Adjacency.} If $r=r'$ then $\CwfCoe{i.A}{r}{r'}{M} = M$.
  \item \emph{Regularity.} If $A=\ReIxCleave{\Wk{\imath}}[\Gamma]{A'}$ for some
    $A'\in\CwfTy[n]{\Gamma}$, then $\CwfCoe{i.A}{r}{r'}{M}=M$.
  \end{itemize}
  Additional equations in later requirements specify that generalized coercion computes properly in each connective.
  Similarly, a model must be equipped with a \emph{composition structure} which specifies the
  interpretation of the composition operator.
\end{requirement*}

Finally, we specify algebraically the data with which such a cwf must be
equipped in order to model all the connectives of \XTT{} (again, details are contained in \CiteExtended{}); to distinguish the abstract (De Bruijn) syntax of
the cwf from the raw syntax of \XTT{} we use boldface, writing $\CwfPi{A}{B}$,
$\CwfPathApp[i.A]{M}{r}$ and $\CwfUniv[k]$ to correspond to $\picl{x}{A}{B}$,
$\app[i.A]{M}{r}$ and $\Univ[k]$ respectively, etc. We take a moment to specify how some of
the primitives of \XTT{} are translated into requirements on a model.

\begin{requirement*}[Dependent equality types in models]
  An \XTT-algebra must model \emph{dependent equality types}, which is to say that the following structure is exhibited:
  \begin{itemize}

    \item \emph{Formation.} For each type $A\in\CwfTy[n]{\ReIx{\Wk{\imath}}{\Gamma}}$ and elements
      $\etc{N_\e\in\CwfEl{\Gamma}{\ReIxCleave*{\e/i}{A}}}$, a type
      $\CwfEq{i.A}{N_0}{N_1}\in\CwfTy[n]{\Gamma}$.

    \item \emph{Introduction.} For each $M\in\CwfEl{\ReIx{\Wk{\imath}}{\Gamma}}{A}$, an element
      $\CwfPathLam[i.A]{M}\in\CwfEl{\Gamma}{\CwfEq{i.A}{\ReIxCleave*{0/i}{M}}{\ReIxCleave*{1/i}{M}}}$.

    \item \emph{Elimination.} For each
      $M\in\CwfEl{\Gamma}{\CwfEq{i.A}{N_0}{N_1}}$ and $r\in\CwfDim{\Gamma}$, an
      element $\CwfPathApp[i.A]{M}{r}\in\CwfEl{\Gamma}{\ReIxCleave*{r/i}{A}}$ satisfying
      the equations $\etc{\CwfPathApp[i.A]{M}{\epsilon}=N_\epsilon}$.

    \item \emph{Computation.} For $M\in\CwfEl{\ReIx{\Wk{\imath}}{\Gamma}}{A}$ and
      $r\in\CwfDim{\Gamma}$, the equation:
      \begin{mathpar}
        \CwfPathApp[i.A]{\CwfPathLam[i.A]{i.M}}{r} = \ReIxCleave*{r/i}{M}
      \end{mathpar}

    \item \emph{Unicity.} For $M\in\CwfEl{\Gamma}{\CwfEq{i.A}{N_0}{N_1}}$,
      $M=\CwfPathLam[i.A]{j.\CwfPathApp[i.\ReIxCleave{\Wk{\jmath}}{A}]{\ReIxCleave{\Wk{\jmath}}{M}}{j}}$.

    \item \emph{Level restriction.} The following equations:
      \begin{mathpar}
        \TyLift{k}{l}{\CwfEq{i.A}{N_0}{N_1}} = \CwfEq{i.\TyLift{k}{l}{A}}{N_0}{N_1}
        \and
        \CwfPathLam[i.\TyLift{k}{l}{A}]{M}{r} = \CwfPathLam[i.A]{M}{r}
        \and
        \CwfPathApp[i.\TyLift{k}{l}{A}]{M}{r} = \CwfPathApp[i.A]{M}{r}
      \end{mathpar}

    \item \emph{Naturality.} For $\Delta\rTo^\gamma\Gamma$, the following naturality equations:
      \begin{mathpar}
        \ReIx{\gamma}{\CwfEq{i.A}{N_0}{N_1}} = \CwfEq{i.\ReIx*{\Lift{\Wk{\imath}}{\gamma}}{A}}{\ReIx{\gamma}{N_0}}{\ReIx{\gamma}{N_1}}
        \and
        \ReIx{\gamma}{\CwfPathLam[i.A]{i.M}} = \CwfPathLam[i.\ReIx*{\Lift{\Wk{\imath}}{\gamma}}{A}]{i.\ReIx*{\Lift{\Wk{\imath}}{\gamma}}{M}}
        \and
        \ReIx{\gamma}{\CwfPathApp[i.A]{M}{r}} = \CwfPathApp[i.\ReIx*{\Lift{\Wk{\imath}}{\gamma}}{A}]{\ReIx{\gamma}{M}}{\ReIx{\gamma}{r}}
      \end{mathpar}

    \item \emph{Coercion.} When $\Gamma \rProjMapsto^\CubeFib \Psi,j$ and
      $M\in\CwfEl{\ReIx*{r/j}{\Gamma}}{\ReIxCleave*{r/j}{\CwfEq{i.A}{N_0}{N_1}}}$
      where $\IsDim<\Psi>{r,r'}$, we require that
      $\CwfCoe{j.\CwfEq{i.A}{N_0}{N_1}}{r}{r'}{M}$ equals the following abstraction:
      \[
        \CwfPathLam[
          i.\ReIxCleave*{r'/j}{A}
        ]{
          i.
          \CwfCom{j.A}{r}{r'}{
            \CwfPathApp[i.\ReIxCleave*{r/j}{A}]{
              \ReIxCleave{\Wk{\imath}}{M}
            }{i}
          }{
            \etcsys[\e]{i}{
              j.\ReIxCleave{\Wk{\jmath}}{N_\e}
            }
          }
        }
      \]
  \end{itemize}

\end{requirement*}

Any model of extensional type theory can be used to construct a model of
\XTT{}, so long as it is equipped with a cumulative, inductively defined
hierarchy of universes closed under dependent function types, dependent pair
types, extensional equality types and booleans. (Meaning explanations in the
style of Martin-L\"of \Cite{martin-lof:1984} are one such model.)
The interpretation of \XTT{} into extensional models involves erasing
dimensions, coercions, and compositions; the only subtlety, easily managed, is
to ensure that all judgments under absurd constraints hold.

\subsection{The cubical logical families construction}

Any \XTT-algebra $\Cx$ extends to a category $\CPred$ of
\emph{proof-relevant logical predicates}, which we call \emph{logical
families} by analogy. The proof-relevant character of the construction enables a simpler
proof of canonicity than is obtained with proof-irrelevant techniques, such as
partial equivalence relations. Logical families are a type-theoretic version of
the \emph{categorical gluing} construction, in which a very rich semantic
category (such as sets) is cut down to include just the morphisms which track
definable morphisms in $\Cx$;\footnote{The gluing construction is similar to
realizability; the main difference is that in gluing, one considers collections
of ``realizers'' which are \emph{not} all drawn from a single computational
domain.} one then uses the rich structure of the semantic category to obtain
metatheoretic results about syntax (choosing $\Cx$ to be the initial model)
without considering raw terms at any point in the process.

Usually, to prove canonicity one glues the initial model $\Cx$ together with
$\SET$ along the global sections functor; this equips each context $\Gamma$
with a family of sets $\Pred{\Gamma}$ indexed in the \emph{closing
substitutions} for $\Gamma$. In order to prove canonicity for a cubical language like \XTT, we will
need a more sophisticated version of this construction, in which the global
sections functor is replaced with something that determines substitutions which
are closed with respect to term variables, but open with respect to dimension
variables.

The split fibration $\Cx\rProjto^{\CubeFib}\AugCube$ induces a functor
$\AugCube\rTo^{\CubeCx}\Cx$ which takes every cube $\Psi$ to the empty variable
context over $\Psi$.
This functor in turn induces a \emph{nerve} construction
$\Cx\rTo^{\Nerve}\AugCSet$, taking $\Gamma$ to the cubical set
$\Hom[\Cx]{\CubeCx{-}}{\Gamma}$.\footnote{This construction is also called the
\emph{relative hom functor} by \Cite[Fiore]{fiore:2002}; its use in logic
originates in the study of definability for $\lambda$-calculus, characterizing
the domains of discourse for Kripke logical predicates of varying
arity~\Cite{jung-tiuryn:1993}. We learned the connection to the abstract nerve
construction in conversations with M.\ Fiore about his unpublished joint work with
S.\ Awodey.} Intuitively, this is the presheaf of substitutions which are
closed with respect to term variables, but open with respect to dimension
variables; when wearing $\AugCSet$-tinted glasses, these appear to be the
closed substitutions.

This nerve construction extends to the presheaves of types and elements;
we define the fiber of $\NerveTy[k]:\AugCSet$ at $\Psi$ to be the set
$\CwfTy[k]{\CubeCx{\Psi}}$; likewise, we define the fiber of
$\NerveEl[k]:\Psh{\int\NerveTy[k]}$ at $(\Psi, A)$ to be the set
$\CwfEl{\CubeCx{\Psi}}{A}$. Internally to $\AugCSet$, we regard $\NerveEl[k]$
as a dependent type over $\NerveTy[k]$. We will then (abusively) write
$\NerveEl{A}$ for the fiber of $\NerveEl[k]$ determined by $A:\NerveTy[k]$.

\paragraph*{Category of cubical logical families} Gluing $\Cx$ together with
$\AugCSet$ along $\Nerve$ gives us a category of \emph{cubical logical
families} $\CPred$ whose objects are pairs
$\Gl{\Gamma}=(\Gamma,\Pred{\Gamma})$, with $\Gamma:\Cx$ and $\Pred{\Gamma}$ a
\emph{dependent cubical set} over the cubical set $\Nerve{\Gamma}$. In other
words, $\Pred{\Gamma}$ is a ``Kripke logical family'' on the
substitutions $\CubeCx{\Psi}\rTo\Gamma$ which commutes with dimension
substitutions $\Psi'\rTo\Psi$. A morphism $\Gl{\Delta}\rTo\Gl{\Gamma}$ is a
substitution $\Delta\rTo^\gamma\Gamma$ together with a proof that $\gamma$
preserves the logical family: that is, a closed element $\Pred{\gamma}$ of the
type $\prod_{\delta:\Nerve{\Delta}} \Pred{\Delta}\parens{\delta} \to
\Pred{\Gamma}\parens{\ReIx{\gamma}{\delta}}$ in the internal type theory of
$\AugCSet$. We write $\Gl{\gamma}$ for the pair $(\gamma,\Pred{\gamma})$. We
have a fibration $\CPred\rProjto^{\PiSyn}\Cx$ which merely projects $\Gamma$
from $\Gl{\Gamma}=(\Gamma,\Pred{\Gamma})$.

\paragraph*{Glued type structure}

Recall from \cref{sec:type-case} that we must model \emph{closed} universes.
Therefore, the standard presheaf universes which lift $\SemUniv[k]$ to (weakly) classify all
$k$-small presheaves are insufficient in our case; instead, we must equip each type
with a \emph{code} so that type-case is definable. Accordingly, we define for each
$n\in\mathbb{N}$ an inductive cubical set $\Pred{\ClUni{n}}A:\SemUniv[n+1]$ indexed
over $A:\NerveTy[n]$; internally to $\AugCSet$, the cubical set
$\Pred{\ClUni{n}}{A}$ is the collection of \emph{realizers} for the $\Cx$-type
$A$. An imprecise but helpful analogy is to think of a realizer
$\FmtCode{A}:\Pred{\ClUni{n}}A$ as something like a whnf of $A$, with the
caveat that $\FmtCode{A}$ is an element of this inductively defined set, not a $\Cx$-type.
Simultaneously, for each $\FmtCode{A}:\Pred{\ClUni{n}}A$, we define a cubical family
$\UPred{\FmtCode{A}}:\NerveEl{A}\to\SemUniv[n]$ of realizers of elements of $A$,
with each $\UPred{\FmtCode{A}}$ being the \emph{logical family} of the
$\Cx$-type $A$; finally, we also define realizers for coercion and composition
by recursion on the realizers for types.\footnote{It is important to note that
we do \emph{not} use large induction-recursion in $\AugCSet$ (to our knowledge,
the construction of inductive-recursive definitions has not yet been lifted to
presheaf toposes); instead, we model $n$ object universes using the
meta-universe $\SemUniv[n+1]$. This is an instance of \emph{small
induction-recursion}, which can be translated into indexed inductive
definitions which exist in every presheaf
topos~\Cite{hancock-ghani-malatesta-altenkirch:2013,moerdijk:2000}.} A fragment
of this definition is summarized in \cref{fig:type-codes}. In the definition of
$\UPred{\FmtCode{A}}$ we freely make use of the internal type theory of
$\AugCSet$. This not only exposes the underlying \emph{logical relations} flavor of
these definitions but simplifies a number of proofs (see \CiteExtended{}).

\begin{figure}[t!]
  \begin{mathpar}
    \inferrule{
      (j<n)
    }{
      \CodeUni{j}:\Pred{\ClUni{n}}\CwfUniv[j]
    }
    \and
    \inferrule{
    }{
      \CodeBool:\Pred{\ClUni{n}}\CwfBool
    }
    \\
    \inferrule{
      \FmtCode{A}:\Pred{\ClUni{n}}A
      \\
      \FmtCode{B}:
      \textstyle
      \prod_{M : \NerveEl{A}}
      \UPred{\FmtCode{A}} M
      \to
      \Pred{\ClUni{n}}\parens*{
        \ReIx{\Snoc{\Id}{M}}{B}
      }
    }{
      \CodePi{\FmtCode{A}}{\FmtCode{B}} :\Pred{\ClUni{n}}\CwfPi{A}{B}
      \\
      \CodeSg{\FmtCode{A}}{\FmtCode{B}} :\Pred{\ClUni{n}}\CwfSg{A}{B}
    }
    \and
    \inferrule{
      \textstyle
      \FmtCode{A} : \prod_{i:\Dim} \Pred{\ClUni{n}}{A_i}
      \\
      \etc{\Pred{N_\e} : \UPred{\FmtCode{A}(\e)}N_\e}
    }{
      \CodeEq{\FmtCode{A}}{\Pred{N_0}}{\Pred{N_1}}
      :
      \Pred{\ClUni{n}}\CwfEq{i.A_i}{N_0}{N_1}
    }
  \end{mathpar}
  \LightRule
  \begin{align*}
    \UPred{\CodeUni{n}}A &= \Pred{\ClUni{n}}A
    \\
    \UPred{\CodeBool}M &=
    \parens*{M = \CwfTrue}
    +
    \parens*{M = \CwfFalse}
    \\
    \UPred{
      \CodePi{\FmtCode{A}}{\FmtCode{B}}
    }M &=
    \textstyle
    \prod_{N:\NerveEl{A}}
    \prod_{\Pred{N}:\Pred{\FmtCode{A}}N}
    \UPred*{\FmtCode{B}N\Pred{N}}
    \CwfApp[A][B]{M}{N}
    \\
    \UPred{
      \CodeSg{\FmtCode{A}}{\FmtCode{B}}
    }M &=
    \textstyle
    \sum_{
      \Pred{M_0} : \UPred{\FmtCode{A}}\CwfFst[A][B]{M}
    }
    \UPred*{\FmtCode{B}\parens*{\CwfFst[A][B]{M}}\Pred{M_0}}\CwfSnd[A][B]{M}
    \\
    \UPred{
      \CodeEq{\FmtCode{A}}{\Pred{N_0}}{\Pred{N_1}}
    }M
    &=
    \textstyle
    \braces*{
      \Pred{M} :
      \prod_{i:\Dim}
      \UPred{\FmtCode{A}(i)}\CwfPathApp[i.A]{M}{i}
      \mid
      \etc{
        \Pred{M}(\e) = \Pred{N_\e}
      }
    }
  \end{align*}
  \LightRule
  \begin{align*}
    \coe{i.\CodeBool}{r}{r'}{\Pred{M}} &= \Pred{M}
    \\
    \coe{i.\CodePi{\FmtCode{A}}{\FmtCode{B}}}{r}{r'}{\Pred{M}}
    &=
    \lambda \Pred{N}.\,
    \coe{
      i.\FmtCode{B}\parens{
        \coe{i.\FmtCode{A}}{r'}{i}{\Pred{N}}
      }
    }{r}{r'}{
      \Pred{M}\parens[\big]{
        \coe{i.\FmtCode{A}}{r'}{r}{\Pred{N}}
      }
    }
    \\
    \coe{
      i.\CodeEq{\FmtCode{A}}{\Pred{N_0}}{\Pred{N_1}}
    }{r}{r'}{\Pred{M}}
    &=
    \lambda k.\,
    \com{i.\FmtCode{A} k}{r}{r'}{
      \Pred{M}k
    }{
      \etcsys{k}{\_.\Pred{N_\e}}
    }
    \\
    \hcom{
      \CodePi{\FmtCode{A}}{\FmtCode{B}}
    }{r}{r'}{\Pred{M}}{
      \etcsys{s}{i.\Pred{M'}i}
    }
    &=
    \lambda \Pred{N}.\,
    \hcom{
      \FmtCode{B}\Pred{N}
    }{r}{r'}{
      \Pred{M}\Pred{N}
    }{
      \etcsys{s}{i.\Pred{M'}i N \Pred{N}}
    }
    \\
    \hcom{
      \CodeEq{\FmtCode{A}}{\Pred{N_0}}{\Pred{N_1}}
    }{r}{r'}{\Pred{M}}{
      \etcsys{s}{i.\Pred{M'}i}
    } &=
    \lambda j.\,
    \hcom{
      \FmtCode{A} j
    }{r}{r'}{\Pred{M} j}{
      \etcsys{s}{i.\Pred{M'}i j}
    }
    \\
    & \mathrel{\makebox[\widthof{=}]{\vdots}}
  \end{align*}

  \caption{The inductive definition of realizers
  $\Pred{\ClUni{n}}A:\SemUniv[n+1]$ for types $A:\NerveTy[n]$ in $\AugCSet$; we also include a fragment of the realizers for Kan
  operations, which are also defined by recursion on the realizers for types. We
  write $\Dim$ for the representable presheaf $\Yo*{i}$.}
  \label{fig:type-codes}
\end{figure}

From all this, we can define the cwf structure on $\CPred$. We obtain a presheaf of types
$\CwfTy<\CPred>: \Psh{\CPred\times\Lev}$ by taking $\CwfTy<\CPred>[k]{\Gl{\Gamma}}$ to be the
set of pairs $\Gl{A}=(A, \Pred{A})$ where $A\in\CwfTy[k]{\Gamma}$ and
$\Pred{A}$ is an element of the type
$\prod_{\gamma:\Nerve{\Gamma}}\prod_{\Pred{\gamma}:\Pred{\Gamma}(\gamma)}\Pred{\ClUni{k}}\parens{\ReIx{\gamma}{A}}$
in the internal type theory of $\AugCSet$. To define the dependent presheaf
of elements, we take $\CwfEl<\CPred>{\Gl{\Gamma}}{\Gl{A}}$ to be the set of
pairs $\Gl{M}=(M,\Pred{M})$ where $M\in\CwfEl{\Gamma}{A}$ and $\Pred{M}$ is an
element of the type
$\prod_{\gamma:\Nerve{\Gamma}}\prod_{\Pred{\gamma}:\Pred{\Gamma}(\gamma)}\UPred*{\Pred{A}\gamma\Pred{\gamma}}\parens{\ReIx{\gamma}{M}}$
in the internal type theory of $\AugCSet$.
In this model, the context comprehension operation $\Gl{\Gamma}.\Gl{A}$ is defined as the pair
$(\Gamma.A, \Pred{\parens{\Gl{\Gamma}.\Gl{A}}})$ where
$\Pred{\parens{\Gl{\Gamma}.\Gl{A}}}\Snoc{\gamma}{M}$ is the cubical set
$\sum_{\Pred{\gamma}:\Pred{\Gamma}(\gamma)}\UPred*{\Pred{A}\gamma\Pred{\gamma}}\parens{\ReIx{\gamma}{M}}$;
it is easy to see that we obtain realizers for the weakening substitution and
the variable term.

\begin{construction}[Dependent equality types in $\CPred$]
  Recall that we required a model of \XTT{} to have sufficient structure to interpret dependent
  equality types. Here, we discuss how to obtain the formation rule; the full
  construction can be found in \CiteExtended{}. Suppose
  $\Gl{A}\in\CwfTy<\CPred>[n]{\ReIx{\Wk{\imath}}{\Gl{\Gamma}}}$ and elements $\Gl{N_0}$ and
  $\Gl{N_1}$ with $\Gl{N}_\e\in\CwfEl<\CPred>{\Gl{\Gamma}}{\ReIxCleave*{\e/i}{\Gl{A}}}$. We wish to
  construct a type in $\CwfTy<\CPred>[n]{\ReIx{\Wk{\imath}}{\Gl{\Gamma}}}$.

  In $\CPred$, such a type is a pair of a type
  $E \in \CwfTy[n]{\ReIx{\Wk{\imath}}{\Gamma}}$ from $\Cx$ with
  an element witnessing the logical family
  $\prod_{\gamma:\Nerve{\Gamma}}\prod_{\Pred{\gamma}:\Pred{\Gamma}(\gamma)}\Pred{\ClUni{k}}\parens{\ReIx{\gamma}{E}}$.
  We will set the first component to the dependent equality type from $\Cx$ itself, namely
  $E = \CwfEq{i.A}{N_0}{N_1}$. For the second component, we wish to construct an element of
  $\prod_{\gamma:\Nerve{\Gamma}}\prod_{\Pred{\gamma}:\Pred{\Gamma}(\gamma)}\Pred{\ClUni{k}}\parens{\ReIx{\gamma}{\CwfEq{i.A}{N_0}{N_1}}}$.
  Inspecting the rules for $\Pred{\ClUni{k}}$ from \cref{fig:type-codes}, there is only one choice:
  $\Pred{E} = \lam{\gamma}{\lam{\Pred{\gamma}}{\CodeEq{\Pred{A}\gamma\Pred{\gamma}}{\Pred{N_0}\gamma\Pred{\gamma}}{\Pred{N_1}\gamma\Pred{\gamma}}}}$.
\end{construction}
\begin{construction}[Coercion in $\CPred$]
  The coercion structure on $\CPred$ is constructed from the coercion structures on $\Cx$ and the
  coercion operator for codes from \cref{fig:type-codes}.

  Given a type
  $\Gl{A}\in\CwfTy<\Gl{\Gamma}>[n]{\ReIx{\Wk{\imath}}{\Gl{\Gamma}}}$
  over $\Psi,i$, dimensions $r,r'\in\CwfDim{\Gl{\Gamma}}$, and an element
  $\Gl{M}\in\CwfEl{\Gl{\Gamma}}{\ReIxCleave*{r/i}[\ReIx{\Wk{\imath}}{\Gl{\Gamma}}]{\Gl{A}}}$,
  we must construct an element of
  $\CwfEl{\Gl{\Gamma}}{\ReIxCleave*{r'/i}[\ReIx{\Wk{\imath}}{\Gl{\Gamma}}]{\Gl{A}}}$. This element
  must be a pair of $N \in \CwfEl{\Gamma}{\ReIxCleave*{r'/i}[\ReIx{\Wk{\imath}}{\Gamma}]{A}}$ and a
  term
  $\Pred{N} : \prod_{\gamma:\Nerve{\Gamma}}\prod_{\Pred{\gamma}:\Pred{\Gamma}(\gamma)}\UPred*{\Pred{A}\gamma\Pred{\gamma}}\parens{\ReIx{\gamma}{N}}$.
  For the former, we rely on the coercion structure for $\Cx$ and pick $N = \CwfCoe{i.A}{r}{r'}{M}$.
  For the latter, we use the coercion operation on codes defined in \cref{fig:type-codes} and choose
  $\Pred{N} = \lam{\gamma}{\lam{\Pred{\gamma}}{\coe{i.\Pred{A}\gamma \Pred{\gamma}}{r}{r'}{\Pred{M}\gamma\Pred{\gamma}}}}$.

  It is routine to check that this coercion structure enjoys adjacency,
  regularity, and naturality once the corresponding properties are checked for
  the coercion operator on codes.
\end{construction}

\begin{theorem}
  $\CPred$ is an \XTT-algebra, and moreover, $\CPred\rProjto^{\PiSyn}\Cx$ is a homomorphism of \XTT-algebras.
\end{theorem}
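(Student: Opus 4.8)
The plan is to verify that the cwf $\CPred$ constructed above carries every piece of structure demanded of an \XTT-algebra, exploiting the uniform observation that each operation on $\CPred$ splits into a \emph{syntactic} first component, which is simply the corresponding operation on $\Cx$, and a \emph{logical-family} second component, supplied by the appropriate clause of \cref{fig:type-codes}. Consequently every required equation is checked componentwise: its first projection is an equation in $\Cx$, which is an \XTT-algebra by hypothesis, and its second projection is an equation in the internal type theory of $\AugCSet$, which was arranged precisely so that the recursive definitions of $\Pred{\ClUni{n}}$, $\UPred{-}$, and their Kan operators validate it. The homomorphism claim is then immediate, since $\PiSyn$ is exactly first projection.

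First I would supply the cubical structure. The split fibration $\CPred\rProjto\AugCube$ is the composite $\CPred\rProjto^{\PiSyn}\Cx\rProjto^{\CubeFib}\AugCube$; reindexing a logical family $\Gl{\Gamma}=(\Gamma,\Pred{\Gamma})$ along a dimension substitution $\psi$ is by restriction of the dependent cubical set $\Pred{\Gamma}$ over $\Nerve{\Gamma}$, which is well-defined because $\Nerve$ is built from $\CubeFib$, so dimension reindexing on $\Cx$ is tracked on nerves, and strictly functorial because $\CubeFib$ is split and restriction in a presheaf topos is strict. The cwf operations --- substitution in types and elements, context comprehension, the variable and weakening --- were already exhibited above; checking naturality amounts to unwinding that, on second components, they are all given by restriction in $\AugCSet$.

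The heart of the argument is \emph{boundary separation} and the \emph{Kan operations}. For separation, given $\Gl{A},\Gl{B}\in\CwfTy<\CPred>[k]{\Gl{\Gamma}}$ agreeing on the boundary of a dimension $i$ of $\Gl{\Gamma}$, the syntactic components agree by boundary separation in $\Cx$, and the logical-family components are two maps into $\Pred{\ClUni{k}}$ that agree on the boundary; so it suffices to show that the inductively defined presheaf $\Pred{\ClUni{k}}$ (and likewise $\UPred{-}$) is \emph{separated} with respect to the coverage generated by the boundary inclusions, which I would prove by induction on code structure, using that $\NerveTy$ and $\NerveEl$ are separated because they are assembled from the \XTT-algebra $\Cx$. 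For coercion and composition, the first components are $\CwfCoe{i.A}{r}{r'}{M}$ and the composition operator inherited from $\Cx$, while the second components are the code-level Kan operators of \cref{fig:type-codes}; \emph{adjacency} ($r=r'$) is inherited on both components, and the connective-specific computation rules for coercion --- at $\CwfPi{A}{B}$ and at $\CwfEq{i.A}{N_0}{N_1}$ --- match by design. The case I expect to be the main obstacle is \emph{regularity}: when $\Gl{A}$ is, up to equality, weakened from $\Gl{\Gamma}$, the glued coercion must be the identity, and on second components this forces $\coe{i.\FmtCode{A}}{r}{r'}{-}$ to be the identity on any code that does not genuinely depend on $i$ --- again an induction on codes, which interacts delicately with the reduction of composition to homogeneous composition.

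Once these two inductive lemmas are in place, the remaining requirements --- dependent function and pair types, booleans, the closed universes with type-case, and level restriction --- are verified by the same componentwise recipe, the glued introduction, elimination, and $\beta\eta$ equations following from their $\Cx$-counterparts together with the corresponding clauses of $\UPred{-}$ in \cref{fig:type-codes}. Finally, $\PiSyn\colon\CPred\to\Cx$ sends $\Gl{\Gamma}\mapsto\Gamma$, $\Gl{A}\mapsto A$, $\Gl{M}\mapsto M$, and since every operation on $\CPred$ was defined with its first component equal to the corresponding operation on $\Cx$, $\PiSyn$ strictly preserves context comprehension, substitution, the cubical fibration, all type connectives, and the Kan operations; hence it is a strict homomorphism of \XTT-algebras.
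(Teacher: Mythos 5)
Your proposal matches the paper's argument: the proof is carried out lemma by lemma in Section~\ref{sec:xtt-computability-model}, verifying each piece of \XTT-structure for $\CPred$ componentwise (syntactic part inherited from $\Cx$, logical-family part given by the clauses of \cref{fig:type-codes}), with the key inductive content isolated in \cref{lem:inductive-path-unicity} (typewise and elementwise separation of every code $\FmtCode{A}:\Pred{\ClUni{n}}A$, by nested induction on $n$ and on $\FmtCode{A}$). One small reorganization to note: in the paper, the separation of $\NerveTy$ and $\NerveEl$ (\cref{lem:ty-nerve-path-unicity}, which is immediate from $\Cov$-separation of $\CwfTy,\CwfEl$ on $\Cx$) handles the syntactic components, while \cref{lem:inductive-path-unicity} is a self-contained induction on codes not appealing to nerve separation at all --- whereas you fold the nerve-separation fact into the code induction; and the paper treats regularity of the glued Kan structure as evident from inspecting the clauses of the realizer rather than as the main obstacle you anticipate. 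Neither discrepancy is a gap; the overall strategy and its crucial lemma are the same.
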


\subsection{Canonicity theorem}\label{sec:canonicity-theorem}

Because $\CPred$ is an \XTT-algebra, we are now equipped to prove a
canonicity theorem for the initial \XTT-algebra $\Cx$: if $M$ is an element
of type $\CwfBool$ in the empty context, then either $M=\CwfTrue$ or
$M=\CwfFalse$, and not both.

\begin{proof}

  We have $M\in\CwfEl{\cdot}{\CwfBool}$, and therefore
  $\Interp{M}\in\CwfEl<\CPred>{\cdot}{\Gl{\CwfBool}}$. From this we
  obtain $N:\CwfEl{\cdot}{\CwfBool}$ where $N=\PiSyn\Interp{M}$, and
  $\Pred{N}\in\UPred{\CodeBool}_\cdot(N)$; by definition, $\Pred{N}$ is either a
  proof that $N=\CwfTrue$ or a proof that $N=\CwfFalse$ (see \cref{fig:type-codes}).
  Therefore, it suffices to observe that $\PiSyn\Interp{M}=M$; but this follows
  from the universal property of the initial \XTT-algebra and the fact that
  $\CPred\rProjto^\PiSyn\Cx$ is an \XTT-homomorphism. Moreover, because the
  interpretation of $\CwfBool$ in $\CPred$ is disjoint, $M$ cannot equal both
  $\CwfTrue$ and $\CwfFalse$.  \qedhere

\end{proof}

\bibliography{../references/refs-bibtex}

\clearpage
\appendix

\section{The \texorpdfstring{\XTT}{XTT} language}\label{sec:calculus}

\paragraph*{Annotated syntax}

The raw syntax of \XTT{} includes typing annotations on function application
$\app[x:A.B]{M}{N}$ and pair projections $\fst[x:A.B]{M}$ and $\snd[x:A.B]{M}$,
in order to ensure that the raw syntax could (in theory) be organized into an
\emph{initial} model of \XTT{}, in the sense of \cref{sec:cwf-structure}. A
version of the syntax with fewer annotations would be justified by a normalization
result for \XTT{}, which we do \emph{not} establish here.

Because these annotations can visually obscure the meaning of a term, we adopt
the notational convention that when a term is already known to be well-typed,
we omit the annotation and write $\app{M}{N}$ for
$\app[x:A.B]{M}{N}$, and likewise $\fst{M}$ for $\fst[x:A.B]{M}$, etc.

\paragraph*{Heterogeneous composition}
Following previous work \Cite{angiuli-favonia-harper:2018}, we take coercion and
homogeneous composition as primitive operations, and define heterogeneous composition in
terms of it:
\[
  \com{i.A}{r}{r'}{M}{\etcsys[\e]{s}{j.N_\e}} \triangleq
  \hcom{\dsubst{A}{r'}{j}}{r}{r'}{
    \parens*{\coe{j.A}{r}{r'}{M}}
  }{
    \etcsys[\e]{s}{j. \coe{j.A}{j}{r'}{N_\e}}
  }
\]

Other versions of cubical type theory, such as De Morgan cubical type theory \Cite{cchm:2017}, take
heterogeneous composition as primitive and derive both coercion and homogeneous
composition as a special case. In our setting, it is especially advantageous to
take coercion and homogeneous composition as a primitive, because in \XTT{} it
is only necessary to provide $\beta$-rules for coercion; in
\cref{sec:rules:derivable}, we observe that all the $\beta$-rules for
homogeneous composition are in fact already derivable, by exploiting the path
unicity rule in \cref{sec:rules:structural}.

\begin{convention}[Presupposition]\label{convention:presupposition}
  The \XTT{} language involves many forms of judgment, each of which is defined conditionally on a
  \emph{presupposition}; in type-theoretic formal systems, a judgment expresses
  the well-formedness of a raw term (the ``subject'') relative to some
  parameters. The parameters themselves are rarely raw terms, but rather terms
  that are already known to be well-formed according to certain judgments (called
  ``presuppositions'').

  We indicate this situation schematically
  for a form of judgment $\mathcal{J}$ in the following way, where
  $\etc{\mathfrak{p}_i}$ are parameters and $\mathfrak{q}$ is a subject:
  \begin{mathpar}
    \tikz[framed,edge from parent fork up, level distance=2em, sibling distance = 7em]{
      \draw
        node[jdg] {$\mathcal{J}\parens*{\mathfrak{p}_0,\ldots,\mathfrak{p}_n,\mathfrak{q}}$} [grow'=up]
        child {
          node[presup] {$\mathcal{K}_0\parens*{\mathfrak{p}_0}$}
          child[dotted]
        }
        child{ node {\ldots} }
        child{
          node[presup] {$\mathcal{K}_n\parens*{\mathfrak{p}_n}$}
          child[dotted]
        }
        node[below=1em] {\textit{``Pronunciation of $\mathcal{J}\parens{\mathfrak{p}_0,\ldots,\mathfrak{p}_n,\mathfrak{q}}$''}}
    }
  \end{mathpar}
\end{convention}

\subsection{The judgments of \texorpdfstring{\XTT}{XTT}}
\label{sec:judgments}

The judgments of \XTT{} are summarized below using \cref{convention:presupposition}.

\begin{mathparpagebreakable}
  \tikz[framed,edge from parent fork up, level distance=2em, sibling distance = 7em]{
    \draw
      node[jdg] {$\IsDimCx{\Psi}$} [grow'=up]
      node[below=1em] {\textit{``$\Psi$ is an (augmented) cube''}}
  }
  \and
  \tikz[framed,edge from parent fork up, level distance=2em, sibling distance = 7em]{
    \draw
      node[jdg] {$\IsCx{\Gamma}$} [grow'=up]
      child {
        node[presup] { $\IsDimCx{\Psi}$ }
        child[dotted] {}
      }
      node[below=1em] {\textit{``$\Gamma$ is a context over $\Psi$''}}
  }
  \and
  \tikz[framed,edge from parent fork up, level distance=2em, sibling distance = 7em]{
    \draw
      node[jdg] {$\IsDim<\Psi>{r}$} [grow'=up]
      child {
        node[presup] { $\IsDimCx{\Psi}$ }
        child[dotted] {}
      }
      node[below=1em] {\textit{``$r$ is a dimension over $\Psi$''}}
  }
  \and
  \tikz[framed,edge from parent fork up, level distance=2em, sibling distance = 7em]{
    \draw
      node[jdg] {$\EqDim<\Psi>{r}{r'}$} [grow'=up]
      child {
        node[presup] { $\IsDim<\Psi>{r}$ }
        child[dotted] {}
      }
      child {
        node[presup] { $\IsDim<\Psi>{r'}$ }
        child[dotted] {}
      }
      node[below=1em] {\textit{``$r$ and $r'$ are equal dimensions in $\Psi$''}}
  }
  \and
  \tikz[framed,edge from parent fork up, level distance=2em, sibling distance = 7em]{
    \draw
      node[jdg] {$\IsTy<\Mute{\Psi}>[\Gamma]{A}$} [grow'=up]
      child {
        node[presup] { $\IsCx<\Mute{\Psi}>{\Gamma}$ }
        child[dotted] {}
      }
      node[below=1em] {\textit{``$A$ is a type at level $k$ in context $\Gamma$''}}
  }
  \and
  \tikz[framed,edge from parent fork up, level distance=2em, sibling distance = 10em]{
    \draw
      node[jdg] {$\EqTy<\Mute{\Psi}>{A}{B}$} [grow'=up]
      child {
        node[presup] { $\IsTy<\Mute{\Psi}>{A}$ }
        child[dotted] {}
      }
      child {
        node[presup] { $\IsTy<\Mute{\Psi}>{B}$ }
        child[dotted] {}
      }
      node[below=1em] {\textit{``$A$ and $B$ are equal types at level $k$''}}
  }
  \and
  \tikz[framed,edge from parent fork up, level distance=2em, sibling distance = 7em]{
    \draw
      node[jdg] {$\IsTm<\Mute{\Psi}>{M}{A}$} [grow'=up]
      child {
        node[presup] { $\IsTy<\Mute{\Psi}>{A}$ }
        child[dotted] {}
      }
      node[below=1em] {\textit{``$M$ is a term of type $A$''}}
  }
  \and
  \tikz[framed,edge from parent fork up, level distance=2em, sibling distance = 9em]{
    \draw
      node[jdg] {$\EqTm<\Mute{\Psi}>{M}{N}{A}$} [grow'=up]
      child {
        node[presup] { $\IsTm<\Mute{\Psi}>{M}{A}$ }
        child[dotted] {}
      }
      child {
        node[presup] { $\IsTm<\Mute{\Psi}>{N}{A}$ }
        child[dotted] {}
      }
      node[below=1em] {\textit{``$M$ and $N$ are equal terms of type $A$''}}
  }
  \and
  \tikz[framed,edge from parent fork up, level distance=2.5em, sibling distance = 9em]{
    \draw
      node[jdg] {$\IsTy<\Mute{\Psi}>{A}{\etc{\tube{\xi}{B}}}$} [grow'=up]
      child {
        node[presup] { $\etc{\IsTy<\Mute{\Psi},\xi>{B}}$ }
        child[dotted] {}
      }
      node[below=1em] {\textit{``$A$ is a type at level $k$ which matches each $B$ at $\xi$ ''}}
  }
  \and
  \tikz[framed,edge from parent fork up, level distance=2.5em, sibling distance = 10em]{
    \draw
      node[jdg] {$\IsTm<\Mute{\Psi}>{M}{A}{\etc{\tube{\xi}{N}}}$} [grow'=up]
      child {
        node[presup] { $\IsTy<\Mute{\Psi}>{A}$ }
        child[dotted] {}
      }
      child {
        node[presup] { $\etc{\IsTm<\Mute{\Psi},\xi>{N}{A}}$ }
        child[dotted] {}
      }
      node[below=1em] {\textit{``$M$ is an element of $A$ which matches each $N$ at $\xi$ ''}}
  }
\end{mathparpagebreakable}

\subsection{The rules of \texorpdfstring{\XTT}{XTT}}\label{sec:xtt-rules}

In the following sections, we summarize the rules of \XTT{}; we systematically
omit obvious premises to equational rules and all congruence rules for
judgmental equality, because these can be mechanically obtained from the typing
rules.

\subsubsection{Cubes}

\begin{ruleblock}
  \Rule[emp]{
  }{
    \IsDimCx{\cdot}
  }
  \and
  \Rule[snoc/dim]{
    \IsDimCx{\Psi}
  }{
    \IsDimCx{\Psi,i}
  }
  \and
  \Rule[snoc/constr]{
    \IsDimCx{\Psi}\\
    \IsDim<\Psi>{r,r'}
  }{
    \IsDimCx{\Psi,r = r'}
  }
\end{ruleblock}

\subsubsection{Contexts}
\begin{ruleblock}
  \Rule[emp]{
  }{
    \IsCx{\cdot}
  }
  \and
  \Rule[snoc]{
    \IsCx{\Gamma}
    \\
    \IsTy[\Gamma]{A}
  }{
    \IsCx{\Gamma,x:A}
  }
\end{ruleblock}

\subsubsection{Dimensions}
\begin{ruleblock}
  \Rule[constant]{
  }{
    \IsDim{\e}
  }
  \and
  \Rule[variable]{
    i\in \Psi
  }{
    \IsDim<\Psi>{i}
  }
  \and
  \Rule[reflexivity]{
  }{
    \EqDim{r}{r}
  }
  \and
  \Rule[symmetry]{
    \EqDim{r}{r'}
  }{
    \EqDim{r'}{r}
  }
  \and
  \Rule[transitivity]{
    \EqDim{r_0}{r_1}\\
    \EqDim{r_1}{r_2}
  }{
    \EqDim{r_0}{r_2}
  }
  \and
  \Rule[hyp]{
    \Psi \ni r = r'
  }{
    \EqDim<\Psi>{r}{r'}
  }
\end{ruleblock}

\subsubsection{Structural}\label{sec:rules:structural}

\begin{ruleblock}
  \Rule[variable]{
    \Gamma\ni{}x:A
  }{
    \IsTm[\Gamma]{x}{A}
  }
  \and
  \Rule[false constraint]{
    \EqDim{0}{1}
  }{
    \Jdg
  }
  \and
  \Rule[conversion]{
    \EqTy{A_0}{A_1}
    \\
    \IsTm{M}{A_0}
  }{
    \IsTm{M}{A_1}
  }
  \\
  \Rule[boundary separation (types)]{
    \IsDim{r}
    \\
    \etc{
      \EqTy<\Psi,r=\e>{A}{B}
    }
  }{
    \EqTy<\Psi>{A}{B}
  }
  \and
  \Rule[boundary separation (terms)]{
    \IsDim<\Psi>{r}
    \\
    \etc{
      \EqTm<\Psi,r=\e>{M}{N}{A}
    }
  }{
    \EqTm<\Psi>{M}{N}{A}
  }
\end{ruleblock}

The following rules are admissible:
\begin{ruleblock}
  \Rule[constraint cut]{
    \EqDim<\Psi>{r}{r'}
    \\
    \Jdg<\Psi,r=r'>
  }{
    \Jdg<\Psi>
  }
  \and
  \Rule[constraint weakening]{
    \Jdg<\Psi>
  }{
    \Jdg<\Psi,\xi>
  }
\end{ruleblock}

\subsubsection{Coercion}

\begin{ruleblock}
  \Rule[coercion]{
    \IsDim<\Psi>{r,r'}
    \\
    \IsTy<\Psi,i>{A}
    \\
    \IsTm<\Psi>{M}{\dsubst{A}{r}{i}}
  }{
    \IsTm<\Psi>{\coe{i.A}{r}{r'}{M}}{\dsubst{A}{r'}{i}}
  }
  \and
  \Rule[coercion boundary]{
  }{
    \EqTm<\Psi>{\coe{i.A}{r}{r}{M}}{M}{\dsubst{A}{r}{i}}
  }
  \and
  \Rule[coercion regularity]{
    \EqTy<\Psi,j,j'>{\dsubst{A}{j}{i}}{\dsubst{A}{j'}{i}}
  }{
    \EqTm<\Psi>{\coe{i.A}{r}{r'}{M}}{M}{\dsubst{A}{r'}{i}}
  }
\end{ruleblock}

\subsubsection{Composition}

\begin{ruleblock}
  \Rule[composition]{
    \IsDim<\Psi>{r,r',s}
    \\
    \IsTm<\Psi>{M}{A}
    \\
    \etc{
      \IsTm<\Psi,j,s=\e>{N_\e}{A}{
        \tube{j=r}{M}
      }
    }
  }{
    \IsTm<\Psi>{
      \hcom{A}{r}{r'}{M}{
        \etcsys[\e]{s}{j.N_\e}
      }
    }{A}
  }
  \and
  \Rule[composition boundary]{
  }{
    \EqTm{
      \hcom{A}{r}{r}{M}{
        \etcsys[\e]{s}{j.N_\e}
      }
    }{M}{A}
    \\\\
    \EqTm{
      \hcom{A}{r}{r'}{M}{
        \etcsys[\e']{\e}{j.N_{\e'}}
      }
    }{\dsubst{N_\e}{r'}{j}}{A}
  }
\end{ruleblock}

\subsubsection{Level restrictions}

\begin{ruleblock}
  \Rule[lift formation]{
    \IsTy{A}[k]
    \\
    k\leq l
  }{
    \IsTy{\TyLift{k}{l}{A}}[l]
  }
  \and
  {
    \mprset{fraction={===}}
    \Rule[lift element]{
      \IsTm{M}{A}
    }{
      \IsTm{M}{\TyLift{k}{l}{A}}
    }
    \and
    \Rule[lift hypothesis]{
      \Jdg[\Gamma,x:A]
    }{
      \Jdg[\Gamma,x:\TyLift{k}{l}{A}]
    }
  }
  \and
  \Rule[lift functoriality]{
  }{
    \EqTy{\TyLift{k}{k}{A}}{A}[k]
    \\\\
    \EqTy{\TyLift{l}{m}{\TyLift{k}{l}{A}}}{\TyLift{k}{m}{A}}[k]
  }
  \and
  \Rule[lift coercion]{
  }{
    \EqTm{
      \coe{i.\TyLift{k}{l}{A}}{r}{r'}{M}
    }{
      \coe{i.A}{r}{r'}{M}
    }{
      \TyLift{k}{l}{\dsubst{A}{r'}{i}}
    }
  }
\end{ruleblock}

\subsubsection{Dependent pair types}

\begin{ruleblock}
  \Rule[pair formation]{
    \IsTy[\Gamma]{A}
    \\\\
    \IsTy[\Gamma,x:A]{B}
  }{
    \IsTy[\Gamma]{\sigmacl{x}{A}{B}}
  }
  \and
  \Rule[pair lifting]{
  }{
    \EqTy{
      \TyLift{k}{l}{\sigmacl{x}{A}{B}}
    }{
      \sigmacl{x}{
        \TyLift{k}{l}{A}
      }{
        \TyLift{k}{l}{B}
      }
    }[l]
  }
  \and
  \Rule[pair introduction]{
    \IsTy[\Gamma]{A}
    \\
    \IsTy[\Gamma,x:A]{B}
    \\
    \IsTm{M}{A}
    \\
    \IsTm{N}{\subst{B}{M}{x}}
  }{
    \IsTm{\pair{M}{N}}{\sigmacl{x}{A}{B}}
  }
  \\
  \Rule[pair elimination]{
    \IsTy[\Gamma]{A}
    \\\\
    \IsTy[\Gamma,x:A]{B}
    \\\\
    \IsTm{M}{\sigmacl{x}{A}{B}}
  }{
    \IsTm{\fst[x:A.B]{M}}{A}
    \\\\
    \IsTm{\snd[x:A.B]{M}}{\subst{B}{\fst{M}}{x}}
  }
  \and
  \Rule[pair elimination lifting]{
  }{
    \EqTm{
      \fst[x:\TyLift{k}{l}{A}.\TyLift{k}{l}{B}]{M}
    }{
      \fst[x:A.B]{M}
    }{A}
    \\\\
    \EqTm{
      \snd[x:\TyLift{k}{l}{A}.\TyLift{k}{l}{B}]{M}
    }{
      \snd[x:A.B]{M}
    }{\subst{B}{\fst{M}}{x}}
  }
  \and
  \Rule[pair computation]{
  }{
    \EqTm{\fst{\pair{M}{N}}}{M}{A}
    \\\\
    \EqTm{\snd{\pair{M}{N}}}{N}{\subst{B}{M}{x}}
  }
  \and
  \Rule[pair coercion computation (1)]{
  }{
    \EqTm{
      \fst{
        \coe{i.\sigmacl{x}{A}{B}}{r}{r'}{M}
      }
    }{
      \coe{i.A}{r}{r'}{\fst{M}}
    }{\dsubst{A}{r'}{i}}
  }
  \and
  \Rule[pair coercion computation (2)]{
    \Def{H}{
      \coe{i.\subst{B}{\coe{i.A}{r}{i}{\fst{M}}}{x}}{r}{r'}{\snd{M}}
    }
  }{
    \EqTm{
      \snd{
        \coe{i.\sigmacl{x}{A}{B}}{r}{r'}{M}
      }
    }{H}{
      \subst{\dsubst{B}{r'}{i}}{
        \coe{i.A}{r}{r'}{\fst{M}}
      }{x}
    }
  }
  \and
  \Rule[pair unicity]{
  }{
    \EqTm{M}{\pair{\fst{M}}{\snd{M}}}{\sigmacl{x}{A}{B}}
  }
\end{ruleblock}

\subsubsection{Dependent function types}

\begin{ruleblock}
  \Rule[function formation]{
    \IsTy[\Gamma]{A}
    \\\\
    \IsTy[\Gamma,x:A]{B}
  }{
    \IsTy[\Gamma]{\picl{x}{A}{B}}
  }
  \and
  \Rule[function lifting]{
  }{
    \EqTy{
      \TyLift{k}{l}{\picl{x}{A}{B}}
    }{
      \picl{x}{
        \TyLift{k}{l}{A}
      }{
        \TyLift{k}{l}{B}
      }
    }[l]
  }
  \and
  \Rule[function introduction]{
    \IsTy[\Gamma]{A}
    \\
    \IsTy[\Gamma,x:A]{B}
    \\
    \IsTm[\Gamma,x:A]{M}{B}
  }{
    \IsTm[\Gamma]{\lam{x}{M}}{\picl{x}{A}{B}}
  }
  \and
  \Rule[function elimination]{
    \IsTy[\Gamma]{A}
    \\
    \IsTy[\Gamma,x:A]{B}
    \\\\
    \IsTm{M}{\picl{x}{A}{B}}
    \\
    \IsTm{N}{A}
  }{
    \IsTm{\app[x:A.B]{M}{N}}{\subst{B}{N}{x}}
  }
  \and
  \Rule[function elimination lifting]{
  }{
    \EqTm{
      \app[x:\TyLift{k}{l}{A}.\TyLift{k}{l}{B}]{M}{N}
    }{
      \app[x:A.B]{M}{N}
    }{\subst{x}{N}{B}}
  }
  \and
  \Rule[function computation]{
  }{
    \EqTm[\Gamma]{
      \app{
        \parens{\lam{x}{M}}
      }{N}
    }{
      \subst{M}{N}{x}
    }{
      \subst{B}{N}{x}
    }
  }
  \and
  \Rule[function coercion computation]{
    \Def<\Psi,i>{\widetilde{N}[i]}{
      \coe{i.A}{r'}{i}{N}
    }
  }{
    \EqTm<\Psi>{
      \app{
        \parens{
          \coe{i.\picl{x}{A}{B}}{r}{r'}{M}
        }
      }{N}
    }{
      \coe{i.\subst{B}{\widetilde{N}[i]}{x}}{r}{r'}{M\parens{\widetilde{N}[r]}}
    }{C}
  }
  \and
  \Rule[function unicity]{
  }{
    \EqTm{M}{\lam{x}{M(x)}}{\picl{x}{A}{B}}
  }
\end{ruleblock}

\subsubsection{Dependent equality types}
\begin{ruleblock}
  \Rule[equality formation]{
    \IsTy<\Psi,i>{A}
    \\
    \etc{
      \IsTm<\Psi,i,i=\e>{N_\e}{A}
    }
  }{
    \IsTy<\Psi>{\Eq{i.A}{N_0}{N_1}}
  }
  \and
  \Rule[equality lifting]{
  }{
    \EqTy{
      \TyLift{k}{l}{\Eq{i.A}{N_0}{N_1}}
    }{
      \Eq{i.\TyLift{k}{l}{A}}{N_0}{N_1}
    }[l]
  }
  \and
  \Rule[equality introduction]{
    \IsTm<\Psi,i>{M}{A}{
      \etc{
        \tube{i=\e}{N_\e}
      }
    }
  }{
    \IsTm<\Psi>{\lam{i}{M}}{\Eq{i.A}{N_0}{N_1}}
  }
  \and
  \Rule[equality elimination]{
    \IsDim{r}
    \\
    \IsTy<\Psi,i>{A}
    \\
    \etc{
      \IsTm<\Psi,i,i=\e>{N_\e}{A}
    }
    \\
    \IsTm{M}{\Eq{i.A}{N_0}{N_1}}
  }{
    \IsTm{\papp[i.A]{M}{r}}{\dsubst{A}{r}{i}}
  }
  \and
  \Rule[equality elimination lifting]{
  }{
    \EqTm{\papp[i.\TyLift{k}{l}{A}]{M}{r}}{\papp[i.A]{M}{r}}{\dsubst{A}{r}{i}}
  }
  \and
  \Rule[equality boundary]{
    \IsTm{M}{\Eq{i.A}{N_0}{N_1}}
  }{
    \EqTm{M(\e)}{N_\e}{\dsubst{A}{\e}{i}}
  }
  \and
  \Rule[equality computation]{
  }{
    \EqTm<\Psi>{(\lam{i}{M})(r)}{\dsubst{M}{r}{i}}{\dsubst{A}{r}{i}}
  }
  \and
  \Rule[equality coercion computation]{
  }{
    \EqTm{
      \parens*{\coe{j.\Eq{i.A}{N_0}{N_1}}{r}{r'}{P}}(s)
    }{
      \com{j.\dsubst{A}{s}{i}}{r}{r'}{P(s)}{
        \etcsys[\e]{s}{j.N_\e}
      }
    }{
      \dsubst{A}{r',s}{j,i}
    }
  }
  \and
  \Rule[equality unicity]{
  }{
    \EqTm{M}{\lam{i}{M(i)}}{\Eq{i.A}{N_0}{N_1}}
  }
\end{ruleblock}

\subsubsection{Booleans}

\begin{ruleblock}
  \Rule[boolean formation]{
  }{
    \IsTy{\bool}
  }
  \and
  \Rule[boolean lifting]{
  }{
    \EqTy{\TyLift{k}{l}{\bool}}{\bool}[l]
  }
  \and
  \Rule[boolean introduction]{
  }{
    \IsTm{\true}{\bool}
    \\\\
    \IsTm{\false}{\bool}
  }
  \and
  \Rule[boolean elimination]{
    \IsTy[\Gamma,x:\bool]{C}
    \\
    \IsTm[\Gamma]{M}{\bool}
    \\
    \IsTm[\Gamma]{N_0}{\subst{C}{\true}{x}}
    \\
    \IsTm[\Gamma]{N_1}{\subst{C}{\false}{x}}
  }{
    \IsTm[\Gamma]{
      \ifb{x.C}{M}{N_0}{N_1}
    }{
      \subst{C}{M}{x}
    }
  }
  \and
  \Rule[boolean elimination lifting]{
  }{
    \EqTm{
      \ifb{x.\TyLift{k}{l}{C}}{M}{N_0}{N_1}
    }{
      \ifb{x.C}{M}{N_0}{N_1}
    }{
      \subst{\TyLift{k}{l}{C}}{M}{x}
    }
  }
  \and
  \Rule[boolean computation]{
  }{
    \EqTm{
      \ifb{x.C}{\true}{N_0}{N_1}
    }{
      N_0
    }{
      \subst{C}{\true}{x}
    }
    \\\\
    \EqTm{
      \ifb{x.C}{\false}{N_0}{N_1}
    }{
      N_1
    }{
      \subst{C}{\false}{x}
    }
  }
\end{ruleblock}

\subsubsection{Universes}

\begin{ruleblock}
  \Rule[universe formation]{
    k<l
  }{
    \IsTy{\Univ[k]}[l]
  }
  \and
  \Rule[universe lifting]{
  }{
    \EqTy{\TyLift{l}{m}{\Univ[k]}}{\Univ[k]}[m]
  }
  \and
  {
    \mprset{fraction={===}}
    \Rule[universe elements]{
      \IsTy{A}[k]
    }{
      \IsTm{A}{\Univ[k]}
    }
    \and
    \Rule[universe equality]{
      \EqTy{A_0}{A_1}[k]
    }{
      \EqTm{A_0}{A_1}{\Univ[k]}
    }
  }
  \and
  \Rule[type-case]{
    \IsTy[\Gamma]{C}[l]
    \\\\
    \IsTm[\Gamma,x:\Univ[k],y:x\to\Univ[k]]{M_\Pi}{C}
    \\\\
    \IsTm[\Gamma,x:\Univ[k],y:x\to\Univ[k]]{M_\Sigma}{C}
    \\\\
    \IsTm[\Gamma,x_0:\Univ[k],x_1:\Univ[k],x^=:\Eq{i.\Univ[k]}{x_0}{x_1},y_0:x_0,y_1:x_1]{
      M_{\mathsf{Eq}}
    }{C}
    \\\\
    \IsTm[\Gamma]{M_\bool}{C}
    \\\\
    \IsTm[\Gamma]{M_{\mathcal{U}}}{C}
  }{
    \IsTm[\Gamma]{
      \UCase{X}{
        \UCaseBranch{\Pi_x y}{M_\Pi}
        \mid
        \UCaseBranch{\Sigma_x y}{M_\Sigma}
        \mid
        \UCaseBranch{\Eq{x_0,x_1,x^=}{y_0}{y_1}}{M_\mathsf{Eq}}
        \mid
        \UCaseBranch{\bool}{M_{\bool}}
        \mid
        \UCaseBranch{\mathcal{U}}{M_{\mathcal{U}}}
      }
    }{C}
  }
  \and
  \Rule[type-case computation]{
  }{
    \EqTm{
      \UCase*{\picl{z}{A}{B}}{
        \UCaseBranch{\Pi_x y}{M}
        \mid
        \ldots
      }
    }{
      \subst{M}{
        A, \lam{z}{B}
      }{x,y}
    }{C}
    \\
    \EqTm{
      \UCase*{\sigmacl{z}{A}{B}}{
        \ldots
        \mid
        \UCaseBranch{\Sigma_x y}{M}
        \mid
        \ldots
      }
    }{
      \subst{M}{
        A, \lam{z}{B}
      }{x,y}
    }{C}
    \\
    \EqTm{
      \UCase{\bool}{
        \ldots
        \mid
        \UCaseBranch{\bool}{M}
        \mid
        \ldots
      }
    }{M}{C}
    \\
    \EqTm{
      \UCase{\Univ[k']}{
        \ldots\mid\UCaseBranch{\mathcal{U}}{M}
      }
    }{M}{C}
  }
  \and
  \Rule{
    \Def{H}{
      \subst{M}{
        \dsubst{A}{0}{i},
        \dsubst{A}{1}{i},
        \lam{i}{A},
        N_0,
        N_1
      }{x_0,x_1,x^=,y_0,y_1}
    }
  }{
    \EqTm{
      \UCase*{\Eq{i.A}{N_0}{N_1}}{
        \ldots
        \mid
        \UCaseBranch{
          \Eq{x_0,x_1,x^=}{y_0}{y_1}
        }{M}
        \mid
        \ldots
      }
    }{H}{C}
  }
\end{ruleblock}

\subsubsection{Boundary matching}

\begin{ruleblock}
  \Rule[type boundary]{
    \IsTy<\Psi>{A}
    \and
    \etc{
      \EqTy<\Psi,\xi>{A}{B}
    }
  }{
    \IsTy<\Psi>{A}{
      \etc{
        \tube{\xi}{B}
      }
    }
  }
  \and
  \Rule[term boundary]{
    \IsTm<\Psi>{M}{A}
    \and
    \etc{
      \EqTm<\Psi,\xi>{M}{N}{A}
    }
  }{
    \IsTm<\Psi>{M}{A}{
      \etc{
        \tube{\xi}{N}
      }
    }
  }
\end{ruleblock}

\subsection{Derivable Rules}\label{sec:rules:derivable}

Numerous additional rules about compositions are \emph{derivable} by
exploiting boundary separation. In previous presentations of cubical
type theory (which did not enjoy the unicity of equality proofs), it was
necessary to include $\beta$-rules for compositions explicitly.

\begin{ruleblock}
  \Rule[composition regularity]{
    \etc{
      \EqTm<\Psi,j_0,j_1,i=\e>{
        \dsubst{N_\e}{j_0}{j}
      }{
        \dsubst{N_\e}{j_1}{j}
      }{A}
    }
  }{
    \EqTm<\Psi>{
      \hcom{A}{r}{r'}{M}{
        \etcsys[\e]{i}{j.N_\e}
      }
    }{M}{A}
  }
  \and
  \Rule[heterogeneous composition]{
    \IsDim<\Psi>{r,r',s}
    \\
    \IsTm<\Psi>{M}{\dsubst{A}{r}{j}}
    \\
    \etc{
      \IsTm<\Psi,j,s=\e>{N_\e}{A}{
        \tube{j=r}{M}
      }
    }
  }{
    \IsTm<\Psi>{
      \com{j.A}{r}{r'}{M}{
        \etcsys[\e]{s}{j.N_\e}
      }
    }{
      \dsubst{A}{r'}{j}
    }
  }
  \and
  \Rule[heterogeneous composition boundary]{
  }{
    \EqTm{
      \com{j.A}{r}{r}{M}{
        \etcsys[\e]{s}{j.N_\e}
      }
    }{M}{
      \dsubst{A}{r}{j}
    }
    \\\\
    \EqTm{
      \com{j.A}{r}{r'}{M}{
        \etcsys[\e']{\e}{j.N_{\e'}}
      }
    }{
      \dsubst{N_\e}{r'}{j}
    }{
      \dsubst{A}{r'}{j}
    }
  }
  \and
  \Rule[lift composition]{
  }{
    \EqTm{
      \hcom{\TyLift{k}{l}{A}}{r}{r'}{M}{
        \etcsys[\e]{i}{j.N_\e}
      }
    }{
      \hcom{A}{r}{r'}{M}{
        \etcsys[\e]{i}{j.N_\e}
      }
    }{
      \TyLift{k}{l}{A}
    }
  }
  \and
  \Rule[lift type composition]{
  }{
    \EqTy{
      \hcom{\Univ[l]}{r}{r'}{
        \TyLift{k}{l}{A}
      }{
        \etcsys[\e]{i}{
          j.\TyLift{k}{l}{B_\e}
        }
      }
    }{
      \TyLift{k}{l}{
        \hcom{\Univ[k]}{r}{r'}{A}{
          \etcsys[\e]{i}{
            j.B_\e
          }
        }
      }
    }
  }
  \and
  \Rule[pair composition computation (1)]{
    \Def{H}{
      \hcom{A}{r}{r'}{\fst{M}}{\etcsys[\e]{i}{j.\fst{N_\e}}}
    }
  }{
    \EqTm{
      \fst{
        \hcom{\sigmacl{x}{A}{B}}{r}{r'}{M}{\etcsys[\e]{i}{j.N_\e}}
      }
    }{H}{A}
  }
  \and
  \Rule[pair composition computation (2)]{
    \Def<\Psi,k>{\widetilde{M_1}[k]}{
      \hcom{A}{r}{k}{\fst{M}}{\etcsys[\e]{i}{j.\fst{N_\e}}}
    }
    \\
    \Def<\Psi>{H}{
      \com{k.\subst{B}{\widetilde{M_1}[k]}{x}}{r}{r'}{\snd{M}}{\etcsys[\e]{i}{j.\snd{N_\e}}}
    }
  }{
    \EqTm<\Psi>{
      \snd{
        \hcom{\sigmacl{x}{A}{B}}{r}{r'}{M}{\etcsys[\e]{i}{j.N_\e}}
      }
    }{H}{
      \subst{B}{\widetilde{M_1}[r']}{x}
    }
  }
  \and
  \Rule[pair type composition]{
    \Def<\Psi,k>[\Gamma]{\widetilde{A}[k]}{
      \hcom{\Univ}{r}{k}{A}{
        \etcsys[\e]{i}{j.A_\e}
      }
    }
    \\\\
    \Def<\Psi,j>[\Gamma,x:\widetilde{A}[r']]{\widetilde{x}[j]}{
      \coe{k.\widetilde{A}[k]}{r'}{j}{x}
    }
    \\\\
    \Def<\Psi>[\Gamma,x:\widetilde{A}[r']]{\widetilde{B}}{
      \hcom{\Univ}{r}{r'}{
        \subst{B}{\widetilde{x}[r]}{x}
      }{
        \etcsys[\e]{i}{
          j.\subst{B_\e}{
            \widetilde{x}[j]
          }{x}
        }
      }
    }
  }{
    \EqTy<\Psi>[\Gamma]{
      \hcom{\Univ}{r}{r'}{
        \parens{\sigmacl{x}{A}{B}}
      }{
        \etcsys[\e]{i}{
          j.\sigmacl{x}{A_\e}{B_\e}
        }
      }
    }{
      \sigmacl{x}{\widetilde{A}[r']}{\widetilde{B}}
    }[l]
  }
  \and
  \Rule[function composition computation]{
    \Def{H}{\hcom{\subst{B}{N}{x}}{r}{r'}{\app{M}{N}}{\etcsys[\e]{i}{j.\app{M_\e}{N}}}}
  }{
    \EqTm{
      \app{
        \parens{
          \hcom{\picl{x}{A}{B}}{r}{r'}{M}{\etcsys[\e]{i}{j.M_\e}}
        }
      }{N}
    }{H}{\picl{x}{A}{B}}
  }
  \and
  \Rule[function type composition]{
    \Def<\Psi,k>{\widetilde{A}[k]}{
      \hcom{\Univ}{r}{k}{A}{
        \etcsys[\e]{i}{j.A_\e}
      }
    }
    \\\\
    \Def<\Psi,j>[\Gamma,x:\widetilde{A}[r']]{\widetilde{x}[j]}{
      \coe{k.\widetilde{A}[k]}{r'}{j}{x}
    }
    \\\\
    \Def<\Psi>[\Gamma,x:\widetilde{A}[r']]{\widetilde{B}}{
      \hcom{\Univ}{r}{r'}{
        \subst{B}{\widetilde{x}[r]}{x}
      }{
        \etcsys[\e]{i}{
          j.\subst{B_\e}{
            \widetilde{x}[j]
          }{x}
        }
      }
    }
  }{
    \EqTy<\Psi>[\Gamma]{
      \hcom{\Univ}{r}{r'}{
        \parens{\picl{x}{A}{B}}
      }{
        \etcsys[\e]{i}{
          j.\picl{x}{A_\e}{B_\e}
        }
      }
    }{
      \picl{x}{\widetilde{A}[r']}{\widetilde{B}}
    }[l]
  }
  \and
  \Rule[equality composition computation]{
    \Def{H}{
      \hcom{\dsubst{A}{s}{i}}{r}{r'}{P(s)}{
        \etcsys[\e]{k}{j.Q_\e(s)}
      }
    }
  }{
    \EqTm{
      \parens{\hcom{\Eq{i.A}{N_0}{N_1}}{r}{r'}{P}{\etcsys[\e]{k}{j.Q_\e}}}(s)
    }{H}{
      \dsubst{A}{s}{i}
    }
  }
  \and
  \Rule[equality type composition]{
    \Def<\Psi,j,i>{\widetilde{A}[j,i]}{
      \hcom{\Univ}{r}{j}{A}{
        \etcsys[\e]{k}{j.A_\e}
      }
    }
    \\
    \Def<\Psi>{\widetilde{M}}{
      \com{j.\widetilde{A}[j,r]}{r}{r'}{M}{
        \etcsys[\e]{k}{j.M_\e}
      }
    }
    \\
    \Def<\Psi>{\widetilde{N}}{
      \com{j.\widetilde{A}[j,r']}{r}{r'}{N}{
        \etcsys[\e]{k}{j.N_\e}
      }
    }
  }{
    \EqTy<\Psi>{
      \hcom{\Univ}{r}{r'}{
        \Eq{i.A}{M}{N}
      }{
        \etcsys[\e]{k}{
          j.\Eq{i.A_\e}{M_\e}{N_\e}
        }
      }
    }{
      \Eq{i.\widetilde{A}[r',i]}{\widetilde{M}}{\widetilde{N}}
    }[l]
  }
  \and
  \Rule[boolean type composition]{
  }{
    \EqTy{
      \hcom{\Univ[k]}{r}{r'}{\bool}{
        \etcsys{i}{j.\bool}
      }
    }{
      \bool
    }[l]
  }
  \and
  \Rule[universe type composition]{
  }{
    \EqTy{
      \hcom{\Univ[k']}{r}{r'}{\Univ}{
        \etcsys{i}{j.\Univ}
      }
    }{
      \Univ
    }[l]
  }
\end{ruleblock}

\section{Algebraic model theory}\label{sec:cwf-structure}

We begin by giving a general formulation of a category with families (cwf)
which has the structure of \XTT. We work in a constructive set theory extended
by Grothendieck's Axiom of Universes: every set is contained in some
Grothendieck Universe; this axiom induces a ordinal-indexed hierarchy of
Grothendieck universes $\SemUniv[k]:\SET$. Concretely, we will be using the chain of
inclusions $\SemUniv[0] \in ... \in \SemUniv[n] \in ... \in \SemUniv[\omega]$.


Let $\Cube$ be the Cartesian cube category, the free strictly associative
Cartesian category generated by an interval; concretely, its objects are
dimension contexts $\IsDimCx{\Psi}$, with morphisms given by substitutions
between them. For the sake of clarity, we choose to work with the explicit
syntactic presentation where $\Psi$ is a list of named variables.  Next, let
$\AugCube$ be the \emph{augmented Cartesian cube category}, which freely adjoins an
initial object $\bot$ to $\Cube$. Using the initial object, we can see that
$\AugCube$ has all equalizers, and that the evident functor $\Cube\rTo\AugCube$
is left exact (in other words, the new limits coincide with the old ones where
they existed). We will write $\Psi_0,\Psi,\Psi_1 \rTo^{\Wk{\Psi}} \Psi_0,\Psi_1$
for the obvious weakening projection in $\AugCube$.

\subsection{Algebraic Cumulative Cwfs}
We employ a variation on the notion of
categories with families (cwfs) \Cite{dybjer:1996} suitable for modeling dependent type theory
with a hierarchy of universes \`a la Russell which is cumulative in an
algebraic sense (i.e.\ without subtyping).\footnote{We emphasize
that although the data of a cwf contains the data of a category, we are doing an
\emph{algebraic} model theory for type theory in which (e.g.) the initial
object is determined up to isomorphism rather than up to equivalence.}

\subsubsection{Basic cwf structure: contexts, types, elements}

Here, we develop the basic \emph{judgmental} structure of a model of \XTT{},
prior to requiring the existence of various connectives.

\paragraph{Category of contexts}
\emph{An algebraic cumulative cwf} begins with (the data of) a category $\Cx$
of \emph{contexts} $\Gamma$, with morphisms $\Gamma\rTo\Delta$ interpreting
substitutions. We require there to be a terminal context $\cdot$ such that for
any $\Gamma$ there is a unique substitution $\Gamma\rTo\cdot$.

\begin{notation}[Yoneda isomorphism]
  For a presheaf $X:\Psh{\mathcal{C}}$ on any category $\mathcal{C}$, we will use the following notations for the
  components of the Yoneda isomorphism:
  \begin{diagram}
    X(\Gamma) & \pile{\rTo^{\YoEmb}\\\lTo_{\YoUnemb}} & \Hom[\Psh{\Cx}]{\Yo{\Gamma}}{X}
  \end{diagram}
\end{notation}

\paragraph{Cubical structure}

We furthermore require our category of contexts be equipped with a
\emph{split} fibration $\Cx\rProjto^{\CubeFib}\AugCube$ which preserves the terminal object. To be precise, we equip
$\Cx$ with the data of a functor $\Cx\rTo^{\vert\CubeFib\vert}\AugCube$ and a
splitting $\Cleave$ for $\vert\CubeFib\vert$. It is important to note
that we intend this structure to be preserved on the nose in homomorphisms of
structured cwfs.

\begin{figure}

  \begin{diagram}
    \Delta && \\
    &\rdDotsto~{\CartUniMap<\phi>{\gamma}{\Cleave{\psi}{\Gamma}}}\rdTo(6,2)^\gamma\\
    \dProjMapsto^{\CubeFib} && \ReIx{\psi}{\Gamma} &&  \rTo^{\Cleave{\psi}{\Gamma}}  && \Gamma\\
    \\
    \Upsilon&&\dProjMapsto^{\CubeFib}&&&& \dProjMapsto_{\CubeFib} \\
    & \rdTo~\phi\rdTo(6,2)^{\CubeFib{\gamma}} &          &          &            &&       \\
    &        &    \Phi     &          &   \rTo_\psi   &        &   \Psi   \\
  \end{diagram}

  \caption{A schematic illustration of the situation induced by a split
  fibration $\CubeFib$; $\Cleave{\psi}{\Gamma}$ is the Cartesian lifting of
  $\psi$, and the dotted arrow is the one induced by its universal property.}\label{fig:prone-lifting}

\end{figure}

\begin{notation}[Split fibration]\label{notation:splitting}
  We impose the following notations for working with the split fibration
  $\CubeFib$, supplemented by \cref{fig:prone-lifting}.
  \begin{enumerate}

    \item The fiber of $\CubeFib$ over $\Psi$ is written $\Cx_\Psi$;
      explicitly, this is the subcategory of $\Cx$ whose objects are taken to
      $\Psi$ by $\CubeFib$.

    \item The splitting $\Cleave$ takes a dimension substitution
      $\Psi'\rTo^\psi\Psi$ and an object $\Gamma:\Cx_\Psi$ to a morphism
      $\ReIx{\psi}{\Gamma}\rTo^{\Cleave{\psi}{\Gamma}}\Gamma$ such that
      $\CubeFib{\Cleave{\psi}{\Gamma}} = \psi$.

    \item Given a Cartesian morphism $\Delta\rTo^{\gamma_\Delta}\Gamma$ and an
      arrow $\Xi\rTo^{\gamma_\Xi}\Gamma$, along with an arrow in the base
      category $\CubeFib{\Xi}\rTo^\psi\CubeFib{\Gamma}$ such that
      $\CubeFib{\gamma_\Xi} = \CubeFib{\gamma_\Delta}\circ \psi$, the universal
      property of the Cartesian morphism guarantees a unique map
      $\CartUniMap<\psi>{\gamma_\Xi}{\gamma_\Delta}$ which lies over $\psi$,
      such that $\gamma_\Xi = \gamma_\Delta\circ\parens*{\CartUniMap<\psi>{\gamma_\Xi}{\gamma_\Delta}}$.


    \item Given $\Delta\rTo^\gamma\Gamma$ (not necessarily vertical) and $\Psi$
      disjoint from $\CubeFib{\Delta}$ and $\CubeFib{\Gamma}$, we write
      $\ReIx{\Wk{\Psi}}{\Delta}\rTo^{\Lift{\Wk{\Psi}}{\gamma}}\ReIx{\Wk{\Psi}}{\Gamma}$
      for the morphism
      $\CartUniMap<\CubeFib{\gamma}\star{}\Id_\Psi>{\parens*{\gamma\circ\Cleave{\Wk{\Psi}}{\Delta}}}{\Cleave{\Wk{\Psi}}{\Gamma}}$
      where $\CubeFib{\gamma}\star\Id_\Psi$ is the horizontal composite
      $\CubeFib{\Delta},\Psi\to\CubeFib{\Gamma},\Psi$.

%

    \item For a presheaf $\mathcal{F} : \Psh{\Cx}$, we will write
      $\mathcal{F}\parens{\Gamma} \rTo{\ReIxCleave{\psi}[\Gamma]}
      \mathcal{F}\parens{\ReIx{\psi}{\Gamma}}$ for the action of $\mathcal{F}$
      on $\ReIx{\psi}{\Gamma}\rTo^{\Cleave{\psi}{\Gamma}}\Gamma$.

  \end{enumerate}
\end{notation}

\NewDocumentCommand\CxLvl{D<>{\Cx}}{#1_{\Lev}}

\paragraph{Contexts and levels}

$\Cx$ is the category of contexts; but the judgments of \XTT{} are also
parameterized in a universe level. Therefore, we will need to work in
presheaves not on $\Cx$ but rather on $\CxLvl\triangleq\Cx\times\Lev$. Recall
that $\Lev$ is the category of \emph{universe levels} and is defined by
$\Lev = \OpCat{\omega}$.

\paragraph{Types and elements}

Next, we require a presheaf $\CwfTy$ in $\Psh{\CxLvl}$, yielding at each
$(\Gamma,k)$ the set $\CwfTy[k]{\Gamma}$ of types of level $k$ over $\Gamma$;
we require each fiber $\CwfTy[k]{\Gamma}$ to be $(k+1)$-small in the ambient
set theory.  As a matter of notation, we write $\TyLift{k}{l}{A}$ for level
restrictions, mirroring the concrete syntax of \XTT.

Then, we require a \emph{dependent presheaf} $\CwfEl:\Psh{\int\CwfTy}$, writing
$\CwfEl{\Gamma}{A}$ for the fiber over $A\in\CwfTy[k]{\Gamma}$; in order to
justify the rules which make elements of types and their level liftings
definitionally interchangeable, we require that functorial action of maps
$(\Gamma,l,\TyLift{k}{l}A) \rTo (\Gamma,k,A)$ in $\int\CwfTy$ on the dependent
presheaf $\CwfEl$ must be strict identities. Consequently, we have
$\CwfEl{\Gamma}{A} = \CwfEl{\Gamma}{\TyLift{k}{l}{A}}$, following \Cite{sterling:2018:gat}.

By taking a dependent sum, it is possible to regard $\CwfEl$ as an element of
the slice $\Psh{\CxLvl}/\CwfTy$; this perspective will be profitable when defining
the notion of a \emph{context comprehension}.

\paragraph{Context comprehension}

Writing $\CwfEl\rTo^{\pi}\CwfTy:\Psh{\CxLvl}$ for the evident
projection of types from elements, we follow
\Cite{fiore:2012,awodey:2018:natural-models} in requiring that $\pi$ be
equipped with a choice of representable pullbacks along natural transformations
out of representable objects (contexts):
\begin{diagram}
  \DPullback*{
    \Yo*{\Gamma.A, k}
  }{\CwfEl}{\Yo*{\Gamma, k}}{\CwfTy}{\pi}{\YoEmb{A}}{\YoEmb{\Var}}{\Yo{\Proj}}
\end{diagram}
We require the equation $\Gamma.A = \Gamma.\TyLift{k}{l}{A}$ in the objects of $\Cx$.

Given a substitution
$\Delta\rTo^\gamma\Gamma$ and an element
$N\in\CwfEl{\Delta}{\ReIx{\gamma}{A}}$, we can form the extended substitution
$\Snoc{\gamma}{N}$ using the universal property of the pullback:
\begin{diagram}
  \DRow \Yo*{\Delta,k}
  \\
  \DBlock
  \DRow \rdDotsto~{\Yo{\Snoc{\gamma}{N}}}\rdTo(4,2)^{\YoEmb{N}}\rdTo(2,4)_{\Yo{\gamma}}
  \\
  \DBlock
  \DRow \JONpbk{\Yo*{\Gamma.A,k}} & \rTo_{\YoEmb{\Var}} & \CwfEl
  \\
  \DRow \dTo_{\Yo{\Proj}} & & \dTo_{\pi}
  \\
  \DRow \Yo*{\Gamma,k} & \rTo_{\YoEmb{A}} & \CwfTy
  \DDone
\end{diagram}

\paragraph{Constraint comprehension}

Let $\CwfDim:\Psh{\Cx}$ be the presheaf of dimensions, taking a context
$\Gamma:\Cx$ to the set of dimensions $\CubeFib{\Gamma}\rTo^{r}\brackets{i}$.
Then, define $\CwfProp=\CwfDim\times\CwfDim$, writing
$(r=s)\in\CwfProp{\Gamma}$ for the pair of $r,s\in\CwfDim{\Gamma}$; we will
follow the syntax of \XTT{} in using $\xi$ to range over an element of
$\CwfProp{\Gamma}$.

\begin{lemma}
  The split fibration $\CubeFib$ forces the
  diagonal $\CwfDim\rTo^\delta\CwfProp$ to be representable in the same sense as
  above; schematically:
  \begin{diagram}
    \DRow \Yo{\Delta}
    \\
    \DBlock
    \DRow \rdDotsto~{\Yo*{\gamma.\xi}}\rdTo(4,2)^{\YoEmb{s}}\rdTo(2,4)_{\Yo{\gamma}}
    \\
    \DBlock
    \DRow \JONpbk{\Yo*{\Gamma.\xi}} & \rTo & \CwfDim
    \\
    \DRow \dTo_{\Yo{\Wk{\xi}}} & & \dTo_{\delta}
    \\
    \DRow \Yo{\Gamma} & \rTo_{\YoEmb{\xi}} & \CwfProp
    \DDone
  \end{diagram}
\end{lemma}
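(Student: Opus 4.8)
The plan is to manufacture the comprehension $\Gamma.\xi$ directly from the split fibration $\CubeFib$ together with the fact, recorded above, that $\AugCube$ has all equalizers. Write $\xi = (r = s)$ and unfold $\CwfDim{\Gamma} = \Hom[\AugCube]{\CubeFib{\Gamma}}{\brackets{i}}$, so that $r$ and $s$ are a parallel pair of arrows $\CubeFib{\Gamma} \rTo \brackets{i}$ in $\AugCube$. Let $E \rTo^{e} \CubeFib{\Gamma}$ be their equalizer; note that when $\braces{r, s} = \braces{0, 1}$ this forces $E = \bot$, so that $\Gamma.\xi$ becomes the inconsistent context living over the absurd constraint $0 = 1$, exactly as intended. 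I then set $\Gamma.\xi \triangleq \ReIx{e}{\Gamma}$ using the splitting, with $\Wk{\xi} \triangleq \Cleave{e}{\Gamma} : \Gamma.\xi \rTo \Gamma$ the chosen Cartesian lift of $e$; the second leg $\Yo{\Gamma.\xi} \rTo \CwfDim$ of the square is the dimension $\ReIx{\Wk{\xi}}{r} \in \CwfDim{\Gamma.\xi}$, which coincides with $\ReIx{\Wk{\xi}}{s}$ precisely because $\Wk{\xi}$ lies over the equalizer.

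For the universal property, unfold a cone with vertex $\Yo{\Delta}$ into the cospan formed by $\delta : \CwfDim \rTo \CwfProp$ and $\YoEmb{\xi} : \Yo{\Gamma} \rTo \CwfProp$: by Yoneda this amounts to a substitution $\gamma : \Delta \rTo \Gamma$ together with a dimension $t \in \CwfDim{\Delta}$ satisfying $(t, t) = \ReIx{\gamma}{\xi} = (\ReIx{\gamma}{r}, \ReIx{\gamma}{s})$; equivalently, it is just a $\gamma$ with $r \circ \CubeFib{\gamma} = s \circ \CubeFib{\gamma}$ in $\AugCube$, the dimension $t$ being forced to equal $\ReIx{\gamma}{r}$. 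The universal property of the equalizer yields a unique $u : \CubeFib{\Delta} \rTo E$ with $e \circ u = \CubeFib{\gamma}$; since $\gamma$ lies over $e \circ u$ and $\Wk{\xi}$ is the chosen Cartesian lift of $e$, the universal property of Cartesian morphisms then produces a unique map $\gamma.\xi \triangleq \CartUniMap<u>{\gamma}{\Wk{\xi}} : \Delta \rTo \Gamma.\xi$ lying over $u$ with $\Wk{\xi} \circ \parens{\gamma.\xi} = \gamma$; this is the required mediating arrow. Uniqueness follows by projecting any competitor down along $\CubeFib$, where it must agree with $u$ by the equalizer, and then invoking Cartesian uniqueness upstairs.

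Since the comprehension structure is required to be an actual \emph{choice} of representable pullbacks, I would also fix once and for all a coherent selection of equalizers in $\AugCube$: this is unproblematic given its explicit syntactic presentation, where the equalizer of $r, s : \Psi \rTo \brackets{i}$ is $\Psi$ itself when $r \equiv s$, the evident substitution identifying two distinct dimension variables or sending a variable to an endpoint when $r \ne s$ are not both constants, and $\bot \rTo \Psi$ when $\braces{r, s} = \braces{0, 1}$. Combined with the splitting $\Cleave$ for the lift, this makes $\Gamma \mapsto \Gamma.\xi$ stable under substitution on the nose. The underlying diagram chase presents no real difficulty; the only point demanding attention is this coherence bookkeeping, which is simply the price of working with a genuinely algebraic, split notion of cwf.
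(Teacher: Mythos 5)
Your construction is essentially the paper's: define $\Gamma.\xi$ as the Cartesian lift along the equalizer of $\xi$ in $\AugCube$, verify commutativity via the equalizer, and establish the universal property (checked at representables) by first invoking the universal property of the equalizer to get a map in the base, then the universal property of the Cartesian lift to get the mediating morphism upstairs, with uniqueness following from both. Your closing remark about pinning down a coherent, syntactic choice of equalizers is a sensible precaution but not substantively different, since the paper already treats $\AugCube$ via its syntactic presentation where equalizers are fixed by extending $\Psi$ with equations.
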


\begin{proof}
  The context $\Gamma.\xi$ is obtained from the equalizer of $\xi$ in $\AugCube$, using the splitting of the fibration
  $\CubeFib$:
   \begin{diagram}
     \Gamma.\xi & \rTo^{\Cleave{\Wk{\xi}}{\Gamma}} & \Gamma
     \\
     \dProjMapsto_{\CubeFib}
     &&
     \dProjMapsto_{\CubeFib}
     \\
     \CubeFib{\Gamma}.\xi & \rTo^{\Wk{\xi}} & \CubeFib{\Gamma}
     & \pile{\rTo^{\xi_0}\\\rTo_{\xi_1}}
     & \brackets{i}
   \end{diagram}

   We can see that $\Yo*{\Gamma.\xi}$ is indeed the pullback below:
   \begin{diagram}
     \DBlock
     \DRow \JONpbk{\Yo*{\Gamma.\xi}} & \rTo^{\YoEmb{\xi_0\circ\Wk{\xi}}} & \CwfDim
     \\
     \DRow \dTo^{\Yo*{\Cleave{\Wk{\xi}}{\Gamma}}} & & \dTo_{\delta}
     \\
     \DRow \Yo{\Gamma} & \rTo_{\YoEmb{\xi}} & \CwfProp
     \DDone
   \end{diagram}

   To see that the diagram commutes, we just verify that $\YoEmb{\xi_0\circ\Wk{\xi}=\xi_0\circ\Wk{\xi}} =
   \YoEmb{\xi}\circ\Yo*{\Cleave{\Wk{\xi}}{\Gamma}}$, which is the same as to
   say that $\xi_0\circ\Wk{\xi} = \xi_1\circ\Wk{\xi}$; but this is just the
   fact that $\Wk{\xi}$ is the equalizer of $\xi_0,\xi_1$. Next, we check the
   universal property of the pullback; because limits in $\Psh{\Cx}$ are formed
   pointwise (as in all presheaf categories), it suffices to check universality
   at representable objects only.

   Fix $\Yo{\Delta}\rTo^{\Yo{\gamma}}\Yo{\Gamma}$ and
   $\Yo{\Delta}\rTo^{\YoEmb{s}}\CwfDim$ such that $\delta\circ\YoEmb{s} =
   \YoEmb{\xi}\circ\Yo{\gamma}$; we need to choose a unique morphism
   $\Yo{\Delta}\rDotsto^{\Yo{\eta}}\Yo{\Gamma.\xi}$ such that
   $\Yo*{\Cleave{\xi}{\Gamma}}\circ\Yo{\eta} = \Yo{\gamma}$ and
   $\YoEmb{\xi_0}\circ\Yo{\eta} = \YoEmb{s}$. Unraveling the Yoneda paperwork,
   we have assumed that $\xi_0\circ\CubeFib{\gamma}=s=\xi_1\circ\CubeFib{\gamma}$
   and we want to find $\Delta\rDotsto^\eta\Gamma.\xi$ such that the following
   triangles commute in $\Cx$ and $\AugCube$ respectively:
   \begin{mathpar}
     (1) \qquad
     \begin{diagram}
       \Delta &\rTo^{\eta} & \Gamma.\xi
       \\
       &\rdTo_{\gamma} &\dTo_{\Cleave{\Wk{\xi}}{\Gamma}}
       \\
       &&\Gamma
     \end{diagram}
     \and
     \begin{diagram}
       \CubeFib{\Delta}
       \\
       \dTo^{\CubeFib{\eta}} &\rdTo^{s}
       \\
       \CubeFib{\Gamma}.\xi & \rTo_{\xi_0\circ\Wk{\xi}} & \brackets{i}
     \end{diagram}
     \qquad
     (2)
   \end{mathpar}

   First, observe that because
   $\xi_0\circ\CubeFib{\gamma}=\xi_1\circ\CubeFib{\gamma}$, the universal
   property of the equalizer guarantees a unique map
   $\CubeFib{\Delta}\rTo^{\CubeFib{\gamma}.\xi}\CubeFib{\Gamma}.\xi$ with the
   same property:
   \begin{diagram}
     \CubeFib{\Gamma}.\xi & \rTo^{\Wk{\xi}} & \CubeFib{\Gamma} & \pile{\rTo^{\xi_0}\\\rTo_{\xi_1}} & \brackets{i}
     \\
     &\luDotsto~{\CubeFib{\gamma}.\xi} & \uTo_{\CubeFib{\gamma}}
     \\
     && \CubeFib{\Delta}
   \end{diagram}

   Using $\CubeFib{\gamma}.\xi$ from above, we obtain $\eta$ from the universal property of
   the Cartesian lifting $\Gamma.\xi\rTo^{\Cleave{\Wk{\xi}}{\Gamma}}\Gamma$:
   \begin{mathpar}
     \begin{diagram}
       \Delta
       \\
       & \rdDotsto_{\eta} \rdTo(4,2)^{\gamma}
       \\
       &&\Gamma.\xi &\rTo_{\Cleave{\Wk{\xi}}{\Gamma}} &\Gamma
     \end{diagram}
     \mbox{lying over}
     \begin{diagram}
       \CubeFib{\Delta}
       \\
       & \rdTo_{\CubeFib{\gamma}.\xi} \rdTo(4,2)^{\CubeFib{\gamma}}
       \\
       &&\CubeFib{\Gamma}.\xi &\rTo_{\Wk{\xi}} & \CubeFib{\Gamma}
     \end{diagram}
   \end{mathpar}

   We therefore see immediately that triangle (1) commutes; to see that
   triangle (2) commutes, we calculate: $\xi_0\circ\Wk{\xi}\circ\CubeFib{\eta}
   = \xi_0\circ\Wk{\xi}\circ\CubeFib{\gamma}.\xi = \xi_0\circ\CubeFib{\gamma} =
   s$.
\end{proof}

In order to model the collapse of the typing and equality judgments under the
contraint $0=1$ in \XTT{}'s syntax, we will \emph{require} that the contexts
$\Gamma.0=1$ and $\Gamma.1=0$ are \emph{initial} in $\Cx$; this implies
initiality in the fibration $\CubeFib$, because the equalizer
$\CubeFib{\Gamma}.0=1$ is the initial object in $\AugCube$.

\begin{notation}[Constraint weakening]
  Because we will use it frequently, we will often write $\Gamma.\xi
  \rTo^{\Wk{\xi}} \Gamma$ for the Cartesian lifting
  $\Cleave{\Wk{\xi}}{\Gamma}$.
\end{notation}

\begin{notation}[Constraint lifting]
  When $\Delta\rTo^\gamma\Gamma$, we write
  $\Delta.\ReIx{\gamma}{\xi}\rTo^{\Lift{\Wk{\xi}}{\gamma}}\Gamma.\xi$
  for $(\gamma\circ\Wk*{\ReIx{\gamma}{\xi}}).\xi$.
\end{notation}

We implicitly lift everything to do with dimensions and constraints into
$\Psh{\CxLvl}$, by reindexing silently along the projection $\CxLvl\rProjto\Cx$.

\paragraph{Boundary separation}
\NewDocumentCommand\Cov{g}{\mathbf{K}_{\partial}\IfValueT{#1}{\parens{#1}}}
\NewDocumentCommand\Two{}{\mathbf{2}}

To characterize models of \XTT{}, we need to ensure that every type and every
element is totally determined by its boundary with respect to the
dimension context. A simple way to state this requirement is as a
\emph{separation} condition with respect to a particular coverage on the
category of contexts $\Cx$. We define the coverage $\Cov$ on $\Cx$ by taking
the constraint weakenings $\braces{\Wk*{r=\e}}_\e$ to constitute a covering family for each
dimension $r$:
\begin{mathpar}
  \Cov{\Gamma} \ni
  \braces*{
      \Gamma.r=\e \rTo^{\Wk*{r=\e}} \Gamma
  }_{\e\in\Two}
  \quad
  \parens{r\in\CwfDim{\Gamma}}
\end{mathpar}

\begin{lemma}
  The family of sets $\Cov$ is a coverage on $\Cx$.
\end{lemma}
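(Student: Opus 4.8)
The plan is to verify the one axiom required of a coverage in the sense of Johnstone: given a covering family of an object $\Gamma$ and an arbitrary morphism $\gamma\colon\Delta\to\Gamma$, one must exhibit a covering family of $\Delta$ each of whose members, postcomposed with $\gamma$, factors through some member of the original family. Every covering family in $\Cov$ has the form $\braces*{\Wk{r=\e}\colon\Gamma.r=\e\to\Gamma}_{\e\in\Two}$ for some $r\in\CwfDim{\Gamma}$, so it suffices, for each $\gamma\colon\Delta\to\Gamma$, to produce a single dimension $s\in\CwfDim{\Delta}$ together with factorizations of $\gamma\circ\Wk{s=\e}$ through $\Wk{r=\e}$.

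The obvious choice is $s \triangleq \ReIx{\gamma}{r}$, the reindexing of $r$ along $\gamma$ through the presheaf structure of $\CwfDim$ (concretely, $r\circ\CubeFib{\gamma}$); being an element of $\CwfDim{\Delta}$, it determines a covering family $\braces*{\Wk{s=\e}\colon\Delta.s=\e\to\Delta}_\e$. For the factorization, I would first note that $\ReIx{\gamma}{(r=\e)} = (s=\e)$ in $\CwfProp{\Delta}$: since $\CwfProp=\CwfDim\times\CwfDim$ and the constant dimension $\e$ is preserved by every reindexing, $\ReIx{\gamma}{(r=\e)} = (\ReIx{\gamma}{r}=\ReIx{\gamma}{\e}) = (s=\e)$. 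Hence $\Delta.(s=\e)$ coincides with the constraint comprehension $\Delta.\ReIx{\gamma}{(r=\e)}$, so the constraint lifting morphism $\Lift{\Wk{r=\e}}{\gamma}\colon\Delta.(s=\e)\to\Gamma.(r=\e)$ is defined; unwinding its definition, $\Lift{\Wk{r=\e}}{\gamma} = (\gamma\circ\Wk{s=\e}).(r=\e)$, and the defining property of the Cartesian lifting $\Cleave{\Wk{r=\e}}{\Gamma}$ gives $\Wk{r=\e}\circ\Lift{\Wk{r=\e}}{\gamma} = \gamma\circ\Wk{s=\e}$. This exhibits the required factorization, completing the verification.

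The argument is essentially bookkeeping. The only point demanding a moment's attention is that $\Lift{\Wk{r=\e}}{\gamma}$ makes the relevant triangle commute, which follows immediately from the notation fixed for constraint lifting together with the universal property of the Cartesian lifting $\Cleave{\Wk{r=\e}}{\Gamma}$ established in the representability lemma for $\delta\colon\CwfDim\to\CwfProp$. I do not expect any genuine obstacle: unlike a Grothendieck topology, a coverage imposes no maximality, transitivity, or identity-arrow conditions, so no further closure properties need to be checked.
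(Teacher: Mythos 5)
Your proposal is correct and follows the same route as the paper's proof: you pull back the covering family for $r$ along $\gamma$ to the covering family for $\ReIx{\gamma}{r}$, and the factorization is witnessed by the constraint-lifting morphism $\Lift{\Wk*{r=\e}}{\gamma}$, whose defining property (via the representability of the diagonal $\CwfDim\to\CwfProp$) gives the commuting square. Your version spells out a couple of points the paper leaves implicit — notably that $\ReIx{\gamma}{(r=\e)} = (\ReIx{\gamma}{r}=\e)$ since constants are preserved under reindexing — but the substance is identical.
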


\begin{proof}
  To see that $\Cov$ is in fact a coverage, we fix $\Delta\rTo^\gamma\Gamma$ and
  observe that any covering family $\braces{\Gamma.\xi_\e\rTo^{\Wk*{\xi_\e}}\Gamma}_{\e\in\Two}$ can
  be pulled back to obtain a new covering family $\braces{\Delta.\ReIx{\gamma}{\xi_\e}
  \rTo^{\Wk*{\ReIx{\gamma}{\xi_\e}}}\Delta}_{\e\in\Two}$ such that each composite
  $\gamma\circ\Wk*{\ReIx{\gamma}{\xi_\e}}$ factors through
  $\Gamma.\xi_\e\rTo^{\Wk*{\xi_\e}}\Gamma$:
  \begin{diagram}
    \Delta.\ReIx{\gamma}{\xi_\e}
    & \rDotsto^{{\color{Red}\Lift{\Wk*{\xi_\e}}{\gamma}}} & \Gamma.\xi_\e
    \\
    \dTo^{\Wk*{\ReIx{\gamma}{\xi_\e}}} & & \dTo_{\Wk*{\xi_\e}}
    \\
    \Delta & \rTo_{\gamma} & \Gamma
  \end{diagram}
\end{proof}

This coverage lifts immediately along the projection $\CxLvl\rProjto\Cx$ to a
coverage on $\CxLvl$; because it will not result in ambiguity, we leave this
lifting implicit.

\begin{definition}[Separation]\label{def:separation}

  Given a coverage $\mathbf{K}$ on a category $\mathcal{C}$, a presheaf $F:\Psh{\Cx}$ is
  $\mathbf{K}$-\emph{separated} when, for any elements $a,b\in F(\Gamma)$ and
  covering family $\braces{\Delta_i\rTo^{\gamma_i}\Gamma}_{i\in I} \in\mathbf{K}(\Gamma)$, if we have
  $\ReIx{\gamma_i}{a}=\ReIx{\gamma_i}{b}\in F(\Delta_i)$ for
  each $i\in I$, then $a=b\in F(\Gamma)$.

\end{definition}

\begin{definition}[Boundary separation]\label{def:boundary-separation}

  We say that a cwf has \emph{boundary separation} when the presheaves
  $\CwfTy,\CwfEl:\Psh{\CxLvl}$ are $\Cov$-separated.

\end{definition}

\subsubsection{Kan operations: coercion and composition}

\begin{definition}[Regular coercion structure]\label{def:regular-coercion-structure}

  A cwf has \emph{regular coercion structure} iff for every type
  $A\in\CwfTy[n]{\ReIx{\Wk{\imath}}{\Gamma}}$ over $\Psi,i$ and dimensions
  $r,r'\in\CwfDim{\Gamma}$ and element
  $M\in\CwfEl{\Gamma}{\ReIxCleave*{r/i}[\ReIx{\Wk{\imath}}{\Gamma}]{A}}$, there is
  an element
  $\CwfCoe{i.A}{r}{r'}{M}\in\CwfEl{\Gamma}{\ReIxCleave*{r'/i}[\ReIx{\Wk{\imath}}{\Gamma}]{A}}$
  which has the following properties:

  \begin{itemize}
    \item \emph{Adjacency.} If $r=r'$ then $\CwfCoe{i.A}{r}{r'}{M} = M$.
    \item \emph{Regularity.} If $A=\ReIxCleave{\Wk{\imath}}[\Gamma]{A'}$ for some $A'\in\CwfTy[n]{\Gamma}$, then $\CwfCoe{i.A}{r}{r'}{M}=M$.

    \item \emph{Level restriction.} The equation
      $\CwfCoe{i.\TyLift{k}{l}{A}}{r}{r'}{M} = \CwfCoe{i.A}{r}{r'}{M}$.

    \item \emph{Naturality.} For $\Delta\rTo^\gamma\Gamma$ we have
      $\ReIx{\gamma}{\CwfCoe{i.A}{r}{r'}{M}} =
      \CwfCoe{i.\ReIx*{\Lift{\Wk{\imath}}{\gamma}}{A}}{\ReIx{\gamma}{r}}{\ReIx{\gamma}{r'}}{\ReIx{\gamma}{M}}$.

  \end{itemize}
\end{definition}

\begin{definition}[Regular homogeneous composition structure]\label{def:regular-hcom-structure}

  We say that $\Cx$ has \emph{regular homogeneous composition structure} iff,
  for each $A\in\CwfTy[n]{\Gamma}$ and $r,r',s\in\CwfDim{\Gamma}$ and
  $M\in\CwfEl{\Gamma}{A}$ and
  $\etc{N_\e\in\CwfEl{\ReIx{\Wk{\jmath}}{(\Gamma.s=\e)}}{\ReIxCleave{\Wk{\jmath}}{\ReIx{
    \Wk*{s=\e}}{A}}}}$ for fresh $j$ such that $\etc{\ReIxCleave*{r/j}{N_\e}=M}$, we have an element
  $\CwfHcom{A}{r}{r'}{M}{\etcsys[\e]{s}{j.N_\e}}$ satisfying the following
  conditions:

  \begin{itemize}
    \item \emph{Adjacency.} If $r=r'$ then $\CwfHcom{A}{r}{r'}{M}{\etcsys[\e]{s}{j.N_\e}}=M$; moreover, if $s=\e$, then $\CwfHcom{A}{r}{r'}{M}{\etcsys[\e]{s}{j.N_\e}} = \ReIxCleave*{r'/j}{N_\e}$.
    \item \emph{Regularity.} If we have $\etc{N_\e = \ReIxCleave{\Wk{\jmath}}{N'_\e}}$ for
      some $N'$, then we have the equation $\CwfHcom{A}{r}{r'}{M}{\etcsys[\e]{s}{j.N_\e}} = M$.
    \item \emph{Naturality.} For $\Delta\rTo^\gamma\Gamma$, we require the following naturality conditions:
      \begin{mathpar}
        \ReIx{\gamma}{
          \CwfHcom{A}{r}{r'}{M}{\etcsys[\e]{s}{j.N_\e}}
        } =
        \CwfHcom{
          \ReIx{\gamma}{A}
        }{
          \ReIx{\gamma}{r}
        }{
          \ReIx{\gamma}{r'}
        }{
          \ReIx{\gamma}{M}
        }{
          j.
          \ReIx*{
            \Lift{\Wk{\jmath}}{
              \Lift{\Wk*{s=\e}}{\gamma}
            }
          }{N_\e}
        }
      \end{mathpar}

  \end{itemize}

\end{definition}

\begin{notation}[Heterogeneous composition]

  When a cwf has coercion and homogeneous composition, we write its
  \emph{heterogeneous composition} using the following definitional extension:
  \begin{align*}
    &\CwfCom{i.A}{r}{r'}{M}{
      \etcsys[\e]{s}{i.N_\e}
    }
    \\
    &\triangleq
    \CwfHcom{
      \ReIxCleave*{r'/j}{A}
    }{r}{r'}{
      \parens*{\CwfCoe{i.A}{r}{r'}{M}}
    }{
      \etcsys[\e]{s}{
        i.\CwfCoe{i.A}{i}{r'}{N_\e}
      }
    }
  \end{align*}
\end{notation}

\subsubsection{Closure under type-theoretic connectives}

\begin{definition}[Booleans]\label{def:booleans}
  A cwf has the \emph{booleans} when it is equipped with the following structure:
  \begin{itemize}

    \item \emph{Formation.} Types $\CwfBool\in\CwfTy[n]{\Gamma}$ for all $\Gamma,n$.

    \item \emph{Introduction.} Elements $\CwfTrue\in\CwfEl{\Gamma}{\CwfBool}$ and $\CwfFalse\in\CwfEl{\Gamma}{\CwfBool}$.

    \item \emph{Elimination.} If $C\in\CwfTy[n]{\Gamma.\CwfBool}$ and $M\in\CwfEl{\Gamma}{\CwfBool}$ and $N_0\in\CwfEl{\Gamma}{\ReIx{\Snoc{\Id}{\CwfTrue}}{C}}$ and $N_1\in\CwfEl{\Gamma}{\ReIx{\Snoc{\Id}{\CwfFalse}}{C}}$, an element $\CwfIf{C}{M}{N_0}{N_1}\in\CwfEl{\Gamma}{\ReIx{\Snoc{\Id}{M}}{C}}$.

    \item \emph{Computation.} The following equations:
      \begin{mathpar}
        \CwfIf{C}{\CwfTrue}{N_0}{N_1} = N_0
        \and
        \CwfIf{C}{\CwfFalse}{N_0}{N_1} = N_1
      \end{mathpar}

    \item \emph{Level restriction.} The following two equations:
      \begin{mathpar}
        \TyLift{k}{l}{\CwfBool} = \CwfBool
        \and
        \CwfIf{\TyLift{k}{l}{C}}{M}{N_0}{N_1} =
        \CwfIf{C}{M}{N_0}{N_1}
      \end{mathpar}

    \item \emph{Naturality.} For $\Delta\rTo^\gamma\Gamma$, the following naturality equations:
      \begin{mathpar}
        \ReIx{\gamma}{\CwfBool} = \CwfBool
        \and
        \ReIx{\gamma}{\CwfTrue} = \CwfTrue
        \and
        \ReIx{\gamma}{\CwfFalse} = \CwfFalse
        \and
        \ReIx{\gamma}{\CwfIf{C}{M}{N_0}{N_1}} =
        \CwfIf{
          \ReIx{\CwfLift{\gamma}}{C}
        }{
          \ReIx{\gamma}{M}
        }{
          \ReIx{\gamma}{N_0}
        }{
          \ReIx{\gamma}{N_1}
        }
      \end{mathpar}

  \end{itemize}
\end{definition}

\begin{definition}[Dependent function types]\label{def:dependent-function-types}

  A cwf has \emph{dependent function types} when it is equipped with the following
  structure:
  \begin{itemize}

    \item \emph{Formation.} For each type $A\in\CwfTy[n]{\Gamma}$ and family
      $B\in\CwfTy[n]{\Gamma.A}$, a type $\CwfPi{A}{B}\in\CwfTy[n]{\Gamma}$.

    \item \emph{Introduction.} For each $M\in\CwfEl{\Gamma.A}{B}$, an element
      $\CwfLam[A][B]{M}\in\CwfEl{\Gamma}{\CwfPi{A}{B}}$.

    \item \emph{Elimination.} For each $M\in\CwfEl{\Gamma}{\CwfPi{A}{B}}$ and
      $N\in\CwfEl{\Gamma}{A}$ an element
      $\CwfApp[A][B]{M}{N}\in\CwfEl{\Gamma}{\ReIx{\Snoc{\Id}{N}}{B}}$.

    \item \emph{Computation.} For each $M\in\CwfEl{\Gamma.A}{B}$ and
      $N\in\CwfEl{\Gamma}{A}$, the equation $\CwfApp[A][B]{\CwfLam[A][B]{M}}{N} =
      \ReIx{\Snoc{\Id}{N}}{M}$.

    \item \emph{Unicity.} For each $M\in\CwfEl{\Gamma}{\CwfPi{A}{B}}$, the following equation:
      \[
        M =
        \CwfLam[\ReIx{\Proj}{A}][\ReIx{\CwfLift{\Proj}}{B}]{
          \CwfApp[\ReIx{\Proj}{A}][\ReIx{\CwfLift{\Proj}}{B}]{\ReIx{\Proj}{M}}{\Var}
        }
      \]

    \item \emph{Level restriction.} The following equations:
      \begin{mathpar}
        \TyLift{k}{l}{\CwfPi{A}{B}} = \CwfPi{\TyLift{k}{l}{A}}{\TyLift{k}{l}{B}}
        \and
        \CwfLam[\TyLift{k}{l}{A}][\TyLift{k}{l}{B}]{M} = \CwfLam[A][B]{M}
        \and
        \CwfApp[\TyLift{k}{l}{A}][\TyLift{k}{l}{B}]{M}{N} = \CwfApp[A][B]{M}{N}
      \end{mathpar}

    \item \emph{Naturality.} We have the following naturality conditions for each
      $\Delta\rTo^\gamma\Gamma$:
      \begin{mathpar}
        \ReIx{\gamma}{\CwfPi{A}{B}} = \CwfPi{\ReIx{\gamma}{A}}{\ReIx{\CwfLift{\gamma}}{B}}
        \and
        \ReIx{\gamma}{\CwfLam[A][B]{M}} = \CwfLam[\ReIx{\Gamma}{A}][\ReIx{\CwfLift{\gamma}}{B}]{\ReIx{\CwfLift{\gamma}}{M}}
        \and
        \ReIx{\gamma}{\CwfApp[A][B]{M}{N}} = \CwfApp[\ReIx{\gamma}{A}][\ReIx{\CwfLift{\gamma}}{B}]{\ReIx{\gamma}{M}}{\ReIx{\gamma}{N}}
      \end{mathpar}

    \item \emph{Coercion.} When $\Gamma\rProjMapsto^\CubeFib \Psi,i$ and
      $\IsDim<\Psi>{r,r'}$ and
      $M\in\CwfEl{\ReIx*{r/i}{\Gamma}}{\ReIxCleave*{r/i}{\CwfPi{A}{B}}}$, we require the following
      equation:
      \begin{mathpar}
        \CwfCoe{i.\CwfPi{A}{B}}{r}{r'}{M}
        =
        \CwfLam[
          \ReIxCleave*{r'/i}{A}
        ][
          \ReIxCleave*{r'/i}{B}
        ]{
          \CwfCoe{i.
            \ReIx{
              \Snoc{\Id}{
                \CwfCoe{i.\ReIx{\Proj}{A}}{r'}{i}{\Var}
              }
            }{B}
          }{r}{r'}{
            \CwfApp[\ReIx{\Proj}{\ReIxCleave*{r/i}{A}}][\ReIx{\CwfLift{\Proj}}{\ReIxCleave*{r/i}{B}}]{
              \ReIx{\Proj}{M}
            }{\CwfCoe{i. \ReIx{\Proj}{A}}{r'}{r}{\Var}}
          }
        }
      \end{mathpar}

  \end{itemize}
\end{definition}

\begin{definition}[Dependent pair types]\label{def:dependent-pair-types}

  A cwf has \emph{dependent pair types} when it is equipped with the following
  structure:

  \begin{itemize}

    \item \emph{Formation.} For each type $A\in\CwfTy[n]{\Gamma}$ and family
      $B\in\CwfTy[n]{\Gamma.A}$, a type $\CwfSg{A}{B}\in\CwfTy[n]{\Gamma}$.

    \item \emph{Introduction.} For each $M\in\CwfEl{\Gamma}{A}$ and
      $N\in\CwfEl{\Gamma}{\ReIx{\Snoc{\Id}{M}}{B}}$, an element $\CwfPair[A][B]{M}{N}\in\CwfEl{\Gamma}{\CwfSg{A}{B}}$.

    \item \emph{Elimination.} For each $M\in\CwfEl{\Gamma}{\CwfSg{A}{B}}$,
      elements $\CwfFst[A][B]{M}\in\CwfEl{\Gamma}{A}$ and
      $\CwfSnd[A][B]{M}\in\CwfEl{\Gamma}{\ReIx{[\Id,\CwfFst[A][B]{M}]}{B}}$.

    \item \emph{Computation.} For $M\in\CwfEl{\Gamma}{A}$ and
      $N\in\CwfEl{\Gamma}{\ReIx{[\Id,M]}{B}}$, the following equations:
      \begin{mathpar}
        \CwfFst[A][B]{\CwfPair[A][B]{M}{N}} = M
        \and
        \CwfSnd[A][B]{\CwfPair[A][B]{M}{N}} = N
      \end{mathpar}

    \item \emph{Unicity.} For $M\in\CwfEl{\Gamma}{\CwfSg{A}{B}}$, the equation
      $M=\CwfPair[A][B]{\CwfFst[A][B]{M}}{\CwfSnd[A][B]{M}}$.

    \item \emph{Level restriction.} The following equations:
      \begin{mathpar}
        \TyLift{k}{l}{\CwfSg{A}{B}} = \CwfSg{\TyLift{k}{l}{A}}{\TyLift{k}{l}{B}}
        \and
        \CwfPair[\TyLift{k}{l}{A}][\TyLift{k}{l}{B}]{M}{N} = \CwfPair[A][B]{M}{N}
        \and
        \CwfFst[\TyLift{k}{l}{A}][\TyLift{k}{l}{B}]{M} = \CwfFst[A][B]{M}
        \and
        \CwfSnd[\TyLift{k}{l}{A}][\TyLift{k}{l}{B}]{M} = \CwfSnd[A][B]{M}
      \end{mathpar}

    \item \emph{Naturality.} For substitutions $\Delta\rTo^\gamma\Gamma$ the
      following naturality equations:
      \begin{mathpar}
        \ReIx{\gamma}{\CwfSg{A}{B}} = \CwfSg{\ReIx{\gamma}{A}}{\ReIx{\CwfLift{\gamma}}{B}}
        \and
        \ReIx{\gamma}{\CwfPair[A][B]{M}{N}} =
        \CwfPair[\ReIx{\gamma}{A}][\ReIx{\CwfLift{\gamma}}{B}]{\ReIx{\gamma}{M}}{\ReIx{\gamma}{N}}
        \and
        \ReIx{\gamma}{\CwfFst[A][B]{M}} = \CwfFst[\ReIx{\gamma}{A}][\ReIx{\CwfLift{\gamma}}{B}]{\ReIx{\gamma}{M}}
        \and
        \ReIx{\gamma}{\CwfSnd[A][B]{M}} = \CwfSnd[\ReIx{\gamma}{A}][\ReIx{\CwfLift{\gamma}}{B}]{\ReIx{\gamma}{M}}
      \end{mathpar}

    \item \emph{Coercion.} When $\Gamma\rProjMapsto^\CubeFib \Psi,i$ and
      $\IsDim<\Psi>{r,r'}$ and
      $M\in\CwfEl{\ReIx*{r/i}{\Gamma}}{\ReIxCleave*{r/i}{\CwfSg{A}{B}}}$, we require the following
      equation:
      \begin{mathpar}
        \Rule{
          M_0 \triangleq \CwfCoe{i.A}{r}{r'}{\CwfFst[\ReIxCleave*{r/i}{A}][\ReIxCleave*{r/i}{B}]{M}}
          \\
          M_1 \triangleq
            \CwfCoe{
              i.
              \ReIx{
                \Snoc{\Id}{
                  \CwfCoe{i.A}{r}{i}{\CwfFst[\ReIxCleave*{r/i}{A}][\ReIxCleave*{r/i}{B}]{M}}
                }
              }{B}
            }{r}{r'}{
              \CwfSnd[\ReIxCleave*{r/i}{A}][\ReIxCleave*{r/i}{B}]{M}
            }
        }{
          \CwfCoe{i.\CwfSg{A}{B}}{r}{r'}{M}
          =
          \CwfPair[
            \ReIxCleave*{r'/i}{A}
          ][
            \ReIxCleave*{r'/i}{B}
          ]{M_0}{M_1}
        }
      \end{mathpar}

  \end{itemize}

\end{definition}

\begin{definition}[Dependent path types]\label{def:path-types}

  A cwf has \emph{dependent path types} when it is
  equipped with the following structure:
  \begin{itemize}

    \item \emph{Formation.} For each type $A\in\CwfTy[n]{\ReIx{\Wk{\imath}}{\Gamma}}$ and elements
      $\etc{N_\e\in\CwfEl{\Gamma}{\ReIxCleave*{\e/i}{A}}}$, a type
      $\CwfPath{i.A}{N_0}{N_1}\in\CwfTy[n]{\Gamma}$.

    \item \emph{Introduction.} For each $M\in\CwfEl{\ReIx{\Wk{\imath}}{\Gamma}}{A}$, an element
      $\CwfPathLam[i.A]{M}\in\CwfEl{\Gamma}{\CwfPath{i.A}{\ReIxCleave*{0/i}{M}}{\ReIxCleave*{1/i}{M}}}$.

    \item \emph{Elimination.} For each
      $M\in\CwfEl{\Gamma}{\CwfPath{i.A}{N_0}{N_1}}$ and $r\in\CwfDim{\Gamma}$, an
      element $\CwfPathApp[i.A]{M}{r}\in\CwfEl{\Gamma}{\ReIxCleave*{r/i}{A}}$ satisfying
      the equations $\etc{\CwfPathApp[i.A]{M}{\epsilon}=N_\epsilon}$.

    \item \emph{Computation.} For $M\in\CwfEl{\ReIx{\Wk{\imath}}{\Gamma}}{A}$ and
      $r\in\CwfDim{\Gamma}$, the equation $\CwfPathApp[i.A]{\CwfPathLam[i.A]{i.M}}{r} =
      \ReIxCleave*{r/i}{M}$.

    \item \emph{Unicity.} For $M\in\CwfEl{\Gamma}{\CwfPath{i.A}{N_0}{N_1}}$, the
      equation $M=\CwfPathLam[i.A]{j.\CwfPathApp[i.\ReIxCleave{\Wk{\jmath}}{A}]{\ReIxCleave{\Wk{\jmath}}{M}}{j}}$.

    \item \emph{Level restriction.} The following equations:
      \begin{mathpar}
        \TyLift{k}{l}{\CwfPath{i.A}{N_0}{N_1}} = \CwfPath{i.\TyLift{k}{l}{A}}{N_0}{N_1}
        \and
        \CwfPathLam[i.\TyLift{k}{l}{A}]{M}{r} = \CwfPathLam[i.A]{M}{r}
        \and
        \CwfPathApp[i.\TyLift{k}{l}{A}]{M}{r} = \CwfPathApp[i.A]{M}{r}
      \end{mathpar}

    \item \emph{Naturality.} For $\Delta\rTo^\gamma\Gamma$, the following naturality equations:
      \begin{mathpar}
        \ReIx{\gamma}{\CwfPath{i.A}{N_0}{N_1}} = \CwfPath{i.\ReIx*{\Lift{\Wk{\imath}}{\gamma}}{A}}{\ReIx{\gamma}{N_0}}{\ReIx{\gamma}{N_1}}
        \and
        \ReIx{\gamma}{\CwfPathLam[i.A]{i.M}} = \CwfPathLam[i.\ReIx*{\Lift{\Wk{\imath}}{\gamma}}{A}]{i.\ReIx*{\Lift{\Wk{\imath}}{\gamma}}{M}}
        \and
        \ReIx{\gamma}{\CwfPathApp[i.A]{M}{r}} = \CwfPathApp[i.\ReIx*{\Lift{\Wk{\imath}}{\gamma}}{A}]{\ReIx{\gamma}{M}}{\ReIx{\gamma}{r}}
      \end{mathpar}

    \item \emph{Coercion.} When $\Gamma \rProjMapsto^\CubeFib \Psi,j$ and
      $\IsDim<\Psi>{r,r'}$ and
      $M\in\CwfEl{\ReIx*{r/j}{\Gamma}}{\ReIxCleave*{r/j}{\CwfPath{i.A}{N_0}{N_1}}}$, we require the following
      equation:
      \[
        \CwfCoe{j.\CwfPath{i.A}{N_0}{N_1}}{r}{r'}{M}
        =
        \CwfPathLam[
          i.\ReIxCleave*{r'/j}{A}
        ]{
          i.
          \CwfCom{j.A}{r}{r'}{
            \CwfPathApp[i.\ReIxCleave*{r/j}{A}]{
              \ReIxCleave{\Wk{\imath}}{M}
            }{i}
          }{
            \etcsys[\e]{i}{
              j.\ReIxCleave{\Wk{\jmath}}{N_\e}
            }
          }
        }
      \]

  \end{itemize}
\end{definition}

\begin{definition}[Dependent equality types]

  A cwf which has both boundary separation and dependent path types is said to have
  \emph{dependent equality types}, and we accordingly write $\CwfEq{i.A}{M}{N}$ for
  $\CwfPath{i.A}{M}{N}$.

\end{definition}

\begin{definition}[Universes \`a la Russell]\label{def:universes-russell}

  An algebraic cumulative cwf has universes \`a la Russell iff for all levels
  $k<l$ and context $\Gamma:\Cx$, there is a type
  $\CwfUniv[k]\in\CwfTy[l]{\Gamma}$ such that $\CwfEl{\Gamma}{\CwfUniv[k]} =
  \CwfTy[k]{\Gamma}$. We additionally require the naturality equations
  $\ReIx{\gamma}{\CwfUniv[k]}=\CwfUniv[k]$ and $\TyLift{l}{m}{\CwfUniv[k]} =
  \CwfUniv[k]$.

\end{definition}

\begin{definition}[Type-case]\label{def:type-case}
  An algebraic cumulative cwf has \emph{type-case} iff given the following data,
  \begin{mathpar}
    C\in\CwfTy[l]{\Gamma}
    \and
    X\in\CwfEl{\Gamma}{\CwfUniv[k]}
    \and
    M_\Pi,M_\Sigma\in\CwfEl{\Gamma.\CwfUniv[k].\CwfPi{\Var}{\CwfUniv[k]}}{\ReIx*{\Proj\circ\Proj}{C}}
    \and
    M_{\mathbf{Eq}}\in\CwfEl{
      \Gamma.\CwfUniv[k].\CwfUniv[k].%
      \CwfEq{\_.\CwfUniv[k]}{\ReIx{\Proj}{\Var}}{\Var}.%
      \ReIx*{\Proj\circ\Proj}{\Var}.%
      \ReIx*{\Proj\circ\Proj}{\Var}
    }{
      \ReIx*{
        \Proj\circ\Proj\circ\Proj\circ\Proj\circ\Proj
      }{C}
    }
    \and
    M_{\CwfBool}\in\CwfEl{\Gamma}{C}
    \and
    M_{\mathbf{U}}\in\CwfEl{\Gamma}{C}
  \end{mathpar}
  we have an element
  $\CwfUCase{C}{X}{M_\Pi}{M_\Sigma}{M_{\mathbf{Eq}}}{M_{\CwfBool}}{M_{\mathbf{U}}}\in\CwfEl{\Gamma}{C}$
  such that the following conditions hold:
  \begin{itemize}
    \item \emph{Computation.}
      \begin{mathpar}
        \CwfUCase{C}{
          \CwfPi{A}{B}
        }{M_\Pi}{M_\Sigma}{M_{\mathbf{Eq}}}{M_{\CwfBool}}{M_{\mathbf{U}}}
        =
        \ReIx{
          \Snoc{\Snoc{\Id}{A}}{
            \CwfLam[A][\CwfUniv[k]]{B}
          }
        }{M_\Pi}
        \and
        \CwfUCase{C}{
          \CwfSg{A}{B}
        }{M_\Pi}{M_\Sigma}{M_{\mathbf{Eq}}}{M_{\CwfBool}}{M_{\mathbf{U}}}
        =
        \ReIx{
          \Snoc{\Snoc{\Id}{A}}{
            \CwfLam[A][\CwfUniv[k]]{B}
          }
        }{M_\Sigma}
        \and
        \Rule{
          \eta \triangleq
          \Snoc{
            \Snoc{
              \Snoc{
                \Snoc{
                  \Snoc{\Id}{\ReIxCleave*{0/i}{A}}
                }{
                  \ReIxCleave*{1/i}{A}
                }
              }{
                \CwfPathLam[\_.\CwfUniv[k]]{i.A}
              }
            }{N_0}
          }{N_1}
        }{
          \CwfUCase{C}{
            \CwfEq{i.A}{N_0}{N_1}
          }{M_\Pi}{M_\Sigma}{M_{\mathbf{Eq}}}{M_{\CwfBool}}{M_{\mathbf{U}}}
          =
          \ReIx{\eta}{M_{\mathbf{Eq}}}
        }
        \and
        \CwfUCase{C}{\CwfBool}{M_\Pi}{M_\Sigma}{M_{\mathbf{Eq}}}{M_{\CwfBool}}{M_{\mathbf{U}}}
        = M_{\CwfBool}
        \and
        \CwfUCase{C}{\CwfUniv[l]}{M_\Pi}{M_\Sigma}{M_{\mathbf{Eq}}}{M_{\CwfBool}}{M_{\mathbf{U}}}
        = M_{\mathbf{U}}
      \end{mathpar}

    \item \emph{Level restriction.} The following equation:
      \begin{mathpar}
        \CwfUCase{\TyLift{k'}{l}{C}}{\CwfUniv[l]}{M_\Pi}{M_\Sigma}{M_\mathbf{Eq}}{M_\CwfBool}{M_{\mathbf{U}}}
        =
        \CwfUCase{C}{\CwfUniv[l]}{M_\Pi}{M_\Sigma}{M_\mathbf{Eq}}{M_\CwfBool}{M_{\mathbf{U}}}
      \end{mathpar}

    \item \emph{Naturality.} For $\Delta\rTo^\gamma\Gamma$, the following naturality condition:
      \begin{mathpar}
        \Rule{
          \gamma_{+2} \triangleq
          \Snoc{\Snoc{\gamma\circ \Proj\circ\Proj}{\Var}}{\Var}
          \\
          \gamma_{+5} \triangleq
          \Snoc{\Snoc{\Snoc{\Snoc{\Snoc{\gamma\circ \Proj\circ\Proj\circ\Proj\circ\Proj\circ\Proj}{\Var}}{\Var}}{\Var}}{\Var}}{\Var}
        }{
          \ReIx{\gamma}{\CwfUCase{C}{X}{M_\Pi}{M_\Sigma}{M_{\mathbf{Eq}}}{M_{\CwfBool}}{M_{\mathbf{U}}}}
          =
          \\
          \CwfUCase{
            \ReIx{\gamma}{C}
          }{
            \ReIx{\gamma}{X}
          }{
            \ReIx{\gamma_{+2}}{M_\Pi}
          }{
            \ReIx{\gamma_{+2}}{M_\Sigma}
          }{
            \ReIx{\gamma_{+5}}{M_{\mathbf{Eq}}}
          }{
            \ReIx{\gamma}{M_{\CwfBool}}
          }{
            \ReIx{\gamma}{M_{\mathbf{U}}}
          }
        }
      \end{mathpar}
  \end{itemize}

\end{definition}

\subsection{Syntactic model and initiality}

The cwfs with all the structure described in \cref{sec:cwf-structure} can be
arranged into a category which has an initial object. This is because every
piece of structure that we have defined in \cref{sec:cwf-structure} is
\emph{generalized algebraic} in the sense of
\Cite{cartmell:1978,cartmell:1986}; even the universe structure can be seen to
be generalized algebraic~\Cite{sterling:2018:gat}.
We conjecture (but do not prove) that the syntax of \XTT{} can be used to
construct a cwf (the Lindenbaum-Tarski algebra) which has the universal
property of the initial object.

\paragraph*{Syntactic presentation of augmented cubes}

The syntactic contexts $\Psi$ can be viewed as a particular syntactic
presentation of the category $\AugCube$ of augmented Cartesian cubes, in which
the equalizers are implemented formally by extending $\Psi$ with equations.

\paragraph*{Category of contexts}

The well-typed term contexts $\IsCx<\Psi>{\Gamma}$ can be organized (up to
judgmental equality) into a category with morphisms $(\Psi'\mid\Gamma')
\rTo^{(\psi,\gamma)} (\Psi\mid\Gamma)$ with
$\Psi'\rTo^\psi\Psi$ and
$\Gamma'\rTo^\gamma\ReIx{\psi}{\Gamma}$, i.e.\ a well-typed substitution of
terms in context $\ReIx{\psi}{\Gamma}$ for the variables from
$\Gamma'$.

\paragraph*{Types and terms}

The presheaves of types $\CwfTy[n]$ are given by syntactic types
$\IsTy<\Psi>[\Gamma]{A}[n]$ up to judgmental equality, with action given by
substitutions; well-typed terms $\IsTm<\Psi>[\Gamma]{M}{A}$ taken up to
judgmental equality generate the fibers of a natural transformation
$\CwfEl\rTo^\pi\CwfTy$. The representability of $\pi$ is immediate from the fact
that syntactic contexts $\Psi\mid\Gamma$ can be extended by any type to
yield $\Psi\mid\Gamma,x:A$. This context comes equipped with a projection
$\Psi\mid\Gamma$ and $\IsTm[\Gamma,x:A]{x}{A}$ which implements the desired
pullback square. The initiality of $\Psi,0=1\mid\Gamma$ is ensured by
the \textsc{false constraint} rule (see \cref{sec:rules:structural}).

\paragraph*{Cubical judgmental structure}

The functor $\Cx\rTo^{\vert\CubeFib\vert}\AugCube$ takes a context
$\Psi\mid\Gamma$ to $\Psi$, and projects out the dimension component of a
substitution $(\psi, \gamma)$.  It is equipped with a splitting which, for the
context $\Psi\mid\Gamma$, lifts $\Psi'\rTo^\psi\Psi$ to
$\Psi'\mid\ReIx{\psi}{\Gamma}\rTo^{(\psi,\Id)}\Psi\mid\Gamma$. Semantic
boundary separation is obtained immediately from the boundary separation rules.

\paragraph*{Connectives} We observe that our cwf also has dependent function,
pair, equality, boolean and universe types given by the syntax.

%

\section{Cubical logical families and gluing}\label{sec:computability-cwf}

We fix an \emph{arbitrary} structured cwf $\Cx$; we will write $\Gamma:\Cx$ for
its objects. We will show how to build a cwf $\Pred$ of cubical logical
families over $\Cx$, called the ``computability cwf'', which has some (but not
all) of the structure of a model of \XTT.  Then, in
\cref{sec:closed-computability-model} we will construct a genuine model
$\CPred$ of \XTT{} by restricting $\Pred$ to a closed universe.

\subsection{The cubical nerve}

We have a functor $\AugCube\rTo^\CubeCx\Cx$ which
takes a dimension context to $\Cx$-context with those dimensions but no term
variables:
\begin{align*}
  \CubeCx{\Psi} &= \ReIx{\Wk{\Psi}}{\parens{\cdot}}
  \\
  \CubeCx{\Phi\rTo^\psi\Psi} &=
  \CubeCx{\Phi} \rTo^{\CartUniMap<\psi>{\Cleave{\Wk{\Phi}}{(\cdot)}}{\Cleave{\Wk{\Psi}}{(\cdot)}}} \CubeCx{\Psi}
\end{align*}

This functor $\CubeCx$ induces a \emph{nerve} construction
$\Cx\rTo^{\Nerve}\AugCSet$, taking $\Gamma$ to the presheaf
$\Hom[\Cx]{\CubeCx{-}}{\Gamma}$. Intuitively, this is the presheaf of
substitutions which are closed with respect to term variables, but open with
respect to dimension variables; from the perspective of inside $\AugCSet$,
these are the closed substitutions.

Abusing notation slightly, we now define the cubical set of ``closed types''
$\NerveTy[k]$ as $\CwfTy[k]\circ\OpCat{\CubeCx}$. We furthermore define a
dependent cubical set $\NerveEl[k]$ over $\NerveTy[k]$, taking
$(\Psi,A)\in\int\NerveTy[k]$ to $\CwfEl{\CubeCx{\Psi}}{A}$; internally, we will (abusively)
write $\NerveEl{A}$ for the fiber of $\NerveEl[k]$ over $A:\NerveTy[k]$.
Given $\gamma:\Nerve{\Gamma}$ and $A\in\CwfTy[k]{\Gamma}$, we abuse notation
by writing $\ReIx{\gamma}{A}:\NerveTy[k]$.

Furthermore, we also define a dependent cubical set $\NerveFam[n]{A}$ over
$A:\NerveTy[n]$, which internalizes the \emph{families} of $\Cx$-types indexed
in a given $\Cx$-type. Explicitly, we define a presheaf $\NerveFam[n]$ whose
fibers are $\coprod_{A\in\CwfTy[n]{\CubeCx{\Psi}}}\CwfTy[n]{\CubeCx{\Psi}.A}$ for each $\Psi$,
and then exhibit the obvious projection $\NerveFam[n]\rightarrowtriangle\NerveTy[n]$.

\begin{lemma}\label{lem:el-univ-ty}
  For each level $n$, we have $\NerveEl{\CwfUniv[n]} = \NerveTy[n]$.
\end{lemma}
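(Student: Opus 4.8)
The plan is to prove this by unfolding the definitions of $\NerveEl$ and $\NerveTy$ and reducing to the defining equation of universes \`a la Russell in the base cwf $\Cx$ (\cref{def:universes-russell}), applied pointwise at each context $\CubeCx{\Psi}$ in the image of the nerve. There is no serious obstacle here; the statement is a bookkeeping consequence of the Russell-style presentation.

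First I would fix a level $l>n$ so that $\CwfUniv[n]$ is a global element of $\NerveTy[l]$, noting that the particular choice of $l$ is immaterial by the level-restriction equation $\TyLift{l}{m}{\CwfUniv[n]}=\CwfUniv[n]$ together with the identification of $\CwfEl{-}{A}$ with $\CwfEl{-}{\TyLift{k}{l}{A}}$. By definition, the fiber of $\NerveEl{\CwfUniv[n]}$ over $\Psi\in\AugCube$ is the set $\CwfEl{\CubeCx{\Psi}}{\CwfUniv[n]}$, and by \cref{def:universes-russell} this set is equal to $\CwfTy[n]{\CubeCx{\Psi}}$, which is exactly the fiber of $\NerveTy[n]=\CwfTy[n]\circ\OpCat{\CubeCx}$ over $\Psi$. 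So the two cubical sets have equal fibers.

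It then remains to check that the restriction maps agree, so that the equality holds as presheaves on $\AugCube$ and not merely fiberwise. For $\psi\colon\Phi\rTo\Psi$, the restriction map of $\NerveEl{\CwfUniv[n]}$ is the reindexing $\CwfEl{\CubeCx{\Psi}}{\CwfUniv[n]}\rTo\CwfEl{\CubeCx{\Phi}}{\ReIx{\CubeCx{\psi}}{\CwfUniv[n]}}$ along the $\Cx$-substitution $\CubeCx{\psi}$; by the naturality equation $\ReIx{\gamma}{\CwfUniv[n]}=\CwfUniv[n]$ of \cref{def:universes-russell} the codomain is $\CwfEl{\CubeCx{\Phi}}{\CwfUniv[n]}$, and since the identification $\CwfEl{\Gamma}{\CwfUniv[n]}=\CwfTy[n]{\Gamma}$ is part of the Russell-universe structure and hence natural in $\Gamma$, this reindexing coincides with the restriction map $\CwfTy[n]{\CubeCx{\Psi}}\rTo\CwfTy[n]{\CubeCx{\Phi}}$ of $\NerveTy[n]$. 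Hence $\NerveEl{\CwfUniv[n]}=\NerveTy[n]$ on the nose. The only point requiring a moment's care is the handling of the unspecified level at which $\CwfUniv[n]$ is regarded as a closed type, which is dispatched by the level-restriction and lifting equations noted above.
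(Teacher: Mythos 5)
Your proof is correct and follows essentially the same route as the paper's: unfold the definitions of $\NerveEl$ and $\NerveTy$ and invoke the Russell-universe equation $\CwfEl{\Gamma}{\CwfUniv[n]} = \CwfTy[n]{\Gamma}$ from \cref{def:universes-russell} at each $\CubeCx{\Psi}$. The paper's proof is a terse three-line calculation that leaves the compatibility with restriction maps implicit, whereas you spell it out via the naturality equation $\ReIx{\gamma}{\CwfUniv[n]}=\CwfUniv[n]$ and the level-restriction equations; this is a reasonable thing to check but does not constitute a different argument.
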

\begin{proof}
  This follows from the fact that $\Cx$ is a model of universes \`a la Russell. Calculate:
  \begin{align*}
    \NerveEl{\CwfUniv[n]}
    &=
    \CwfEl{\CubeCx{-}}{\CwfUniv[n]}
    \tag{definition}
    \\
    &=
    \CwfTy[n]{\CubeCx{-}}
    \tag{$\Cx$ has universes}
    \\
    &= \NerveTy[n]
    \tag{definition}
  \end{align*}
  \qedhere
\end{proof}

\subsection{Logical families by semantic gluing}

By gluing the family fibration along the nerve functor
$\Cx\rTo^{\Nerve{-}}\AugCSet$, we acquire a category of \emph{cubical
logical families}, which we can used to prove canonicity for closed terms,
instantiating $\Cx$ with the initial structured cwf. Intuitively, the role of
the gluing category is to ``cut down'' the morphisms in cubical sets to those
which are definable in $\Cx$, allowing us to extract non-trivial theorems about
$\Cx$ using the very powerful tools afforded by the topos $\AugCSet$.

We will prefer a more explicit and type-theoretic presentation of the gluing
category, but it is helpful for intuition to view it as a pullback of the
fundamental fibration for $\AugCSet$ along the cubical nerve functor:
\begin{diagram}
  \DPullback*{\Pred}{\ArrCat{\AugCSet}}{\Cx}{\AugCSet}{\CodFib}{\Nerve{-}}{\PiSem}{\PiSyn}
\end{diagram}
Another view of the gluing category comes from the comma construction $\Id_{\AugCSet}\downarrow\Nerve$.

\paragraph*{Diagrammatic construction of $\Pred$}

Explicitly, an object in $\Pred$ is a triple
$\Gl{\Gamma}=(\Gamma, \Pred{\Gamma},\Quo{\Gamma})$ of a context $\Gamma:\Cx$, a cubical set $\Pred{\Gamma}$,
and a natural transformation
$\Pred{\Gamma}\rTo^{\Quo{\Gamma}}\Nerve{\Gamma}$; a morphism
$\Gl{\Delta}\rTo^{\Gl{\gamma}}\Gl{\Gamma}$ is then a pair
$\Gl{\gamma}=(\gamma, \Pred{\gamma})$ together with a commuting square:
\begin{diagram}
  \DRow \Pred{\Delta} & \rTo^{\Pred{\gamma}}  & \Pred{\Gamma}
  \\
  \DRow \dTo^{\Quo{\Gamma}} && \dTo_{\Quo{\Delta}}
  \\
  \DRow \Nerve{\Delta} & \rTo_{\Nerve{\gamma}} & \Nerve{\Gamma}
\end{diagram}

\paragraph*{Type-theoretic construction of $\Pred$}

Following~\Cite{coquand:2018}, we will prefer a \emph{type-theoretic}
presentation of $\Pred$ in terms of the hierarchy of Grothendieck universes
$\SemUniv[n]$, which lift directly into $\AugCSet$ as in~\Cite{hofmann-streicher:1997}. We found that this
type-theoretic style scales more easily to the complex situations involved in
the semantics of dependent type theory than the diagrammatic style above.

According to the type-theoretic presentation, an object of $\Pred$ is a pair
$\Gl{\Gamma} = \parens{\Gamma, \Pred{\Gamma}}$ with $\Gamma:\Cx$ and
$\Pred{\Gamma}$ a family
$\Nerve{\Gamma}\to\SemUniv[n]$ for some $n$.
A morphism $\Gl{\Delta}\rTo^{\Gl{\gamma}}\Gl{\Gamma}$ is a pair $\Gl{\gamma} =
\parens*{\Delta\rTo^\gamma\Gamma,\Pred{\gamma}}$ with $\Pred{\gamma} :
\prod_{\delta:\Nerve{\Delta}} \Pred{\Delta}{\delta} \to
\Pred{\Gamma}\parens{\ReIx{\gamma}{\delta}}$. To be precise, $\Pred{\gamma}$ is
a global element of the dependent function type in $\AugCSet$;
$\Pred{\gamma}$ witnesses the fact that the syntactic substitution $\gamma$
preserves the logical family.

There is a slight mismatch with the earlier diagrammatic intuition: the type-theoretic presentation
only allows for families whose fibers fit into $\SemUniv[n]$ for some $n$. Since we will work
exclusively with the more restrictive type-theoretic presentation from now on this poses no
technical challenges. Those who prefer the intuition provided by the diagrammatic presentation need
merely restrict the pullback construction to certain suitably small cubical sets.

\paragraph*{What's it for?} $\Pred{\Gamma}$ is a \emph{proof-relevant} logical
predicate (``logical family'') on elements of $\Gamma$ which may have free
dimension variables, but which commutes with all substitutions of those
dimension variables. In other words, $\Pred{\Gamma}$ is a (cubical, proof-relevant) predicate on the
elements of $\Gamma$.

\subsection{Cwf structure: types and elements}

A \emph{glued type} of level $l$ in context $\Gl{\Gamma}$ is a pair
$\Gl{A}=(A,\Pred{A})$ with $A\in\CwfTy[l]{\Gamma}$ and $\Pred{A}$ a
global element of the cubical set
$\prod_{\gamma:\Nerve{\Gamma}}\prod_{\Pred{\gamma}:\Pred{\Gamma}\gamma}\NerveEl{\ReIx{\gamma}{A}}\to\SemUniv[l]$.
Level restrictions $\TyLift{k}{l}{\Gl{A}}$ are inherited directly from $\Cx$:
the family part of a glued type can remain unchanged because
$\NerveEl{A} = \NerveEl{\TyLift{k}{l}{A}}$ and $\SemUniv[k] \subseteq \SemUniv[l]$.
A \emph{glued element} in context $\Gl{\Gamma}$ of type
$\Gl{A}\in\CwfTy<\Pred>[n]{\Gl{\Gamma}}$ is a pair $\Gl{M}=(M,\Pred{M})$ with
$M\in\CwfEl{\Gamma}{A}$ and $\Pred{M}$ a global element of the cubical set
$\prod_{\gamma:\Nerve{\Gamma}}\prod_{\Pred{\gamma}:\Pred{\Gamma}\gamma}\Pred{A}\gamma\Pred{\gamma}\parens*{\ReIx{\gamma}{M}}$.

We observe that the induced projection
$\CwfEl<\Pred>\rTo^\pi\CwfTy<\Pred>$ is representable by exhibiting the evident context comprehension
$\Gl{\Gamma}.\Gl{A}$ for $\Gl{\Gamma}:\Pred$ and
$\Gl{A}\in\CwfTy<\Pred>{\Gamma}$, taking $\Gamma.A$ for its syntactic part, and
using the following family for its semantic part:
\begin{align*}
  \Pred{\parens*{\Gl{\Gamma}.\Gl{A}}}
  \Snoc{\gamma}{a}
  &=
  \textstyle
  \sum_{\Pred{\gamma}:\Pred{\Gamma}\gamma}
  \Pred{A}\gamma\Pred{\gamma}a
\end{align*}

We clearly have that the restrictions
$\CwfEl<\Pred>{\Gl{\Gamma}}{\Gl{A}} \rTo
\CwfEl<\Pred>{\Gl{\Gamma}}{\TyLift{k}{l}{\Gl{A}}}$ are identities, and $\Gl{\Gamma}.\Gl{A} =
\Gl{\Gamma}.\TyLift{k}{l}{\Gl{A}}$.
Therefore, $\Pred$ forms an algebraic cumulative cwf; we will call it the
``computability cwf''. The cubical structure is inherited from $\Cx$ by precomposing with the
fibration $\Pred\rProjto^{\PiSyn}\Cx$. We define the glued constraint comprehension $\Gl{\Gamma}.\xi$ by taking
$\Gamma.\xi$ for the syntactic part, and defining its logical family as follows:
\[
  \Pred{\parens*{\Gl{\Gamma}.r=s}}(\gamma.r=s) =
  \braces*{\Pred{\Gamma}\gamma \mid r=s}
\]

This choice of realizers for the constraint comprehension ensures the
initiality of inconsistent contexts in the gluing model.

\section{Canonicity for \texorpdfstring{\XTT}{XTT}: the computability model}\label{sec:closed-computability-model}

We have not succeeded in closing the computability cwf from
\cref{sec:computability-cwf} under Kan universes \`a la Russell of Kan types;
the essential difficulties are the separation property and the regular
coercion structure. Therefore, this cwf does not have the structure of a model
of \XTT.

To rectify this, we will restrict $\Pred$ to a smaller cwf $\CPred$, in which
the types are generated inductively in a way reminiscent of the construction of
closed universes in PER
models~\Cite{allen:1987:thesis}; the
main difference is that, rather than using large induction-recursion (which has
not been shown to exist in presheaf toposes), we model $n$ object universes in
the cubical universe $\SemUniv[n+1]$ (an instance of \emph{small}
induction-recursion, which can be translated to constructs available in every
presheaf topos~\Cite{hancock-ghani-malatesta-altenkirch:2013,moerdijk:2000}).

Finally, using the universal property of the type structure of the restricted
cwf, we will generate the coercion and composition structure recursively,
obtaining a model of \XTT{}.

\
\subsection{Closed universe hierarchy}\label{sec:closed-universes}

We will define a family $\Pred{\ClUni{n}} :
\NerveTy[n]\to\SemUniv[n+1]$ internally to $\AugCSet$, together
with $\UPred*{-} : \prod_{\FmtCode{A}:\Pred{\ClUni{n}}A}
\NerveEl{A}\to\SemUniv[n]$. The former will serve as the computability
predicate for a closed universe, and we will use the latter in order to define
a family of types to decode the closed universe.
\begin{mathpar}
  \Rule{
    (j<n)
  }{
    \CodeUni{j}:\Pred{\ClUni{n}}\CwfUniv[j]
  }
  \and
  \Rule{
  }{
    \CodeBool:\Pred{\ClUni{n}}\CwfBool
  }
  \\
  \Rule{
    \FmtCode{A}:\Pred{\ClUni{n}}a
    \\
    \FmtCode{B}:\Pred{\ClUniFam{n}{\FmtCode{A}}}B
  }{
    \CodePi{\FmtCode{A}}{\FmtCode{B}} :\Pred{\ClUni{n}}\CwfPi{A}{B}
  }
  \and
  \Rule{
    \FmtCode{A}:\Pred{\ClUni{n}}A
    \\
    \FmtCode{B}:\Pred{\ClUniFam{n}{\FmtCode{A}}}B
  }{
    \CodeSg{\FmtCode{A}}{\FmtCode{B}} :\Pred{\ClUni{n}}\CwfSg{A}{B}
  }
  \and
  \Rule{
    \textstyle
    \FmtCode{A} : \prod_{i:\Dim} \Pred{\ClUni{n}}{A_i}
    \\
    \etc{\Pred{N_\e} : \UPred{\FmtCode{A}(\e)}N_\e}
  }{
    \CodeEq{\FmtCode{A}}{\Pred{N_0}}{\Pred{N_1}}
    :
    \Pred{\ClUni{n}}\CwfEq{i.A_i}{N_0}{N_1}
  }
\end{mathpar}

We define an auxiliary family of types to capture family of type-codes:
\begin{align*}
  \Pred{\ClUniFam{n}{-}} &:
  \textstyle
  \prod_{\FmtCode{A}:\Pred{\ClUni{n}}A}
  \NerveFam[n]{A}\to\SemUniv[n+1]
  \\
  \Pred{\ClUniFam{n}{\FmtCode{A}}} B &=
  \textstyle
  \prod_{M : \NerveEl{A}}
  \UPred{\FmtCode{A}} M
  \to
  \Pred{\ClUni{n}}\parens*{
    \ReIx{\Snoc{\Id}{M}}{B}
  }
\end{align*}

The assignment of computability families to type codes is as follows:
\begin{align*}
  \UPred{(-)} &: \textstyle
  \prod_{\FmtCode{A}:\Pred{\ClUni{n}}{A}}
  \NerveEl{A}\to \SemUniv[n]
  \\
  \UPred{\CodeUni{i}} &= \Pred{\ClUni{i}}
  \\
  \UPred{\CodeBool} &=
  \lambda M.\,
  \parens*{M = \CwfTrue}
  +
  \parens*{M = \CwfFalse}
  \\
  \UPred{
    \CodePi{\FmtCode{A}}{\FmtCode{B}}
  } &=
  \textstyle
  \lambda M.\,
  \prod_{N:\NerveEl{A}}
  \prod_{\Pred{N}:\Pred{\FmtCode{A}}N}
  \UPred*{\FmtCode{B}N\Pred{N}}
  \CwfApp[A][B]{M}{N}
  \\
  \UPred{
    \CodeSg{\FmtCode{A}}{\FmtCode{B}}
  } &=
  \textstyle
  \lambda M.\
  \sum_{
    \Pred{M_0} : \UPred{\FmtCode{A}}\CwfFst[A][B]{M}
  }
  \UPred*{\FmtCode{B}\parens*{\CwfFst[A][B]{M}}\Pred{M_0}}\CwfSnd[A][B]{M}
  \\
  \UPred{
    \CodeEq{\FmtCode{A}}{\Pred{N_0}}{\Pred{N_1}}
  }
  &=
  \textstyle
  \lambda M.\,
  \braces*{
    \Pred{M} :
    \prod_{i:\Dim}
    \UPred{\FmtCode{A}(i)}\CwfPathApp[i.A]{M}{i}
    \mid
    \etc{
      \Pred{M}(\e) = \Pred{N_\e}
    }
  }
\end{align*}

\begin{lemma}\label{lem:type-code-lift}
  For any $A:\NerveTy[k]$ and with $k\leq l$, we have $\Pred{\ClUni{k}}{A} = \Pred{\ClUni{l}}\TyLift{k}{l}{A}$.
\end{lemma}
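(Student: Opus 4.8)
The plan is to prove the statement by induction on the type code, establishing \emph{simultaneously} the companion fact that decoding families are preserved: under the identification of $\Pred{\ClUni{k}}A$ with $\Pred{\ClUni{l}}\TyLift{k}{l}{A}$, a code $\FmtCode{A}$ and its image $\FmtCode{A}'$ satisfy $\UPred{\FmtCode{A}} = \UPred{\FmtCode{A}'}$ as families over $\NerveEl{A} = \NerveEl{\TyLift{k}{l}{A}}$. By \emph{lift functoriality} ($\TyLift{k}{k}{A} = A$ and $\TyLift{l}{m}{\TyLift{k}{l}{A}} = \TyLift{k}{m}{A}$) one may reduce to the case of consecutive levels $l = k+1$, so that only $\SemUniv[n+1]$ ever increases by one step, but this is a convenience rather than a necessity — the constructor-matching below is insensitive to it.

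The key structural inputs, all already available, are: (i) level lifting commutes with every type former, i.e.\ $\TyLift{k}{l}{\CwfPi{A}{B}} = \CwfPi{\TyLift{k}{l}{A}}{\TyLift{k}{l}{B}}$ and likewise for $\CwfSg{A}{B}$, $\CwfEq{i.A}{N_0}{N_1}$, $\CwfBool$, and $\CwfUniv[j]$; (ii) level lifting commutes with reindexing, since $\CwfTy$ is a presheaf on $\Cx\times\Lev$, so $\ReIx{\Snoc{\Id}{M}}{\TyLift{k}{l}{B}} = \TyLift{k}{l}{\ReIx{\Snoc{\Id}{M}}{B}}$; and (iii) the presheaf of elements is lift-invariant, $\NerveEl{A} = \NerveEl{\TyLift{k}{l}{A}}$. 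I would also observe that no side condition is lost across the two inductive definitions: the constraint $j<n$ attached to $\CodeUni{j}:\Pred{\ClUni{n}}\CwfUniv[j]$ is forced by $\CwfUniv[j]:\NerveTy[n]$ itself, so if $A:\NerveTy[k]$ is $\CwfUniv[j]$ then $j<k\leq l$ and the clause is present on both sides. Thus, after rewriting indices via (i), the inductive clauses defining $\Pred{\ClUni{l}}\TyLift{k}{l}{A}$ restricted along lifting are literally those defining $\Pred{\ClUni{k}}A$.

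Carrying out the induction, each constructor is uniform. For $\CodeBool$ and $\CodeUni{j}$ one rewrites the index with (i) and is done (in the latter case $\UPred{\CodeUni{j}} = \Pred{\ClUni{j}}$ on both sides, with no change of level). For $\CodePi{\FmtCode{A}}{\FmtCode{B}}$ and $\CodeSg{\FmtCode{A}}{\FmtCode{B}}$, the component $\FmtCode{A}:\Pred{\ClUni{k}}A$ is handled by the induction hypothesis, which also supplies $\UPred{\FmtCode{A}} = \UPred{\FmtCode{A}'}$; the component $\FmtCode{B}:\Pred{\ClUniFam{k}{\FmtCode{A}}}B$, unfolding to $\prod_{M:\NerveEl{A}}\UPred{\FmtCode{A}}M\to\Pred{\ClUni{k}}\parens*{\ReIx{\Snoc{\Id}{M}}{B}}$, is transported using (ii), (iii), and the induction hypothesis at the structurally smaller types $\ReIx{\Snoc{\Id}{M}}{B}$. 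The case $\CodeEq{\FmtCode{A}}{\Pred{N_0}}{\Pred{N_1}}$ is analogous: $\FmtCode{A}:\prod_{i:\Dim}\Pred{\ClUni{k}}A_i$ is handled pointwise, the boundary realizers $\Pred{N_\e}:\UPred{\FmtCode{A}(\e)}N_\e$ retypecheck because $\UPred{\FmtCode{A}(\e)}$ is preserved, and the index matches by (i) under the dimension binder. In each case the two clauses produce the same data, so the indexed inductive definitions coincide, and unfolding $\UPred{(-)}$ on matching codes yields the simultaneous statement.

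The step I expect to be the real (if modest) obstacle is getting the \emph{simultaneity} right: one cannot prove equality of the code sets without tracking equality of the decoding families $\UPred{(-)}$, because those occur in the very formation of $\Pi$-, $\Sigma$-, and $\mathsf{Eq}$-codes (through $\ClUniFam$ and through the boundary conditions on $\CodeEq$), so the two claims must be interleaved in a single induction over the mutually defined $\Pred{\ClUni{}}$ and $\UPred{(-)}$. Everything else is bookkeeping: the genuine hypotheses used are precisely the level-restriction equations already imposed on the cwf and the fact that lifting is a morphism of presheaves, which is what lets every reindexing commute past $\TyLift{k}{l}{(-)}$.
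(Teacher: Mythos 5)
Your proof is correct and takes essentially the same route as the paper: induction on the structure of the code $\FmtCode{A}$, rewriting indices via the level-restriction equations $\TyLift{k}{l}{\CwfPi{A}{B}} = \CwfPi{\TyLift{k}{l}{A}}{\TyLift{k}{l}{B}}$ etc.\ and the commutation of lifting with reindexing, and then applying the inductive hypotheses to $\FmtCode{A}$ and, through $\ClUniFam$, to the fibers of $\FmtCode{B}$. The simultaneity concern you flag (carrying $\UPred{\FmtCode{A}} = \UPred{\FmtCode{A}'}$ alongside the code equality) is the right thing to watch; the paper handles it silently by arranging matters so the code is literally the same element at both levels and the clauses of $\UPred{(-)}$ never mention the ambient level $n$, so making it an explicit companion claim, as you do, is a harmless and arguably cleaner formalization of the same argument.
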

\begin{proof}

  We will show that
  $\Pred{\ClUni{k}}{A}\subseteq\Pred{\ClUni{l}}\TyLift{k}{l}{A}$; the other
  direction is symmetric. Fix $\FmtCode{A}:\Pred{\ClUni{k}}{A}$; we verify that
  $\FmtCode{A}:\Pred{\ClUni{l}}\TyLift{k}{l}{A}$ as well, proceeding by
  induction.
  \begin{description}
    \item[Case.]
      \[
        \Rule{
          (j<k)
        }{
          \CodeUni{j}:\Pred{\ClUni{k}}\CwfUniv[j]
        }
      \]
      We have $\CodeUni{j}:\Pred{\ClUni{l}}\CwfUniv[j]$, and $\CwfUniv[j] = \TyLift{k}{l}{\CwfUniv[j]}$.

    \item[Case.]
      \[
        \Rule{
        }{
          \CodeBool:\Pred{\ClUni{k}}\CwfBool
        }
      \]
      We likewise have $\CodeBool:\Pred{\ClUni{l}}\CwfBool$, and $\CwfBool = \TyLift{k}{l}{\CwfBool}$.

    \item[Case.]
      \[
        \Rule{
          \FmtCode{A}:\Pred{\ClUni{k}}A
          \\
          \FmtCode{B}:\Pred{\ClUniFam{k}{\FmtCode{A}}}B
        }{
          \CodePi{\FmtCode{A}}{\FmtCode{B}} :\Pred{\ClUni{k}}\CwfPi{A}{B}
        }
      \]

      To see that $\CodePi{\FmtCode{A}}{\FmtCode{B}} :
      \Pred{\ClUni{l}}{\TyLift{k}{l}{\CwfPi{A}{B}}}$, by calculation, it
      suffices to show that $\CodePi{\FmtCode{A}}{\FmtCode{B}} :
      \Pred{\ClUni{l}}{\CwfPi{\TyLift{k}{l}{A}}{\TyLift{k}{l}{B}}}$.  By
      induction, we have $\FmtCode{A}:\Pred{\ClUni{l}}\TyLift{k}{l}{A}$; to
      verify that
      $\FmtCode{B}:\Pred{\ClUniFam{l}{\FmtCode{A}}}\TyLift{k}{l}{B}$, we fix
      $M:\NerveEl{A}$ and $\Pred{M}:\UPred{\FmtCode{A}}M$ and need to check that
      $\FmtCode{B}M\Pred{M}:\Pred{\ClUni{l}}{\TyLift{k}{l}{\parens*{\ReIx{\Snoc{\Id}{M}}{B}}}}$;
      but this follows from our second induction hypothesis and the fact that
      level restriction commutes with substitution.

  \end{description}

  The remaining cases are analogous.
\end{proof}

\begin{notation}
  We will write $\Boundary{r}$ for the formula $(r = 0) \lor (r = 1)$, the \emph{boundary} of $r$.
\end{notation}

\begin{lemma}\label{lem:ty-nerve-path-unicity}
  Internally to $\AugCSet$, the following formulas are true:
  \begin{gather*}
    \forall
      r:\Dim,
      n:\mathbb{N},
      A,B:\NerveTy[n].\,
    \parens*{
      \Boundary{r}\implies A = B
    }
    \implies
    A=B
    \tag{types}
    \\
    \forall
      r:\Dim,
      A:\NerveTy[n],
      M,N:\NerveEl{A}.\,
    \parens*{
      \Boundary{r}\implies M = N
    }
    \implies
    A=B
    \tag{elements}
  \end{gather*}
\end{lemma}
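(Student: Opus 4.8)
The plan is to deduce this internal statement from the \emph{external} boundary separation that the ambient cwf $\Cx$ enjoys by hypothesis (Definition~\ref{def:boundary-separation}), by unwinding the Kripke--Joyal semantics of the presheaf topos $\AugCSet$. I will spell out the clause about types; the clause about elements is handled identically, with the $\Cov$-separation of $\CwfEl$ in place of that of $\CwfTy$ (its conclusion should of course read $M = N$, not $A = B$).

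First I would interpret the outer universal quantifier. Fix a stage $\Psi : \AugCube$ together with data in the fibers over it: a dimension $r \in \Dim(\Psi) = \CwfDim{\CubeCx{\Psi}}$, a level $n$, and types $A, B \in \NerveTy[n](\Psi) = \CwfTy[n]{\CubeCx{\Psi}}$; suppose $\Psi$ forces $\Boundary{r} \implies A = B$. Unwinding the forcing clause for implication, this hypothesis says exactly: for every $f : \Phi \to \Psi$ in $\AugCube$, if $\Phi$ forces $\Boundary{\ReIx{f}{r}}$ then $\ReIx{f}{A} = \ReIx{f}{B}$ in $\NerveTy[n](\Phi) = \CwfTy[n]{\CubeCx{\Phi}}$; here $\ReIx{f}{(-)}$ abbreviates the restriction map of the presheaf $\NerveTy[n]$ along $f$, equivalently reindexing along $\CubeCx{f}$ in $\Cx$.

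Then I would instantiate this at the two equalizer inclusions $d^\e : \Psi.(r{=}\e) \to \Psi$ of $\AugCube$, where $\Psi.(r{=}\e)$ is the equalizer of $r$ and the constant $\e$. By the equalizer property $\ReIx{d^\e}{r}$ is literally the constant $\e$, so $\Psi.(r{=}\e)$ forces $\Boundary{\ReIx{d^\e}{r}}$ by reflexivity; the hypothesis therefore gives $\ReIx{d^\e}{A} = \ReIx{d^\e}{B}$ for each $\e \in \braces*{0,1}$. It remains to read this off as an instance of $\Cov$-separation: the context $\CubeCx{\Psi.(r{=}\e)}$ is the empty variable context over $\Psi.(r{=}\e)$, hence is precisely the constraint comprehension $\CubeCx{\Psi}.(r{=}\e)$ of $\Cx$, and --- by functoriality of the splitting of $\CubeFib$ --- the morphism $\CubeCx{d^\e}$ is the corresponding constraint projection $\CubeCx{\Psi}.(r{=}\e) \to \CubeCx{\Psi}$. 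So the restrictions of $A$ and of $B$ along every member of the family $\braces*{\CubeCx{\Psi}.(r{=}\e) \to \CubeCx{\Psi}}_{\e}$ agree; since that family belongs to $\Cov{\CubeCx{\Psi}}$ by definition (using $r \in \CwfDim{\CubeCx{\Psi}}$), the $\Cov$-separation of $\CwfTy$ yields $A = B$ in $\CwfTy[n]{\CubeCx{\Psi}} = \NerveTy[n](\Psi)$. As $\Psi, r, n, A, B$ were arbitrary this proves the formula for types, and the same argument with $\CwfEl$ proves the one for elements. The one step that genuinely demands care --- and the only plausible source of error --- is this last identification of $\CubeCx$ applied to equalizers in $\AugCube$ with constraint comprehension in $\Cx$; given the definition of the functor $\CubeCx$ and the splitting of $\CubeFib$ it is routine, so I anticipate no real obstacle beyond careful bookkeeping.
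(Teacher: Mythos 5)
Your proof is correct and follows essentially the same route as the paper's: unwind Kripke–Joyal forcing, instantiate the hypothesis at the two equalizer inclusions $\Psi.(r{=}\e)\to\Psi$ to get boundary agreement, and conclude by $\Cov$-separation of $\CwfTy$ (resp.\ $\CwfEl$), noting as you do that the family belongs to $\Cov{\CubeCx{\Psi}}$ because $\CubeCx$ takes equalizers in $\AugCube$ to constraint comprehensions in $\Cx$. You present the argument ``forward'' (apply the hypothesis at equalizers, then invoke separation) where the paper works ``backward'' (invoke separation to reduce the goal, then apply the hypothesis), and you are somewhat more explicit about the identification $\CubeCx{\Psi.(r{=}\e)} = \CubeCx{\Psi}.(r{=}\e)$ which the paper leaves implicit; both of these are purely expository differences.
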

\begin{proof}

  This follows from the fact that, as a model of \XTT{}, $\Cx$ has boundary
  separation; therefore, its types and elements are separated with respect to
  $\Boundary$. This implies that all elements are completely defined by their
  boundaries. We prove this in detail for types only, and the case for terms is
  analogous.  It suffices, using the Kripke-Joyal semantics of the topos
  $\AugCSet$, to fix $\Psi:\AugCube$ and show that for all $r \in \Dim{\Psi}$, $n \in
  \mathbb{N}$, and $A,B \in \NerveTy[n](\Psi)$, if
  $\Psi\forces\parens*{\Boundary{r}\implies A = B}$ then $\Psi\forces A = B$.

  We will write $\Psi.\xi_\e\rTo^{\Wk*{\xi_\e}}\Psi$ for the equalizer of $r,\e$ for
  each $\e\in\Two$. Recalling that $\NerveTy[n](\Psi) =
  \CwfTy[n]{\CubeCx{\Psi}}$, by the $\Cov$-separation of $\CwfTy[n]$ (see
  \cref{def:boundary-separation}), to show that $\Psi\forces A = B$ it
  suffices to show that $\Psi.\xi_e \forces \ReIx{\Wk*{\xi_\e}}{A} =
  \ReIx{\Wk*{\xi_\e}}{B}$ for each $\e\in\Two$.

  Unraveling the Kripke-Joyal paperwork of our assumption, we know that for all
  $\Psi'\rTo^{\psi}\Psi$, if $\Psi'\forces\Boundary{r\circ\psi}$ then
  $\Psi'\forces\ReIx{\psi}{A} = \ReIx{\psi}{B}$. Instantiating this hypothesis
  with the equalizer $\Psi.\xi_e\rTo^{\Wk*{\xi_\e}}\Psi$, it remains only to
  check that $\Psi.\xi_\e\forces\parens{r\circ\Wk{\xi_\e}=0}\lor
  \parens{r\circ\Wk{\xi_\e} = 1}$. But we immediately have $\Psi.\xi_0\forces
  r\circ\Wk{\xi_0}=0$ and $\Psi.\xi_1\forces r\circ\Wk{\xi_1}=1$ by the
  property of the equalizer.
\end{proof}

\begin{definition}

  Given $A:\NerveTy[n]$ and $\FmtCode{A}:\Pred{\ClUni{n}}{A}$, we say that
  $\FmtCode{A}$ is \emph{elementwise separated} (or just \emph{separated}) iff the following formula holds
  internally to $\AugCSet$:
  \[
    \forall
      r:\Dim,
      M:\NerveEl{A},
      \Pred{M_0},\Pred{M_1}:\UPred{\FmtCode{A}}M,
    \parens*{
      \Boundary{r}\implies \Pred{M_0}=\Pred{M_1}
    }
    \implies
    \Pred{M_0}=\Pred{M_1}
  \]

  We write $\HasPathUnicity{\FmtCode{A}}$ for the above formula.

\end{definition}

\begin{definition}

  Given $A:\NerveTy[n]$ and $\FmtCode{A}:\Pred{\ClUni{n}}A$, we say that
  $\FmtCode{A}$ is \emph{typewise separated} iff the following formula holds
  internally to $\AugCSet$:
  \[
    \forall
      r:\Dim,
      B:\NerveTy[n],
      \FmtCode{B}:\Pred{\ClUni{n}}B.\,
    \parens*{
      \Boundary{r}\implies\FmtCode{A}=\FmtCode{B}
    }
    \implies
    \FmtCode{A}=\FmtCode{B}
  \]

  We write $\LocallyPathUnique{\FmtCode{A}}$ for the above formula.

\end{definition}

\begin{lemma}\label{lem:inductive-path-unicity}
  Internally to $\AugCSet$, the following formula is true:
  \[
    \forall
      n:\mathbb{N},
      A:\NerveTy[n],
      \FmtCode{A}:\Pred{\ClUni{n}}A.\,
      \LocallyPathUnique{\FmtCode{A}}
      \land
      \HasPathUnicity{\FmtCode{A}}
  \]
\end{lemma}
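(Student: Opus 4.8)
The plan is to establish the conjunction by a nested induction: an outer induction on the universe level $n$, and an inner induction on the type code $\FmtCode{A}:\Pred{\ClUni{n}}A$; for the $\LocallyPathUnique{\FmtCode{A}}$ conjunct we will in addition case on the second code $\FmtCode{B}$. The whole argument takes place in the internal language of $\AugCSet$. Two facts about the ambient topos will be used repeatedly: the constructors of the indexed inductive family $\Pred{\ClUni{n}}$ satisfy no-confusion (they are disjoint and injective), and the boundary predicate satisfies $\neg\neg\Boundary{r}$ internally for every $r:\Dim$ (equivalently, $\neg\Boundary{r}$ is absurd---this discharges the inconsistent dimension contexts). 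The first move is to dispatch the underlying syntax: from the hypothesis of $\LocallyPathUnique{\FmtCode{A}}$, projecting $\Pred{\ClUni{n}}$ onto $\NerveTy[n]$ and applying \cref{lem:ty-nerve-path-unicity} gives $A = B$, after which $\FmtCode{A}$ and $\FmtCode{B}$ live in a common fibre of $\Pred{\ClUni{n}}$ and the remaining reasoning is homogeneous.

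For the elementwise conjunct $\HasPathUnicity{\FmtCode{A}}$, I will show that every $\UPred{\FmtCode{A}}$ is fibrewise a separated family for the boundary coverage. This reduces to standard closure properties of separated families in $\AugCSet$: separated families are stable under passage to a fibre, under dependent products (function extensionality), under dependent sums (handling the two components in order), and under carving out a subtype by a family of propositions. The last handles the $\CodeEq$ clause, whose realizer family is the subtype of a product over $\Dim$ cut out by the endpoint equations $\Pred{M}(\e)=\Pred{N_\e}$; these equations are propositional since an easy auxiliary induction shows every $\UPred{\FmtCode{A}}$ lands in the sets. The $\CodeBool$ clause needs separation of a coproduct of two subsingleton presheaves, which follows from disjointness of coproducts in a topos; the $\CodeUni{j}$ clause has $\UPred{\CodeUni{j}}=\Pred{\ClUni{j}}$, whose fibrewise separation is precisely the outer inductive hypothesis at level $j<n$, applied through the identification $\NerveEl{\CwfUniv[j]}=\NerveTy[j]$ of \cref{lem:el-univ-ty}. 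In the $\CodePi$, $\CodeSg$, and $\CodeEq$ clauses the subsidiary codes ($\FmtCode{B}N\Pred{N}$, resp.\ $\FmtCode{A}(i)$) are structurally smaller, so the code-level inductive hypothesis supplies their separation.

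For the typewise conjunct $\LocallyPathUnique{\FmtCode{A}}$, after reducing to a common fibre I case on $\FmtCode{B}$. If $\FmtCode{B}$ has a head constructor distinct from that of $\FmtCode{A}$, then $\Boundary{r}\implies\FmtCode{A}=\FmtCode{B}$ together with disjointness of constructors forces $\neg\Boundary{r}$, which is absurd; so this case cannot occur. If $\FmtCode{B}$ has the same head, injectivity of the constructor turns the hypothesis into the statement that the corresponding arguments agree under $\Boundary{r}$; one then closes the loop by invoking the typewise conjunct inductively on the code-valued arguments---using function extensionality for the function-valued arguments of $\CodePi$, $\CodeSg$, and the ``$\forall i:\Dim$'' form of the inductive hypothesis for the $\Dim$-indexed argument of $\CodeEq$---and the elementwise conjunct inductively on the realizer arguments $\Pred{N_0},\Pred{N_1}$ of $\CodeEq$; a congruence then rebuilds $\FmtCode{A}=\FmtCode{B}$.

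The hard part is the $\CodeEq$ clause, the only genuinely cubical constructor: its argument is a dimension-indexed family of codes together with endpoint realizers, so the separating dimension $r$ and the internal path dimension $i$ must be manipulated in tandem, and this is where the two topos-side inputs---no-confusion for $\Pred{\ClUni{n}}$ and $\neg\neg\Boundary{r}$---carry the weight. Conducting the whole argument inside the internal language of $\AugCSet$ is what lets these facts be applied uniformly across all the clauses.
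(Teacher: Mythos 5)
Your proposal takes the same route as the paper: an outer strong induction on the level $n$, an inner induction on the code $\FmtCode{A}$, with $\LocallyPathUnique{}$ and $\HasPathUnicity{}$ handled in tandem per clause, disjointness and injectivity of the constructors of $\Pred{\ClUni{n}}$ combined with internal $\neg\neg\Boundary{r}$ to kill the cross-constructor cases, and the outer inductive hypothesis (via $\UPred{\CodeUni{j}}=\Pred{\ClUni{j}}$ and \cref{lem:el-univ-ty}) closing the $\CodeUni{j}$ clause. Two small observations: first, the preliminary step you make explicit---projecting to $\NerveTy[n]$ and using \cref{lem:ty-nerve-path-unicity} to reduce the heterogeneous comparison of $\FmtCode{A}$ and $\FmtCode{B}$ to a common fibre---is left tacit in the paper but is indeed needed; second, your "propositional" caveat on the $\CodeEq$ subtype is unnecessary, since subpresheaves of separated presheaves are separated regardless, but it does no harm. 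The packaging of the elementwise case via closure properties of separated families under $\Pi$, $\Sigma$, subtypes, and coproducts is a slightly more conceptual phrasing of exactly the per-clause verifications the paper performs directly.
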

\begin{proof}
  We begin by strong induction on $n$; then, we proceed by induction on $\FmtCode{A}$.
  \begin{enumerate}

    \item \emph{Case (dependent function type).}
      Fixing
      $\FmtCode{C}:\Pred{\ClUni{n}}C$ and
      $\FmtCode{D}:\Pred{\ClUniFam{n}{\FmtCode{C}}}D$, we have to verify that
      $\CodePi{\FmtCode{C}}{\FmtCode{D}}$ is typewise and elementwise separated.
      \begin{enumerate}
        \item \emph{Typewise separation}.
          We need to show that
          for all $\FmtCode{B}:\Pred{\ClUni{n}}B$, if
          $\Boundary{r}\implies{}\CodePi{\FmtCode{C}}{\FmtCode{D}} =
          \FmtCode{B}$, then $\CodePi{\FmtCode{C}}{\FmtCode{D}} = \FmtCode{B}$.
          First we observe that there exist $\FmtCode{E},\FmtCode{F}$
          such that $\FmtCode{B}=\CodePi{\FmtCode{E}}{\FmtCode{F}}$. This follows by
          inversion on $\FmtCode{B}$: if $\FmtCode{B}$ is a $\FmtCode{pi}$ our goal is immediate,
          otherwise we would have $\Boundary{r} \implies \bot$ and in $\AugCSet$ we have that
          $(\Boundary{r} \implies \bot) \implies \bot$ giving a contradiction.

          Therefore, we have some $\FmtCode{E}$ and $\FmtCode{F}$ such that
          $\Boundary{r}\implies\CodePi{\FmtCode{C}}{\FmtCode{D}}=\CodePi{\FmtCode{E}}{\FmtCode{F}}$.
          By inversion, we have
          $\Boundary{r}\implies\parens{\FmtCode{C}=\FmtCode{E} \land
          \FmtCode{D}=\FmtCode{F}}$. We need to show that
          $\FmtCode{C}=\FmtCode{E}$ and $\FmtCode{D}=\FmtCode{F}$.
          \begin{enumerate}

            \item Instantiating our induction hypothesis for the typewise
              separation of $\FmtCode{C}$ with $\FmtCode{E}$, we obtain
              $\FmtCode{C}=\FmtCode{E}$ from
              $\Boundary{r}\implies\FmtCode{C}=\FmtCode{E}$.

            \item To see that $\FmtCode{D}=\FmtCode{F}$, we fix $M:\NerveEl{C}$
              and $\Pred{M}:\UPred{\FmtCode{C}}M$, and show that
              $\FmtCode{D}M\Pred{M}=\FmtCode{F}M\Pred{M}$. Instantiating our
              induction hypothesis for the typewise separation of
              $\FmtCode{D}M\Pred{M}$, we obtain
              $\FmtCode{D}M\Pred{M}=\FmtCode{F}M\Pred{M}$ from
              $\Boundary{r}\implies\FmtCode{D}=\FmtCode{F}$.

          \end{enumerate}

        \item \emph{Elementwise separation}
          Fixing $M:\NerveEl{\CwfPi{C}{D}}$ and
          $\Pred{M_0},\Pred{M_1}:\UPred{\CodePi{\FmtCode{C}}{\FmtCode{D}}}M$
          such that $\Boundary{r}\implies\Pred{M_0}=\Pred{N_1}$, we need to
          show that $\Pred{M_0}=\Pred{M_1}$. We fix $N:\NerveEl{C}$ and
          $\Pred{N}:\UPred{\FmtCode{C}}N$, to verify that $\Pred{M_0}N\Pred{N}
          = \Pred{M_1}N\Pred{N}$. Using our induction hypothesis for the
          elementwise separation of $\FmtCode{D}N\Pred{N}$, we obtain
          $\Pred{M_0}N\Pred{N}=\Pred{N_1}\Pred{N}$ from
          $\Boundary{r}\implies\Pred{M_0}N\Pred{N}=\Pred{M_1}N\Pred{N}$.
      \end{enumerate}
    \item \emph{Case (dependent pair type).}
      Fixing
      $\FmtCode{C}:\Pred{\ClUni{n}}C$ and
      $\FmtCode{D}:\Pred{\ClUniFam{n}{\FmtCode{C}}}D$, we have to verify that
      $\CodeSg{\FmtCode{C}}{\FmtCode{D}}$ is typewise and elementwise separated.
      \begin{enumerate}
        \item \emph{Typewise separation.} This case is identical to the
          case for dependent function types.

        \item \emph{Elementwise separation.} Fixing $M:\NerveEl{\CwfSg{C}{D}}$ and
          $\Pred{M_0},\Pred{M_1}:\UPred{\CodeSg{\FmtCode{C}}{\FmtCode{D}}}M$ such
          that $\Boundary{r}\implies\Pred{M_0}=\Pred{M_1}$, we need to show
          that $\Pred{M_0}=\Pred{M_1}$. It suffices to show that
          $\pi_1\Pred{M_0} = \pi_2\Pred{M_1}$ and
          $\pi_2\Pred{M_0}=\pi_2\Pred{M_1}$.
          \begin{enumerate}

            \item From our induction hypothesis for the elementwise separation of
              $\FmtCode{C}$, we obtain $\pi_1\Pred{M_0}=\pi_1\Pred{M_1}$ from
              $\Boundary{r}\implies\Pred{M_0}=\Pred{M_1}$.

            \item From our induction hypothesis for the elementwise separation of
              $\FmtCode{D}M\Pred{M_0}$, we obtain
              $\pi_2\Pred{M_0}=\pi_2\Pred{M_1}$ from
              $\Boundary{r}\implies\Pred{M_0}=\Pred{M_1}$.

          \end{enumerate}
      \end{enumerate}
    \item \emph{Case (equality type).}
      Fixing $\FmtCode{C}:\prod_{i}\Pred{\ClUni{n}}{C_i}$ and
      $M_0:\NerveEl{C_0},\Pred{M_0}:\UPred{\FmtCode{C}0}M_0$ and
      $M_1:\NerveEl{C_1},\Pred{M_1}:\UPred{\FmtCode{C}1}M_1$, we have to
      verify that $\CodeEq{\FmtCode{C}}{\Pred{M_0}}{\Pred{M_1}}$ is typewise and elementwise separated.
      \begin{enumerate}
        \item\emph{Typewise separation.}
          We need to show
          that for all $\FmtCode{B}:\Pred{\ClUni{n}}B$, if
          $\Boundary{r}\implies\CodeEq{\FmtCode{C}}{\Pred{M_0}}{\Pred{M_1}}=\FmtCode{B}$,
          then $\CodeEq{\FmtCode{C}}{\Pred{M_0}}{\Pred{M_1}}=\FmtCode{B}$. By
          inversion, we observe that there exist
          $\FmtCode{D}:\prod_i\Pred{\ClUni{n}}{C_i}$ and
          $\Pred{N_0}:\UPred{\FmtCode{D}0}M_0$ and
          $\Pred{N_1}:\UPred{\FmtCode{D}1}{M_1}$ such that
          $\Boundary{r}\implies\FmtCode{B}=\CodeEq{\FmtCode{D}}{\Pred{N_0}}{\Pred{N_1}}$.
          By inversion we have
          $\Boundary{r}\implies\parens{\FmtCode{D}=\FmtCode{C}\land\etc{\Pred{M_\e}=\Pred{N_\e}}}$.
          We need to show that $\FmtCode{D}=\FmtCode{C}$ and
          $\etc{\Pred{M_\e}=\Pred{N_\e}}$.

          \begin{enumerate}

            \item To see that $\FmtCode{D}=\FmtCode{C}$, we fix $i:\Dim$ and verify
              that $\FmtCode{D}i=\FmtCode{C}i$. Instantiating our induction
              hypothesis for the typewise separation of $\FmtCode{D}i$, we obtain $\FmtCode{D}i=\FmtCode{C}i$
              from $\Boundary{r}\implies\FmtCode{D}=\FmtCode{C}$.

            \item To see that $\Pred{M_\e}=\Pred{N_\e}$, we instantiate our
              induction hypothesis for the elementwise separation of $\FmtCode{D}\e$,
              obtaining $\Pred{M_\e}=\Pred{N_\e}$ from
              $\Boundary{r}\implies\Pred{M_\e}=\Pred{N_\e}$.

          \end{enumerate}
        \item\emph{Elementwise separation.}
          Fixing $P:\NerveEl{\CwfEq{i.C_i}{M_0}{M_1}}$ and
          $\Pred{P_0},\Pred{P_1}:\UPred{\CodeEq{\FmtCode{C}}{\Pred{M_0}}{\Pred{M_1}}}P$
          such that $\Boundary{r}\implies\Pred{P_0}=\Pred{P_1}$, we need to
          show that $\Pred{P_0}=\Pred{P_1}$; fixing $i:\Dim$, we verify that
          $\Pred{P_0}i=\Pred{P_1}i$. Using our induction hypothesis for the
          elementwise separation of $\FmtCode{C}i$, we obtain $\Pred{P_0}i=\Pred{P_1}i$
          from $\Boundary{r}\implies\Pred{P_0}i=\Pred{P_1}$.

      \end{enumerate}

    \item \emph{Case (boolean type).} We need to show that $\CodeBool$ is typewise and elementwise separated.
      \begin{enumerate}
        \item \emph{Typewise separation.} We need to show that for all $\FmtCode{B}:\Pred{\ClUni{n}}B$, if
          $\Boundary{r}\implies\CodeBool=\FmtCode{B}$, then
          $\CodeBool=\FmtCode{B}$. But this is immediate by considering the
          restriction maps for $\CodeBool$.

        \item \emph{Elementwise separation.} Fixing $M:\NerveEl{\CwfBool}$ and
          $\Pred{M_0},\Pred{M_1}:\UPred{\CodeBool}M$ such that
          $\Boundary{r}\implies\Pred{M_0}=\Pred{M_1}$, we need to show that
          $\Pred{M_0}=\Pred{M_1}$. We obtain our goal by case on
          $\Pred{M_0},\Pred{M_1}$, observing that the cross-cases
          $\Boundary{r}\implies\Inl{\ldots} = \Inr{\ldots}$ and
          $\Boundary{r}\implies\Inr{\ldots}=\Inl{\ldots}$ are impossible.

      \end{enumerate}

    \item \emph{Case (universe).} We need to show that $\CodeUni{m}$ is
      typewise and elementwise separated for $m<n$.
      \begin{enumerate}
        \item\emph{Typewise separation.}
          We need to show that for all $\FmtCode{B}:\Pred{\ClUni{n}}B$, if
          $\Boundary{r}\implies\CodeUni{m}=\FmtCode{B}$, then
          $\CodeUni{m}=\FmtCode{B}$. But this is immediate by considering the
          restriction maps for $\ClUni{n}$.

        \item\emph{Elementwise separation.} Elementwise separation of
          $\CodeUni{m}$ follows from the typewise separation part of the outer
          induction hypothesis at $m<n$. \qedhere

      \end{enumerate}

  \end{enumerate}

\end{proof}

\subsection{The universe type and its decoding}\label{sec:universe-and-decoding}

Next, we define a hierarchy of closed universes \`a la Tarski
$\Gl{\CwfUniv[n]}\in\CwfTy<\Pred>{\Gl{\Gamma}}$. Note that these are \emph{not}
universes \`a la Russell in the sense of
\cref{def:universes-russell}. For the syntactic part, take
$\CwfUniv[n]\in\CwfTy{\Gamma}$ itself; then, we define the computability family
using $\Pred{\ClUni{n}}$:
\begin{align*}
  \Pred{\CwfUniv[n]} &:
  \textstyle
  \prod_{\gamma:\Nerve{\Gamma}}
  \prod_{\Pred{\gamma}:\Pred{\Gamma}\gamma}
  \NerveEl{\ReIx{\gamma}{\CwfUniv[n]}}\to \SemUniv[n+1]
  \\
  \ldots
  &:
  \textstyle
  \prod_{\gamma:\Nerve{\Gamma}}
  \prod_{\Pred{\gamma}:\Pred{\Gamma}\gamma}
  \NerveTy[n]\to \SemUniv[n+1]
  \\
  \Pred{\CwfUniv[n]}\gamma\Pred{\gamma}A &=
  \Pred{\ClUni{n}}A
\end{align*}

We equip each type with regular coercion and homogeneous composition structure
for these universes, by recursion on the type codes. For readability, we leave
syntactic arguments like $A,M,\ldots$ implicit.

\begin{mathparpagebreakable}
  \mprset{sep=1em}
  \Rule{
    r,r':\Dim
    \\
    \FmtCode{A}: \textstyle\prod_{i:\Dim}\Pred{\ClUni{n}}A_i
    \\
    \Pred{M} : \UPred{\FmtCode{A} r} M
  }{
    \coe{i.\FmtCode{A}i}{r}{r'}{\Pred{M}} :
    \UPred{\FmtCode{A}r'}\parens[\big]{
      \CwfCoe{i.A_i}{r}{r'}{M}
    }
  }
  \and
  \Rule{
    r,r':\Dim
    \\
    \FmtCode{A} : \Pred{\ClUni{n}}{A}
    \\
    \Pred{M}:\UPred{\FmtCode{A}}M
    \\
    \textstyle
    \etc{
      \Pred{N_\e} :
      \prod_{i:\Dim}
      (s=\e)\to
      \braces*{
        x: \Pred{\FmtCode{A}} N_i
        \mid
        i=r\implies x=\Pred{M}
      }
    }
  }{
    \hcom{\FmtCode{A}}{r}{r'}{\Pred{M}}{
      \etcsys[\e]{s}{i.\Pred{N_\e}i}
    } :
    \UPred{\FmtCode{A}}\parens*{
      \CwfHcom{A}{r}{r'}{M}{
        \etcsys[\e]{s}{i.N_i}
      }
    }
  }
\end{mathparpagebreakable}

\bigskip
\LightRule
\bigskip

\begin{align*}
  \coe{i.\CodeBool}{r}{r'}{\Pred{M}} &= \Pred{M}
  \\
  \coe{i.\CodeUni{n}}{r}{r'}{\FmtCode{A}} &= \FmtCode{A}
  \\
  \coe{i.\CodePi{\FmtCode{A}}{\FmtCode{B}}}{r}{r'}{\Pred{M}}
  &=
  \lambda \Pred{N}.\,
  \coe{
    i.\FmtCode{B}\parens{
      \coe{i.\FmtCode{A}}{r'}{i}{\Pred{N}}
    }
  }{r}{r'}{
    \Pred{M}\parens[\big]{
      \coe{i.\FmtCode{A}}{r'}{r}{\Pred{N}}
    }
  }
  \\
  \coe{i.\CodeSg{\FmtCode{A}}{\FmtCode{B}}}{r}{r'}{\parens*{\Pred{M}_0,\Pred{M}_1}}
  &=
  \parens[\big]{
    \coe{i.\FmtCode{A}}{r}{r'}{
      \Pred{M}_0
    },
    \coe{
      i.
      \FmtCode{B}\parens{
        \coe{i.\FmtCode{A}}{r}{i}{\Pred{M}_0}
      }
    }{r}{r'}{\Pred{M}_1}
  }
  \\
  \coe{
    i.\CodeEq{\FmtCode{A}}{\Pred{N_0}}{\Pred{N_1}}
  }{r}{r'}{\Pred{M}}
  &=
  \lambda k.\,
  \com{i.\FmtCode{A} k}{r}{r'}{
    \Pred{M}k
  }{
    \etcsys{k}{\_.\Pred{N_\e}}
  }
\end{align*}

\bigskip
\LightRule
\bigskip

\begin{mathparpagebreakable}
  \Rule{
    \widetilde{\Pred{M}} =
    \coe{i.\FmtCode{A}_i}{r}{r'}{\Pred{M}}
    \\
    \widetilde{\Pred{N\e}} =
    \lambda i.\, \coe{i.\FmtCode{A}_i}{i}{r'}{\Pred{N}_i}
  }{
    \com{i.\FmtCode{A}_i}{r}{r'}{\Pred{M}}{
      \etcsys{s}{i.\Pred{N_\e}i}
    }
    =
    \hcom{\FmtCode{A}_{r'}}{r}{r'}{
      \widetilde{\Pred{M}}
    }{
      \etcsys{s}{i.\Pred{N_\e} i}
    }
  }
\end{mathparpagebreakable}

\bigskip
\LightRule
\bigskip

\begin{mathparpagebreakable}
  \Rule{}{
    \hcom{\CodeUni{n}}{r}{r'}{\CodeBool}{
      \etcsys{s}{\_.\CodeBool}
    } = \CodeBool
  }
  \and
  \Rule{}{
    \hcom{\CodeUni{n}}{r}{r'}{\CodeUni{k}}{
      \etcsys{s}{\_.\CodeUni{k}}
    } = \CodeUni{k}
  }
  \and
  \Rule{
    \widetilde{\FmtCode{A}} =
    \lambda i.\,
    \hcom{\CodeUni{n}}{r}{i}{\FmtCode{A}}{
      \etcsys{s}{i.\FmtCode{A'}i}
    }
    \\
    \widetilde{\FmtCode{B}} =
    \lambda \Pred{N}.\,
    \hcom{\CodeUni{n}}{r}{r'}{
      \FmtCode{B}\parens[\big]{
        \coe{i.\widetilde{\FmtCode{A}}i}{r'}{r}{\Pred{N}}
      }
    }{
      \etcsys{s}{
        i.
        \FmtCode{B'}i\parens[\big]{
          \coe{i.\widetilde{\FmtCode{A}}i}{r'}{i}{\Pred{N}}
        }
      }
    }
  }{
    \hcom{\CodeUni{n}}{r}{r'}{
      \CodePi{\FmtCode{A}}{\FmtCode{B}}
    }{
      \etcsys{s}{i. \CodePi{\FmtCode{A'}i}{\FmtCode{B'}i}}
    } =
    \CodePi{
      \widetilde{\FmtCode{A}}r'
    }{
      \widetilde{\FmtCode{B}}
    }
  }
  \and
  \Rule{
    \widetilde{\FmtCode{A}} =
    \lambda i.\,
    \hcom{\CodeUni{n}}{r}{i}{\FmtCode{A}}{
      \etcsys{s}{i.\FmtCode{A'}i}
    }
    \\
    \widetilde{\FmtCode{B}} =
    \lambda \Pred{N}.\,
    \hcom{\CodeUni{n}}{r}{r'}{
      \FmtCode{B}\parens[\big]{
        \coe{i.\widetilde{\FmtCode{A}}i}{r'}{r}{\Pred{N}}
      }
    }{
      \etcsys{s}{
        i.
        \FmtCode{B'}i\parens[\big]{
          \coe{i.\widetilde{\FmtCode{A}}i}{r'}{i}{\Pred{N}}
        }
      }
    }
  }{
    \hcom{\CodeUni{n}}{r}{r'}{
      \CodeSg{\FmtCode{A}}{\FmtCode{B}}
    }{
      \etcsys{s}{i. \CodeSg{\FmtCode{A'}i}{\FmtCode{B'}i}}
    } =
    \CodeSg{
      \widetilde{\FmtCode{A}}r'
    }{
      \widetilde{\FmtCode{B}}
    }
  }
  \and
  \Rule{
    \widetilde{\FmtCode{A}} =
    \lambda j,i.\,
    \hcom{\CodeUni{n}}{r}{j}{
      \FmtCode{A} i
    }{
      \etcsys{s}{j. \FmtCode{A'} j i}
    }
    \\
    \widetilde{M} =
    \com{j.\widetilde{\FmtCode{A}}j r}{r}{r'}{\Pred{M}}{
      \etcsys{s}{j.\Pred{M'}j}
    }
    \\
    \widetilde{N} =
    \com{j.\widetilde{\FmtCode{A}}j r'}{r}{r'}{\Pred{N}}{
      \etcsys{s}{j.\Pred{N'}j}
    }
  }{
    \hcom{\CodeUni{n}}{r}{r'}{
      \CodeEq{\FmtCode{A}}{\Pred{M}}{\Pred{N}}
    }{
      \etcsys{s}{i.
        \CodeEq{\FmtCode{A}' i}{
          \Pred{M'}i
        }{
          \Pred{N'}i
        }
      }
    } =
    \CodeEq{
      \widetilde{\FmtCode{A}}r'
    }{
      \widetilde{\Pred{M}}
    }{
      \widetilde{\Pred{N}}
    }
  }
  \\
  \Rule{%
  }{
    \hcom{\CodeBool}{r}{r'}{\Inl{\Refl}}{
      \etcsys{s}{\_.\Inl{\Refl}}
    } =
    \Inl{\Refl}
  }
  \and
  \Rule{%
  }{
    \hcom{\CodeBool}{r}{r'}{\Inr{\Refl}}{
      \etcsys{s}{\_.\Inr{\Refl}}
    } =
    \Inr{\Refl}
  }
  \and
  \Rule{%
  }{
    \hcom{
      \CodePi{\FmtCode{A}}{\FmtCode{B}}
    }{r}{r'}{\Pred{M}}{
      \etcsys{s}{i.\Pred{M'}i}
    }
    =
    \lambda \Pred{N}.\,
    \hcom{
      \FmtCode{B}\Pred{N}
    }{r}{r'}{
      \Pred{M}\Pred{N}
    }{
      \etcsys{s}{i.\Pred{M'}i N \Pred{N}}
    }
  }
  \and
  \Rule{
    \widetilde{\Pred{M}} =
    \lambda j.\,
    \hcom{\FmtCode{A}}{r}{j}{\Pred{M}}{
      \etcsys{s}{i.\Pred{M'}i}
    }
    \\
    \widetilde{\Pred{N}} =
    \com{i.
      \FmtCode{B}\parens[\big]{
        \widetilde{\Pred{M}} i
      }
    }{r}{r'}{\Pred{N}}{
      \etcsys{s}{\Pred{N'}i}
    }
  }{
    \hcom{
      \CodeSg{\FmtCode{A}}{\FmtCode{B}}
    }{r}{r'}{
      \parens*{\Pred{M},\Pred{N}}
    }{
      \etcsys{s}{i.
        \parens*{
          \Pred{M'}i,
          \Pred{N'}i
        }
      }
    } =
    \parens[\big]{
      \widetilde{\Pred{M}} r',
      \widetilde{\Pred{N}}
    }
  }
  \and
  \Rule{}{
    \hcom{
      \CodeEq{\FmtCode{A}}{\Pred{N_0}}{\Pred{N_1}}
    }{r}{r'}{\Pred{M}}{
      \etcsys{s}{i.\Pred{M'}i}
    } =
    \lambda j.\,
    \hcom{
      \FmtCode{A} j
    }{r}{r'}{\Pred{M} j}{
      \etcsys{s}{i.\Pred{M'}i j}
    }
  }
\end{mathparpagebreakable}


Next, we show that every element
$\Gl{A}\equiv\parens{A,\FmtCode{A}}\in\CwfEl<\Pred>{\Gl{\Gamma}}{\Gl{\CwfUniv[n]}}$ determines a
type $\Decode{\Gl{A}}\in\CwfTy<\Pred>[n]{\Gl{\Gamma}}$. For the syntactic part,
choose $A\in\CwfEl{\Gamma}{\CwfUniv[n]}$ itself (which is possible because
$\CwfUniv[n]$ is a universe \`a la Russell in $\Cx$). The computability
family is given as follows:
\begin{align*}
  \Pred{\Decode{\Gl{A}}}\gamma\Pred{\gamma} &= \UPred*{\FmtCode{A}\gamma\Pred{\gamma}}
\end{align*}

\subsection{The \emph{closed-universe} computability cwf}\label{sec:closed-computability}

Now, we are equipped to build a \emph{new} cwf $\CPred$, which we will show to
be a model of \XTT{} in \cref{sec:xtt-computability-model}. Let the underlying category of $\CPred$ be the same as
$\Pred$'s; we will choose new presheaves of types and elements, however.
\begin{align*}
  \CwfTy<\CPred>[n]{\Gl{\Gamma}} &= \CwfEl<\Pred>{\Gl{\Gamma}}{\Gl{\CwfUniv[n]}}
  \\
  \CwfEl<\CPred>{\Gl{\Gamma}}{\Gl{A}} &=
  \CwfEl<\Pred>{\Gl{\Gamma}}{\Decode{\Gl{A}}}
\end{align*}

When $\Gl{A}\in\CwfTy<\CPred>[k]{\Gl{\Gamma}}$ and $k\leq l$, we exhibit the
level restriction by taking $\TyLift{k}{l}{A}$ for the syntactic part, and
retaining $\Pred{A}$ for the semantic part, a move justified by the fact that
the type codes are invariant under lifting (\cref{lem:type-code-lift}).

Given $\Gl{\Gamma}:\CPred$ and $\Gl{A}\in\CwfTy<\CPred>[n]{\Gl{\Gamma}}$, we
need to exhibit the context comprehension $\Gl{\Gamma}.\Gl{A}:\CPred$ with the
appropriate projection map $\Gl{\Proj}$ and variable term $\Gl{\Var}$. We
choose the already-existing context comprehension $\Gl{\Gamma}.\Decode{\Gl{A}}$
inherited from $\Pred$; the projection map and variable term are likewise
inherited.
From \cref{lem:type-code-lift} we also immediately obtain
$\CwfEl<\CPred>{\Gl{\Gamma}}{\Gl{A}} =
\CwfEl<\CPred>{\Gl{\Gamma}}{\TyLift{k}{l}{\Gl{A}}}$ and $\Gl{\Gamma}.\Gl{A} =
\Gl{\Gamma}.\TyLift{k}{l}{\Gl{A}}$.

The projection $\PiSyn$ lifts from $\Pred$ to a fibration
$\CPred\rProjto^{\PiSyn}\Cx$, because the underlying categories of $\Pred$ and
$\CPred$ are identical. A cubical structure for $\CPred$ is obtained from the
composite $\CPred\rProjto^{\PiSyn}\Cx\rProjto^{\CubeFib}\AugCube$.

\subsection{A logical families model of \texorpdfstring{\XTT}{XTT}}\label{sec:xtt-computability-model}

In this section, we argue that the cwf $\CPred$ has the structure of a
model of \XTT.

\begin{construction}\label{con:nerve-ext-dim}

  In preparation, we first will observe that \emph{internally to $\AugCSet$}, we
  have the following operations for any $\Gl{\Gamma} : \CPred$ and fresh $i$:
  \begin{align}
    (-.-) &:\Nerve{\Gamma}\times\Dim\to\Nerve{\ReIx{\Wk{\imath}}{\Gamma}}
    \\
    (-.-) &:
    \textstyle
    \prod_{\gamma:\Nerve{\Gamma}}
    \prod_{r:\Dim}
    \Pred{\Gamma}\gamma\to \Pred{\parens*{\ReIx{\Wk{\imath}}{\Gamma}}}\parens*{\gamma.r}
  \end{align}
  \begin{enumerate}

    \item
      Fix $\Psi$ and $\CubeCx{\Psi}\rTo^\gamma\Gamma$ and $\IsDim<\Psi>{r}$, defining:
      \begin{diagram}
        \CubeCx{\Psi} & \rTo^{\CubeCx{r/i}} & \CubeCx{\Psi,i}\equiv\ReIx{\Wk{\imath}}{\CubeCx{\Psi}} &\rTo^{\Lift{\Wk{\imath}}{\gamma}} & \ReIx{\Wk{\imath}}{\Gamma}
      \end{diagram}
      Naturality is just the associativity of composition.

    \item Fix $\Psi$ and $\CubeCx{\Psi}\rTo^\gamma\Gamma$ and $\Pred{\gamma}\in
      \Pred{\Gamma}_\Psi\gamma$ and $\IsDim<\Psi>{r}$. We need to construct some element of
      $\Pred{\parens*{\ReIx{\Wk{\imath}}{\Gamma}}}_{\Psi}\parens{\gamma.r}$.
      Observing that $(\gamma,\Pred{\gamma})$ constitute a map
      $\Gl{\CubeCx{\Psi}}\rTo^{\Gl{\gamma}} \Gl{\Gamma}$ in $\CPred$, we see
      that our goal is in fact to transform $\Gl{\gamma}$ into a map
      $\Gl{\CubeCx{\Psi}}\rTo\ReIx{\Wk{\imath}}{\Gl{\Gamma}}$ which lies over $\gamma.r$. This we obtain in the
      same way as before, this time using the composite fibration
      $\CPred\rProjto^{\PiSyn}\Cx\rProjto^{\CubeFib}\AugCube$:
      \begin{diagram}
        \Gl{\CubeCx{\Psi}} & \rTo^{\Gl{\CubeCx{r/i}}} & \Gl{\CubeCx{\Psi,i}}\equiv\ReIx{\Wk{\imath}}{\Gl{\CubeCx{\Psi}}} &\rTo^{\Lift{\Wk{\imath}}{\Gl{\gamma}}} & \ReIx{\Wk{\imath}}{\Gl{\Gamma}}
      \end{diagram}
      By analogy, we write this map as $\Pred{\gamma}.r$. \qed
  \end{enumerate}
\end{construction}

\begin{construction}\label{con:nerve-dim-proj}
  When $\Gamma:\Cx$ lies over $\Psi$ and $\IsDim<\Psi>{r}$, we obtain a
  natural transformation $\Nerve{\Gamma}\to\Dim$; fixing $\CubeCx{\Phi}\rTo^\gamma\Gamma$, we obtain:
  \begin{diagram}
    \CubeCx{\Phi} & \rTo^\gamma & \Gamma
    \\
    \dProjMapsto^{\CubeFib} && \dProjMapsto_{\CubeFib}
    \\
    \Phi & \rTo^{\CubeFib{\gamma}} & \Psi
    \\
    &\rdDotsto_{\ReIx{\CubeFib{\gamma}}{r}} &\dTo_{(r/i)}
    \\
    && [i]
  \end{diagram}
  We will write $\gamma[r]:\Dim$ given $\gamma:\Nerve{\Gamma}$ and $\IsDim<\Psi>{r}$ when working internally.
  \qed
\end{construction}

\begin{proposition}
  If $\eta:\Nerve{\Gamma.r=s}$, then $\eta[r]=\eta[s]:\Dim$.
\end{proposition}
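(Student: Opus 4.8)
The plan is to reduce the claim, via the Kripke--Joyal semantics of $\AugCSet$, to the universal property of the equalizer that defines the constraint comprehension $\Gamma.r=s$. Since $\eta[r]$ and $\eta[s]$ name \emph{elements} of $\Dim$ rather than formulas, it suffices to fix a stage $\Phi:\AugCube$ and an element $\eta\in\Nerve{\Gamma.r=s}(\Phi)$---that is, a morphism $\eta:\CubeCx{\Phi}\to\Gamma.r=s$ of $\Cx$---and to show that $\eta[r]=\eta[s]$ in $\Dim(\Phi)$.

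Next I would unfold the operation $(-)[-]$ of \cref{con:nerve-dim-proj}. Recall from the construction of $\Gamma.\xi$ in the constraint-comprehension lemma that $\Gamma.r=s$ lies over the equalizer $\CubeFib{\Gamma}.(r=s)\to\CubeFib{\Gamma}$ of the pair $r,s:\CubeFib{\Gamma}\to\brackets{i}$; write $\Wk*{r=s}$ for this equalizing projection. Applying $\CubeFib$ to $\eta$ gives a map $\CubeFib{\eta}:\Phi\to\CubeFib{\Gamma}.(r=s)$, and by the definition of $(-)[-]$ we obtain $\eta[r]=r\circ\Wk*{r=s}\circ\CubeFib{\eta}$ and, identically, $\eta[s]=s\circ\Wk*{r=s}\circ\CubeFib{\eta}$---here using that, as dimensions over $\Gamma.r=s$, the expressions named $r$ and $s$ are the reindexed dimensions $r\circ\Wk*{r=s}$ and $s\circ\Wk*{r=s}$.

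The conclusion is then immediate: $r\circ\Wk*{r=s}=s\circ\Wk*{r=s}$ holds by the defining property of the equalizer, so postcomposing both sides with $\CubeFib{\eta}$ yields $\eta[r]=\eta[s]$; since these stagewise equalities are natural in $\Phi$, they assemble into the claimed internal equation. I expect no substantial obstacle here---the statement is essentially a one-line consequence of the equalizer's universal property---so the only point meriting care is the bookkeeping identification just mentioned, namely that over $\Gamma.r=s$ the two sides really are $r\circ\Wk*{r=s}$ and $s\circ\Wk*{r=s}$, which is precisely what makes the equalizer identity apply on the nose.
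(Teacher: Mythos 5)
Your proof is correct and is exactly the intended one-line argument: unfolding $(-)[-]$ from \cref{con:nerve-dim-proj}, the implicit reindexing of $r$ and $s$ along $\Wk*{r=s}$ makes the equalizer identity $r\circ\Wk*{r=s}=s\circ\Wk*{r=s}$ apply on the nose, and composing with $\CubeFib{\eta}$ gives the claim. The paper states this proposition without proof, so there is nothing to compare against; one tiny terminological slip is that you want to \emph{pre}compose $r\circ\Wk*{r=s}$ and $s\circ\Wk*{r=s}$ with $\CubeFib{\eta}$, not postcompose, but this does not affect the argument.
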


\begin{lemma}\label{lem:cpred-coe}
  $\CPred$ has regular coercion structure in the sense of \cref{def:regular-coercion-structure}.
\end{lemma}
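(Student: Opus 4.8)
The plan is to equip $\CPred$ with coercion by \emph{pairing} the regular coercion structure that $\Cx$ enjoys as a model of \XTT{} with the coercion operator on type codes defined in \cref{fig:type-codes}. First I would unfold the data: a type over $\Psi,i$ in $\CPred$ is a pair $\Gl{A}=(A,\Pred{A})$ with $A\in\CwfTy[n]{\ReIx{\Wk{\imath}}{\Gamma}}$ and $\Pred{A}$ a code-valued family over $\ReIx{\Wk{\imath}}{\Gl{\Gamma}}$. By \cref{con:nerve-ext-dim}, for each $\gamma:\Nerve{\Gamma}$ and $\Pred{\gamma}:\Pred{\Gamma}\gamma$ the assignment $i\mapsto\Pred{A}(\gamma.i)(\Pred{\gamma}.i)$ is a line of codes in $\prod_{i:\Dim}\Pred{\ClUni{n}}{A_i}$, which is exactly the input shape consumed by the code-level coercion; and by \cref{con:nerve-dim-proj} a dimension $r\in\CwfDim{\Gamma}$ induces an internal dimension $\gamma[r]:\Dim$.

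I would then define $\CwfCoe{i.\Gl{A}}{r}{r'}{\Gl{M}}=(N,\Pred{N})$, for $\Gl{M}=(M,\Pred{M})\in\CwfEl<\CPred>{\Gl{\Gamma}}{\ReIxCleave*{r/i}{\Gl{A}}}$, by taking $N=\CwfCoe{i.A}{r}{r'}{M}$ in $\Cx$ and
\[
  \Pred{N}=\lambda\gamma.\,\lambda\Pred{\gamma}.\;\coe{i.\Pred{A}(\gamma.i)(\Pred{\gamma}.i)}{\gamma[r]}{\gamma[r']}{\Pred{M}\gamma\Pred{\gamma}}.
\]
Checking that $(N,\Pred{N})$ is a well-formed glued element of $\ReIxCleave*{r'/i}{\Gl{A}}$ is a matter of matching indices: one invokes the typing rule for code-level coercion from \cref{fig:type-codes}, the naturality clauses of \cref{con:nerve-ext-dim,con:nerve-dim-proj}, and the syntactic naturality equation $\ReIx{\gamma}{\CwfCoe{i.A}{r}{r'}{M}}=\CwfCoe{i.\ReIx*{\Lift{\Wk{\imath}}{\gamma}}{A}}{\gamma[r]}{\gamma[r']}{\ReIx{\gamma}{M}}$ of $\Cx$'s coercion structure, so that the decoded code-level coercion lies over the correct $\Cx$-term. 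This bookkeeping is tedious but mechanical.

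It remains to verify adjacency, regularity, level restriction, and naturality. In each case the syntactic component is immediate from the fact that $\Cx$ has regular coercion structure, and the computability component reduces to the corresponding property of the code-level coercion: adjacency to the equation $\coe{i.\FmtCode{A}i}{r}{r}{\Pred{M}}=\Pred{M}$ (noting that $r=r'$ in $\CwfDim{\Gamma}$ forces $\gamma[r]=\gamma[r']$ for every $\gamma$); regularity to the fact that coercion along a line of codes constant in $i$ is the identity (the hypothesis $\Pred{A}=\ReIxCleave{\Wk{\imath}}{\Pred{A'}}$ makes $i\mapsto\Pred{A}(\gamma.i)(\Pred{\gamma}.i)$ constant); level restriction to \cref{lem:type-code-lift}; and naturality to naturality of the code-level coercion together with \cref{con:nerve-ext-dim,con:nerve-dim-proj}. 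So the real content is an auxiliary lemma, proved by mutual induction on type codes, that the code-level coercion and composition operators of \cref{fig:type-codes} themselves satisfy adjacency, regularity, level restriction, and naturality: the cases for $\CodeBool$ and $\CodeUni{j}$ are immediate from their defining equations, while the cases for dependent function, pair, and equality codes follow by unfolding the definitions and applying the induction hypotheses --- with the equality-code case of coercion also invoking the composition clauses, since it is defined through heterogeneous composition.

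I expect the main obstacle to be exactly this inductive lemma, and within it the interaction of coercion with composition: adjacency and regularity for the equality-code case (and, through it, for the $\CodeBool$ composition clauses) chain through several defining equations, and closing the argument requires that when a code is constant in the relevant direction so are the codes into which it decomposes, and that the Kan operations on $\CodeBool$ collapse on genuinely constant data --- which is where the elementwise separation established in \cref{lem:inductive-path-unicity} is needed, to reconcile the arbitrary data fed to these operators with the special forms for which \cref{fig:type-codes} supplies reductions. Once the lemma is in hand, the statement follows by the pairing argument above.
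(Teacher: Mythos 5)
Your proposal matches the paper's proof: the glued coercion is obtained by pairing $\Cx$'s syntactic coercion with the code-level coercion $\coe{j.\Pred{A}(\gamma.j)(\Pred{\gamma}.j)}{\gamma[r]}{\gamma[r']}{\Pred{M}\gamma\Pred{\gamma}}$, and adjacency, regularity, level restriction, and naturality are all reduced to an induction over the type codes of \cref{fig:type-codes}. The paper is considerably terser at that inductive step, dismissing adjacency and regularity as ``proved by induction on the graph of the realizer'' and ``evident by inspecting each of the clauses''; your further observation that one must argue the $\CodeBool$ composition clauses are exhaustive is a fair point about the paper's informality, though the natural tool for that is indecomposability of the representable $\Dim$ in $\AugCSet$ (so a line in a coproduct factors through a single summand) rather than \cref{lem:inductive-path-unicity}, which the paper does not invoke in its proof of this lemma.
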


\begin{proof}

  Fixing $\Gl{A}\in\CwfTy<\CPred>{\ReIx{\Wk{\imath}}{\Gl{\Gamma}}}$ over $\Psi,i$
  and dimensions $\IsDim<\Psi>{r,r'}$ and an element
  $\Gl{M}\in\CwfEl{\Gl{\Gamma}}{\ReIxCleave*{r/i}{A}}$, we must construct an
  element $\CwfCoe{i.\Gl{A}}{r}{r'}{\Gl{M}}$ which satisfies the adjacency,
  regularity and naturality equations. Taking $\CwfCoe{i.A}{r}{r'}{M}$ for the
  syntactic part, we construct its realizer as follows:
  \[
    \Pred{
      \parens*{\CwfCoe{i.\Gl{A}}{r}{r'}{\Gl{M}}}
    }\gamma\Pred{\gamma}
    =
    \coe{
      j.\Pred{A}\parens{\gamma.j}\parens{\Pred{\gamma}.j}
    }{\gamma[r]}{\gamma[r']}{
      \parens*{\Pred{M}\gamma\Pred{\gamma}}
    }
  \]

  \begin{itemize}

    \item \emph{Adjacency.} If $r=r'$ then $\CwfCoe{i.\Gl{A}}{r}{r'}{\Gl{M}} =
      \Gl{M}$. This holds already for the syntactic part, so it remains to see
      that our realizer preserves it; this is proved by induction on the graph
      of the realizer for coercion in \cref{sec:universe-and-decoding}.

    \item \emph{Regularity.} If $\Gl{A}=\ReIxCleave{\Wk{\imath}}{\Gl{A'}}$ for some
      $\Gl{A'}\in\CwfTy<\CPred>{\Gl{\Gamma}}$, then
      $\CwfCoe{i.\Gl{A}}{r}{r'}{\Gl{M}}=\Gl{M}$. As above, we need only observe
      that the realizer exhibits regularity, which is evident by inspecting
      each of the clauses in \cref{sec:universe-and-decoding}.

    \item \emph{Naturality.} For $\Gl{\Delta}\rTo^{\Gl{\gamma}}\Gl{\Gamma}$ we
      have $\ReIx{\Gl{\gamma}}{\CwfCoe{i.\Gl{A}}{r}{r'}{\Gl{M}}} =
      \CwfCoe{i.\ReIx*{\Lift{\Wk{\imath}}{\Gl{\gamma}}}{\Gl{A}}}{\ReIx{\CubeFib{\Gl{\gamma}}}{r}}{\ReIx{\CubeFib{\Gl{\gamma}}}{r'}}{\ReIx{\Gl{\gamma}}{\Gl{M}}}$.
      As above, this holds already of the syntactic part, so we need only to
      verify for each $\delta:\Nerve{\Delta},\Pred{\delta}:\Pred{\Delta}\delta$
      the following:
      \[
        \coe{
          j.
          \Pred{A}
          \parens{(\ReIx{\gamma}{\delta}).j}
          \parens{(\ReIx{\gamma}{\Pred{\delta}}).j}
        }{
          \parens{\ReIx{\gamma}{\delta}}[r]
        }{
          \parens{\ReIx{\gamma}{\delta}}[r']
        }{
          \Pred{M}
          \parens{\ReIx{\gamma}{\delta}}
          \parens{\ReIx{\gamma}{\Pred{\delta}}}
        }
        =
        \coe{
          j.
          \parens{\ReIx*{\Lift{\Wk{\imath}}{\gamma}}{\Pred{A}}}
          \parens{\delta.j}
          \parens{\Pred{\delta}.j}
        }{
          \delta[\ReIx{\Gl{\gamma}}{r}]
        }{
          \delta[\ReIx{\Gl{\gamma}}{r'}]
        }{
          \parens*{\ReIx{\Gl{\gamma}}{\Pred{M}}}\delta\Pred{\delta}
        }
      \]

      The above follows from the naturality of \cref{con:nerve-dim-proj}.
      \qedhere
  \end{itemize}

\end{proof}

\begin{lemma}\label{lem:cpred-hcom}
  $\CPred$ has regular homogeneous composition structure in the sense of \cref{def:regular-hcom-structure}.
\end{lemma}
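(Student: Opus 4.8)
The plan is to mirror the proof of \cref{lem:cpred-coe}, packaging the code-level homogeneous composition operator $\hcom{\FmtCode{A}}{-}{-}{-}{-}$ constructed in \cref{sec:universe-and-decoding} into the cwf structure of $\CPred$ and then verifying the three clauses of \cref{def:regular-hcom-structure}.

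First I would fix the data: a type $\Gl{A}=(A,\Pred{A})\in\CwfTy<\CPred>[n]{\Gl{\Gamma}}$, dimensions $r,r',s\in\CwfDim{\Gl{\Gamma}}$, an element $\Gl{M}=(M,\Pred{M})\in\CwfEl{\Gl{\Gamma}}{\Gl{A}}$, and tube elements $\Gl{N_\e}$ over $\ReIx{\Wk{\jmath}}{(\Gl{\Gamma}.s=\e)}$ satisfying the cap condition $\ReIxCleave*{r/j}{\Gl{N_\e}}=\Gl{M}$. For the syntactic component of $\CwfHcom{\Gl{A}}{r}{r'}{\Gl{M}}{\etcsys[\e]{s}{j.\Gl{N_\e}}}$ I take $\CwfHcom{A}{r}{r'}{M}{\etcsys[\e]{s}{j.N_\e}}$, which exists because $\Cx$ has regular homogeneous composition structure. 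For the realizer, working internally to $\AugCSet$ and fixing $\gamma:\Nerve{\Gamma}$ and $\Pred{\gamma}:\Pred{\Gamma}\gamma$, I set
\[
  \Pred{\parens*{\CwfHcom{\Gl{A}}{r}{r'}{\Gl{M}}{\etcsys[\e]{s}{j.\Gl{N_\e}}}}}\gamma\Pred{\gamma}
  =
  \hcom{\Pred{A}\gamma\Pred{\gamma}}{\gamma[r]}{\gamma[r']}{\Pred{M}\gamma\Pred{\gamma}}{\etcsys[\e]{\gamma[s]}{j.\,\Pred{N_\e}\parens{\gamma.j}\parens{\Pred{\gamma}.j}}},
\]
where $\gamma[r],\gamma[s]$ are the induced dimensions of \cref{con:nerve-dim-proj}, and $\gamma.j$, $\Pred{\gamma}.j$ are the dimension-extensions of \cref{con:nerve-ext-dim}. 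The cap condition $j=\gamma[r]\implies x=\Pred{M}\gamma\Pred{\gamma}$ required by the typing rule for the code-level $\hcom$ in \cref{sec:universe-and-decoding} is inherited from $\ReIxCleave*{r/j}{\Gl{N_\e}}=\Gl{M}$, using that \cref{con:nerve-dim-proj} is compatible with the constraint $s=\e$ (so that $\Pred{N_\e}$, a priori indexed over $\Nerve{\ReIx{\Wk{\jmath}}{(\Gamma.s=\e)}}$, restricts correctly along $\gamma.j$ via the glued constraint comprehension $\Pred{\parens*{\Gl{\Gamma}.s=\e}}$). By construction this realizer lands in $\UPred{\Pred{A}\gamma\Pred{\gamma}}$ of the syntactic $\CwfHcom{A}{r}{r'}{M}{\dots}$, so the pair is a legitimate glued element.

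Then I would verify the axioms. \emph{Adjacency}: when $r=r'$ the syntactic part equals $M$ by $\Cx$, and the realizer part equals $\Pred{M}\gamma\Pred{\gamma}$, checked by induction on the type code $\FmtCode{A}=\Pred{A}\gamma\Pred{\gamma}$: each defining clause for $\hcom$ on codes ($\CodeBool$, $\CodeUni{k}$, $\CodePi{\_}{\_}$, $\CodeSg{\_}{\_}$, $\CodeEq{\_}{\_}{\_}$) preserves it — e.g. in the $\CodePi$ case it reduces to $\lambda\Pred{N}.\hcom{\FmtCode{B}\Pred{N}}{r}{r}{\Pred{M}\Pred{N}}{\dots}$, which is $\lambda\Pred{N}.\Pred{M}\Pred{N}=\Pred{M}$ by the inductive hypothesis and $\eta$. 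The $s=\e$ sub-clause is handled the same way, reducing to $\ReIxCleave*{r'/j}{\Pred{N_\e}}$. \emph{Regularity}: when the tube is degenerate in $j$, i.e. $\Gl{N_\e}=\ReIxCleave{\Wk{\jmath}}{\Gl{N'_\e}}$, the syntactic part is $M$ by $\Cx$, and the realizer is again disposed of by induction on $\FmtCode{A}$, exactly as for coercion in \cref{lem:cpred-coe}. \emph{Naturality}: for $\Gl{\Delta}\rTo^{\Gl{\gamma}}\Gl{\Gamma}$ the syntactic equation holds since $\Cx$ is a model; for the realizer one reindexes along $\gamma$ and appeals to the naturality of \cref{con:nerve-ext-dim} and \cref{con:nerve-dim-proj}, so that $(\ReIx{\gamma}{\delta}).j$, $(\ReIx{\gamma}{\delta})[r]$, etc.\ rewrite as expected, leaving two literally identical $\hcom$-expressions on codes.

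I expect the main obstacle to be the bookkeeping around the fresh dimension $j$ and the constraint-extended contexts $\Gl{\Gamma}.s=\e$: one must be careful that the glued tube realizers $\Pred{N_\e}$ restrict correctly under $\gamma.j$, that the cap condition at $j=\gamma[r]$ matches the code-level $\hcom$ typing rule, and that all of this is stable under $\gamma$ — this requires threading \cref{con:nerve-ext-dim}, the glued constraint comprehension, and \cref{con:nerve-dim-proj} together carefully. Once this interface is pinned down, the verification of adjacency, regularity, and naturality is a routine induction on type codes, entirely parallel to \cref{lem:cpred-coe}, with the code-level equations of \cref{sec:universe-and-decoding} supplying every case.
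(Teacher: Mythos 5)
Your proposal matches the paper's proof essentially verbatim: the same choice of syntactic part from $\Cx$, the same realizer built from the code-level $\hcom$ of \cref{sec:universe-and-decoding} applied through $\gamma[r],\gamma[r'],\gamma.j,\Pred{\gamma}.j$, and the same disposal of adjacency and regularity by induction on type codes with naturality lifted from $\Cx$ as in \cref{lem:cpred-coe}. The one small notational slip is that the tube realizer should receive $\gamma.s{=}\e.j$ (first restricting along the constraint comprehension, then extending by $j$) rather than $\gamma.j$, but you flag precisely this subtlety in prose, so the content is right.
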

\begin{proof}

  Fix a type $\Gl{A}\in\CwfTy<\CPred>{\Gamma}$, dimensions
  $r,r',s\in\CwfDim{\Gl{\Gamma}}$, a cap $\Gl{M}\in\CwfEl{\Gl{\Gamma}}{\Gl{A}}$
  and a tube
  $\etc{\Gl{N_\e}\in\CwfEl{\ReIx{\Wk{\jmath}}{(\Gl{\Gamma}.s=\e)}}{\ReIxCleave{\Wk{\jmath}}{\ReIx{
    \Wk*{s=\e}}{\Gl{A}}}}}$ for fresh $j$ such that
  $\etc{\ReIxCleave*{r/j}{\Gl{N_\e}}=\Gl{M}}$. Choosing
  \begin{align*}
    &\Pred{
      \parens[\Big]{\CwfHcom{\Gl{A}}{r}{r'}{\Gl{M}}{\etcsys[\e]{s}{j.\Gl{N_\e}}}}
    }\gamma\Pred{\gamma}
    \\
    &=
    \hcom{\Pred{A}\gamma\Pred{\gamma}}{\gamma[r]}{\gamma[r']}{\Pred{M}\gamma\Pred{\gamma}}{
      \etcsys{s}{j.
        \Pred{N_\e}\parens*{
          \gamma.s=\e.j
        }\parens*{
          \Pred{\gamma}.j
        }
      }
    }
  \end{align*}

  The adjacency and regularity conditions hold by induction on the graph of
  the realizer defined in \cref{sec:universe-and-decoding}. The naturality condition
  lifts directly from $\Cx$ as in \cref{lem:cpred-coe}.

\end{proof}

\begin{lemma}
  $\CPred$ has the boolean type, in the sense of \cref{def:booleans}.
\end{lemma}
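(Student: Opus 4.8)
The plan is to build the glued boolean out of the boolean type of $\Cx$ together with the code $\CodeBool$, exactly as dependent equality types were handled. For \emph{formation} I set $\Gl{\CwfBool} = (\CwfBool, \lambda\gamma.\lambda\Pred{\gamma}.\CodeBool)$, using the $\Cx$-type $\CwfBool\in\CwfTy[n]{\Gamma}=\CwfEl{\Gamma}{\CwfUniv[n]}$ for the syntactic component and $\CodeBool:\Pred{\ClUni{n}}{\CwfBool}$ for the realizer — this is well typed since $\ReIx{\gamma}{\CwfBool}=\CwfBool$ by naturality of the booleans in $\Cx$. Unfolding the decoding operation, a glued element of $\Gl{\CwfBool}$ is then a pair $(M,\Pred{M})$ with $M\in\CwfEl{\Gamma}{\CwfBool}$ and $\Pred{M}$ a global element of $\prod_{\gamma:\Nerve{\Gamma}}\prod_{\Pred{\gamma}:\Pred{\Gamma}\gamma}\bigl((\ReIx{\gamma}{M}=\CwfTrue)+(\ReIx{\gamma}{M}=\CwfFalse)\bigr)$, by the definition of $\UPred{\CodeBool}$ in \cref{fig:type-codes}.

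For \emph{introduction} I take $\Gl{\CwfTrue}=(\CwfTrue,\lambda\gamma.\lambda\Pred{\gamma}.\Inl{\Refl})$ and $\Gl{\CwfFalse}=(\CwfFalse,\lambda\gamma.\lambda\Pred{\gamma}.\Inr{\Refl})$, which typecheck because $\ReIx{\gamma}{\CwfTrue}=\CwfTrue$ and $\ReIx{\gamma}{\CwfFalse}=\CwfFalse$ in $\Cx$. For \emph{elimination}, given glued data $\Gl{C},\Gl{M},\Gl{N_0},\Gl{N_1}$, I take $\CwfIf{C}{M}{N_0}{N_1}$ for the syntactic component; for the realizer I work internally to $\AugCSet$ and case-split on $\Pred{M}\gamma\Pred{\gamma}:(\ReIx{\gamma}{M}=\CwfTrue)+(\ReIx{\gamma}{M}=\CwfFalse)$. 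In the left case a proof $p:\ReIx{\gamma}{M}=\CwfTrue$ lets me transport $\Pred{N_0}\gamma\Pred{\gamma}$ into the required fiber, using that $\ReIx{\gamma}{\CwfIf{C}{M}{N_0}{N_1}}$ reduces to $\ReIx{\gamma}{N_0}$ and that the reindexed motive agrees at $\ReIx{\gamma}{M}$ with its value at $\CwfTrue$; the right case is symmetric with $\Pred{N_1}$. (Here I use that metatheoretic element-equality is proof-irrelevant, so $p$ is the unique such proof.)

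It then remains to check the \emph{computation}, \emph{level restriction} and \emph{naturality} conditions from \cref{def:booleans}. Level restriction holds on the syntactic side by the corresponding equation in $\Cx$, and on the realizer by invariance of type codes under lifting (\cref{lem:type-code-lift}), which leaves $\CodeBool$ unchanged; naturality holds on the syntactic side by $\Cx$, and on the realizer because $\Inl$, $\Inr$ and the internal case-split commute with restriction, together with the naturality of the nerve operations of \cref{con:nerve-dim-proj}. The \emph{main obstacle} — really just bookkeeping — is the computation rule: I must verify that when $\Gl{M}=\Gl{\CwfTrue}$ the eliminator's realizer reduces to $\Pred{N_0}\gamma\Pred{\gamma}$, and dually for $\Gl{\CwfFalse}$. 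This works because the realizer of $\Gl{\CwfTrue}$ is definitionally $\Inl{\Refl}$, so the internal case-split selects the left branch and the transport above is along $\Refl$, hence the identity; the care in the write-up is to arrange the transport in the elimination clause so that it visibly degenerates in this case. Disjointness of $\Inl$ and $\Inr$ (as exploited in \cref{lem:inductive-path-unicity}) ensures the two computation rules never conflict, completing the verification.
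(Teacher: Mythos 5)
Your construction matches the paper's own proof exactly: $\Cx$'s boolean for the syntactic part, $\CodeBool$, $\Inl{\Refl}$, $\Inr{\Refl}$ for the realizers, and an internal case-split on $\Pred{M}\gamma\Pred{\gamma}$ for the eliminator. The paper records the elimination realizer as a bare case analysis and dismisses the equational conditions as ``immediate,'' whereas you spell out the transport along the witness $p:\ReIx{\gamma}{M}=\CwfTrue$ that makes the fibers align — this is the correct justification of what the paper leaves implicit, so it is the same argument with a welcome extra sentence of care.
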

\begin{proof}
  \hfill
  \begin{itemize}

    \item \emph{Formation.} To exhibit
      $\Gl{\CwfBool}\in\CwfTy<\CPred>[n]{\Gl{\Gamma}}$, we choose $\CwfBool$
      itself for the syntactic part, and for its realizer we choose
      $\Pred{\CwfBool}\gamma\Pred{\gamma}=\CodeBool$.

    \item \emph{Introduction.} Choosing $\CwfTrue$ and $\CwfFalse$ for the
      syntactic parts, we exhibit realizers as follows:
      \begin{align*}
        \Pred{\CwfTrue}\gamma\Pred{\gamma} &= \Inl{\Refl}
        \\
        \Pred{\CwfFalse}\gamma\Pred{\gamma} &= \Inr{\Refl}
      \end{align*}

    \item \emph{Elimination.}
      Fixing $\Gl{C}\in\CwfTy<\CPred>[n]{\Gl{\Gamma}.\Gl{\CwfBool}}$ and
      $\Gl{M}\in\CwfEl<\CPred>{\Gl{\Gamma}}{\Gl{\CwfBool}}$ and
      $\Gl{N_0}\in\CwfEl<\CPred>{\Gl{\Gamma}}{\ReIx{\Snoc{\Id}{\Gl{\CwfTrue}}}{C}}$
      and
      $\Gl{N_1}\in\CwfEl<\CPred>{\Gl{\Gamma}}{\ReIx{\Snoc{\Id}{\Gl{\CwfFalse}}}{C}}$,
      we choose $\CwfIf{C}{M}{N_0}{N_1}\in\CwfEl{\Gamma}{\ReIx{\Snoc{\Id}{M}}}{C}$
      for the syntactic part, exhibiting its realizer as follows:
      \begin{align*}
        \Pred{
          \CwfIf{\Gl{C}}{\Gl{M}}{\Gl{N_0}}{\Gl{N_1}}
        }\gamma\Pred{\gamma}
        &=
        \begin{cases}
          \Pred{N_0}\gamma\Pred{\gamma}
          &\mbox{if } \Pred{M}\gamma\Pred{\gamma} = \Inl{\ldots}
          \\
          \Pred{N_1}\gamma\Pred{\gamma}
          &\mbox{if } \Pred{M}\gamma\Pred{\gamma} = \Inr{\ldots}
        \end{cases}
      \end{align*}

    \item \emph{Computation, naturality.} The satisfaction of the equational
      conditions is immediate.\qedhere

  \end{itemize}
\end{proof}

\begin{lemma}\label{thm:cpred-dependent-function-types}
  $\CPred$ has dependent function types in the sense of \cref{def:dependent-function-types}.
\end{lemma}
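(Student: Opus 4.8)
The plan is to obtain each item of \cref{def:dependent-function-types} by pairing the corresponding structure inherited from $\Cx$ with a realizer built from the $\FmtCode{pi}$ code and its decoding in \cref{fig:type-codes}. For \emph{formation}, given $\Gl{A}\in\CwfTy<\CPred>[n]{\Gl{\Gamma}}$ and $\Gl{B}\in\CwfTy<\CPred>[n]{\Gl{\Gamma}.\Gl{A}}$, I take $\CwfPi{A}{B}$ from $\Cx$ for the syntactic part and, for its realizer, set $\Pred{(\CwfPi{\Gl{A}}{\Gl{B}})}\gamma\Pred{\gamma} = \CodePi{\Pred{A}\gamma\Pred{\gamma}}{\lambda M.\,\lambda\Pred{M}.\,\Pred{B}\,(\Snoc{\gamma}{M})\,(\Pred{\gamma},\Pred{M})}$, using the isomorphism $\Nerve{\Gamma.A}\cong\sum_{\gamma:\Nerve{\Gamma}}\NerveEl{\ReIx{\gamma}{A}}$ together with the definition $\Pred{(\Gl{\Gamma}.\Gl{A})}\Snoc{\gamma}{M}=\sum_{\Pred{\gamma}}\UPred*{\Pred{A}\gamma\Pred{\gamma}}M$ to check that the bound argument lands in $\Pred{\ClUniFam{n}{\Pred{A}\gamma\Pred{\gamma}}}$; hence the whole expression is a valid element of $\Pred{\ClUni{n}}\parens*{\ReIx{\gamma}{\CwfPi{A}{B}}}$.

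For \emph{introduction}, \emph{elimination}, and \emph{computation} I again take $\CwfLam{M}$, $\CwfApp{M}{N}$, and the $\beta$-equation from $\Cx$. Since $\UPred{\CodePi{\FmtCode{A}}{\FmtCode{B}}}M$ is by definition the dependent function space $\prod_{N}\prod_{\Pred{N}}\UPred*{\FmtCode{B}N\Pred{N}}\parens*{\CwfApp{M}{N}}$ of realizers, the realizer of $\CwfLam{\Gl{M}}$ is the curried form of $\Pred{M}$, the realizer of $\CwfApp{\Gl{M}}{\Gl{N}}$ is $\Pred{M}\gamma\Pred{\gamma}\,\parens*{\ReIx{\gamma}{N}}\,\parens*{\Pred{N}\gamma\Pred{\gamma}}$, and the $\beta$-law on realizers is just the $\beta$-law of the metatheoretic $\prod$. \emph{Unicity}, \emph{level restriction} (via \cref{lem:type-code-lift}), and \emph{naturality} hold for the syntactic components because $\Cx$ is a model, and on realizers reduce to $\eta$ for the metatheoretic $\prod$, functoriality of $\Nerve{-}$, and the evident equations for the $\FmtCode{pi}$ code.

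The one step requiring genuine care is the \emph{coercion} equation. Its syntactic half holds because $\Cx$ already satisfies \cref{def:dependent-function-types}, so it remains to check agreement of realizers. By \cref{lem:cpred-coe} and the definition of $\Pred{(\CwfPi{\Gl{A}}{\Gl{B}})}$, the realizer of $\CwfCoe{i.\CwfPi{\Gl{A}}{\Gl{B}}}{r}{r'}{\Gl{M}}$ unfolds to $\coe{j.\CodePi{\cdots}{\cdots}}{\gamma[r]}{\gamma[r']}{\Pred{M}\gamma\Pred{\gamma}}$, and by the clause for $\coe{i.\CodePi{\FmtCode{A}}{\FmtCode{B}}}{r}{r'}{\Pred{M}}$ in \cref{fig:type-codes} this equals $\lambda\Pred{N}.\,\coe{i.\FmtCode{B}\parens*{\coe{i.\FmtCode{A}}{r'}{i}{\Pred{N}}}}{r}{r'}{\Pred{M}\parens*{\coe{i.\FmtCode{A}}{r'}{r}{\Pred{N}}}}$, which is precisely the realizer produced by interpreting the syntactic right-hand side (a $\CwfLam$ of the nested $\CwfCoe$/$\CwfApp$ term). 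Matching the two then amounts to unwinding the $(-.-)$ and $\gamma[-]$ operations of \cref{con:nerve-ext-dim,con:nerve-dim-proj} and invoking their naturality, exactly as in the proof of \cref{lem:cpred-coe}. I expect this last bookkeeping --- aligning the dimension-substitution plumbing on nerves with the recursively defined code-level coercion --- to be the main obstacle, but it is entirely routine given the lemmas already established.
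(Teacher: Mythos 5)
Your proposal is correct and takes essentially the same approach as the paper's proof: defining the realizer for $\CwfPi{\Gl{A}}{\Gl{B}}$ using the $\FmtCode{pi}$ code applied to the realizers of $\Gl{A}$ and $\Gl{B}$, currying/uncurrying for introduction and elimination, inheriting the equational laws from $\Cx$ and the metatheoretic $\prod$, and discharging the coercion equation by unfolding the code-level coercion clause from \cref{sec:universe-and-decoding} as determined by \cref{lem:cpred-coe}. The only difference is that you spell out a bit more of the nerve-level bookkeeping, which the paper compresses into ``immediate.''
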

\begin{proof}
  \hfill
  \begin{itemize}

    \item \emph{Formation.}
      Fixing $\Gl{A}\in\CwfTy<\CPred>[n]{\Gl{\Gamma}}$
      and a family $\Gl{B}\in\CwfTy<\CPred>[n]{\Gl{\Gamma}.\Gl{A}}$, we must
      exhibit a type $\CwfPi{\Gl{A}}{\Gl{B}}\in\CwfTy<\CPred>[n]{\Gl{\Gamma}}$.
      Choosing $\CwfPi{A}{B}$ for the syntactic part, we must exhibit a
      realizer for the type:
      \[
        \Pred{\CwfPi{\Gl{A}}{\Gl{B}}}\gamma\Pred{\gamma}
        =
        \CodePi{\Pred{A}\gamma\Pred{\gamma}}{
          \lambda N,\Pred{N}.\,
          \Pred{B}\Snoc{\gamma}{N}(\Pred{\gamma},\Pred{N})
        }
      \]

    \item \emph{Introduction.} Fixing
      $\Gl{M}\in\CwfEl<\CPred>{\Gl{\Gamma}.\Gl{A}}{\Gl{B}}$, we must exhibit an
      element
      $\CwfLam[\Gl{A}][\Gl{B}]{\Gl{M}}\in\CwfEl<\CPred>{\Gl{\Gamma}}{\CwfPi{\Gl{A}}{\Gl{B}}}$.
      Choosing $\CwfLam[A][B]{M}$ for the syntactic part, we code its realizer as
      follows:
      \[
        \Pred{\CwfLam[\Gl{A}][\Gl{B}]{\Gl{M}}}\gamma\Pred{\gamma} N\Pred{N} =
        \Pred{M}\Snoc{\gamma}{N}(\Pred{\gamma},\Pred{N})
      \]

    \item \emph{Elimination.} Fixing
      $\Gl{M}\in\CwfEl<\CPred>{\Gl{\Gamma}}{\CwfPi{\Gl{A}}{\Gl{B}}}$ and
      $\Gl{N}\in\CwfEl<\CPred>{\Gl{\Gamma}}{\Gl{A}}$, we need an element
      $\CwfApp[\Gl{A}][\Gl{B}]{\Gl{M}}{\Gl{N}}\in\CwfEl<\CPred>{\Gl{\Gamma}}{\ReIx{\Snoc{\Id}{\Gl{N}}}{\Gl{B}}}$.
      Choosing $\CwfApp[A][B]{M}{N}$ for the syntactic part, we code its realizer:
      \[
        \Pred{\CwfApp[\Gl{A}][\Gl{B}]{\Gl{M}}{\Gl{N}}}\gamma\Pred{\gamma} =
        \Pred{M}\gamma\Pred{\gamma} N\Pred{N}
      \]

    \item \emph{Computation, unicity, naturality.} These equations follow
      immediately from the fact that they hold of $\Cx$, and the corresponding
      properties of the dependent function types in $\AugCSet$.

    \item \emph{Coercion.} Supposing $\Gl{\Gamma} \rProjMapsto^\CubeFib \Psi,i$ and
      $\IsDim<\Psi>{r,r'}$ and
      $\Gl{M}\in\CwfEl<\CPred>{\ReIx*{r/i}{\Gl{\Gamma}}}{\ReIxCleave*{r/i}{\CwfPi{\Gl{A}}{\Gl{B}}}}$,
      we need to verify the following equation:
      \begin{mathpar}
        \CwfCoe{i.\CwfPi{\Gl{A}}{\Gl{B}}}{r}{r'}{\Gl{M}}
        =
        \CwfLam[
          \ReIxCleave*{r'/i}{\Gl{A}}
        ][
          \ReIxCleave*{r'/i}{\Gl{B}}
        ]{
          \CwfCoe{i.
            \ReIx{
              \Snoc{\Id}{
                \CwfCoe{i.\ReIx{\Proj}{\Gl{A}}}{r'}{i}{\Var}
              }
            }{\Gl{B}}
          }{r}{r'}{
            \CwfApp[\ReIx{\Proj}{\ReIxCleave*{r/i}{\Gl{A}}}][\ReIx{\CwfLift{\Proj}}{\ReIxCleave*{r/i}{\Gl{B}}}]{
              \ReIx{\Proj}{\Gl{M}}
            }{\CwfCoe{i. \ReIx{\Proj}{\Gl{A}}}{r'}{r}{\Var}}
          }
        }
      \end{mathpar}

      As this holds already of the syntactic part, we need only verify that it
      holds of the realizers (determined in \cref{lem:cpred-coe}); but this is immediate by the definition of
      realizer for coercion at $\CodePi{\FmtCode{A}}{\FmtCode{B}}$ in
      \cref{sec:universe-and-decoding}.
      \qedhere
  \end{itemize}

\end{proof}

\begin{lemma}
  $\CPred$ has dependent pair types in the sense of \cref{def:dependent-pair-types}.
\end{lemma}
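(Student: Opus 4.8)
The plan is to construct the $\Sigma$-structure on $\CPred$ in exact parallel with the $\Pi$-structure of \cref{thm:cpred-dependent-function-types}: take the syntactic component from $\Cx$ (which has dependent pair types by hypothesis) and the realizer component from the inductive type codes of \cref{fig:type-codes}. For \emph{formation}, given $\Gl{A}\in\CwfTy<\CPred>[n]{\Gl{\Gamma}}$ and a family $\Gl{B}\in\CwfTy<\CPred>[n]{\Gl{\Gamma}.\Gl{A}}$, I would take $\CwfSg{A}{B}$ for the syntactic part and set $\Pred{\CwfSg{\Gl{A}}{\Gl{B}}}\gamma\Pred{\gamma} = \CodeSg{\Pred{A}\gamma\Pred{\gamma}}{\lambda N,\Pred{N}.\,\Pred{B}\Snoc{\gamma}{N}(\Pred{\gamma},\Pred{N})}$, which typechecks precisely because $\Pred{B}$ is a glued type over $\Gl{\Gamma}.\Gl{A}$ and the comprehension $\Pred{(\Gl{\Gamma}.\Gl{A})}$ was defined as $\sum_{\Pred{\gamma}}\UPred{\ldots}$.

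For \emph{introduction} and \emph{elimination} I would take $\CwfPair[A][B]{M}{N}$, $\CwfFst[A][B]{M}$, $\CwfSnd[A][B]{M}$ from $\Cx$ and define their realizers by pairing and by first/second projection, using the clause $\UPred{\CodeSg{\FmtCode{A}}{\FmtCode{B}}}M = \sum_{\Pred{M_0}:\UPred{\FmtCode{A}}\CwfFst{M}}\UPred*{\FmtCode{B}(\CwfFst{M})\Pred{M_0}}\CwfSnd{M}$ from \cref{fig:type-codes}; for instance $\Pred{\CwfFst[\Gl{A}][\Gl{B}]{\Gl{M}}}\gamma\Pred{\gamma} = \pi_1(\Pred{M}\gamma\Pred{\gamma})$, and symmetrically for $\CwfSnd$ and $\CwfPair$. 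The \emph{computation} and \emph{unicity} equations then reduce to the $\beta$/$\eta$ laws for dependent sums in the internal type theory of $\AugCSet$ together with the fact that the corresponding equations already hold in $\Cx$; \emph{level restriction} is immediate from the invariance of type codes under lifting (\cref{lem:type-code-lift}), and the \emph{naturality} equations lift directly from $\Cx$ and from naturality of the semantic operations, exactly as in the $\Pi$ case.

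The only non-routine part is \emph{coercion}: I must check that the syntactic coercion equation for $\Sigma$ in \cref{def:dependent-pair-types}, phrased in terms of the cap $M_0 = \CwfCoe{i.A}{r}{r'}{\CwfFst{M}}$ and the filler $\CwfCoe{i.A}{r}{i}{\CwfFst{M}}$, is tracked by the realizer for coercion at $\CodeSg{\FmtCode{A}}{\FmtCode{B}}$ fixed in \cref{sec:universe-and-decoding}. That clause defines $\coe{i.\CodeSg{\FmtCode{A}}{\FmtCode{B}}}{r}{r'}{(\Pred{M}_0,\Pred{M}_1)}$ as the pair of $\coe{i.\FmtCode{A}}{r}{r'}{\Pred{M}_0}$ and a coercion of $\Pred{M}_1$ along the filler $i.\FmtCode{B}(\coe{i.\FmtCode{A}}{r}{i}{\Pred{M}_0})$, so the realizer already has precisely the shape of the syntactic rule, and the verification is a direct unfolding using that the realizer of $\CwfCoe{i.\Gl{A}}{r}{r'}{\Gl{M}}$ is exactly this $\coe$ on codes — the content of \cref{lem:cpred-coe}. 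I expect the mild bookkeeping of matching the $\Cx$-level filler with its realizer-level counterpart under $\PiSyn$ to be the main obstacle; everything else is a transcription of the dependent function type case.
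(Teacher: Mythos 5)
Your proposal is correct and takes essentially the same approach as the paper: the syntactic components are inherited from $\Cx$, the realizers for formation are built from $\CodeSg$, realizers for introduction and projections are pairing and $\pi_1,\pi_2$, the equational laws follow from $\Cx$ together with dependent sums in $\AugCSet$, and coercion is discharged by appeal to the realizer clause for $\coe$ at $\CodeSg{\FmtCode{A}}{\FmtCode{B}}$ exactly as in the $\Pi$ case. The paper simply says ``analogous to the dependent function type case'' for coercion, whereas you spell it out a bit more; otherwise the two proofs coincide clause by clause.
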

\begin{proof}
  \hfill
  \begin{itemize}

    \item \emph{Formation.} Fixing a type
      $\Gl{A}\in\CwfTy<\CPred>[n]{\Gl{\Gamma}}$ and a family
      $\Gl{B}\in\CwfTy<\CPred>[n]{\Gl{\Gamma}.\Gl{A}}$, we choose
      $\CwfSg{A}{B}$ for the syntactic part, and exhibit its realizer as
      follows:
      \[
        \Pred{\CwfSg{\Gl{A}}{\Gl{B}}}\gamma\Pred{\gamma} =
        \CodeSg{\Pred{A}\gamma\Pred{\gamma}}{
          \lambda N,\Pred{N}.\,
          \Pred{B}\Snoc{\gamma}{N}(\Pred{\gamma},\Pred{N})
        }
      \]

    \item \emph{Introduction.} Fixing
      $\Gl{M}\in\CwfEl<\CPred>{\Gl{\Gamma}}{\Gl{A}}$ and
      $\Gl{N}\in\CwfEl<\CPred>{\Gl{\Gamma}}{\ReIx{\Snoc{\Id}{\Gl{M}}}{\Gl{B}}}$, we choose
      $\CwfPair[A][B]{M}{N}$ for the syntactic part, coding its realizer as follows:
      \[
        \Pred{\CwfPair[\Gl{A}][\Gl{B}]{\Gl{M}}{\Gl{N}}}\gamma\Pred{\gamma} =
        \parens*{
          \Pred{M}\gamma\Pred{\gamma},\Pred{N}\gamma\Pred{\gamma}
        }
      \]

    \item \emph{Elimination.} Given
      $\Gl{M}\in\CwfEl<\CPred>{\Gl{\Gamma}}{\CwfSg{\Gl{A}}{\Gl{B}}}$, we choose
      $\CwfFst[A][B]{M}\in\CwfEl{\Gamma}{A}$ and $\CwfSnd[A][B]{M}$ for the syntactic parts
      of the elimination forms, and give their realizers as follows:
      \begin{align*}
        \Pred{\CwfFst[\Gl{A}][\Gl{B}]{\Gl{M}}}\gamma\Pred{\gamma} &= \pi_1\parens*{\Pred{M}\gamma\Pred{\gamma}}
        \\
        \Pred{\CwfSnd[\Gl{A}][\Gl{B}]{\Gl{M}}}\gamma\Pred{\gamma} &= \pi_2\parens*{\Pred{M}\gamma\Pred{\gamma}}
      \end{align*}

    \item \emph{Computation, unicity, naturality.} These are all immediate from
      the fact that they hold in $\Cx$, and the fact that analogous principles
      hold for the dependent pair types of $\AugCSet$.

    \item \emph{Coercion.} Analogous to \cref{thm:cpred-dependent-function-types}.
      \qedhere

  \end{itemize}
\end{proof}

\begin{lemma}
  $\CPred$ has dependent path types in the sense of \cref{def:path-types}.
\end{lemma}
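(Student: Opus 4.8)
The plan is to transport the dependent path structure of $\Cx$ to $\CPred$ componentwise: the syntactic parts are taken directly from $\Cx$, and the realizer parts are assembled from the $\CodeEq$ constructor and its computability family in \cref{fig:type-codes}. Fix $\Gl{\Gamma}:\CPred$. For \emph{formation}, given $\Gl{A}=(A,\Pred{A})\in\CwfTy<\CPred>[n]{\ReIx{\Wk{\imath}}{\Gl{\Gamma}}}$ and endpoints $\Gl{N_\epsilon}=(N_\epsilon,\Pred{N_\epsilon})$, I would take $\CwfPath{i.A}{N_0}{N_1}$ from $\Cx$ for the syntactic part and set
\[
  \Pred{\parens*{\CwfPath{i.\Gl{A}}{\Gl{N_0}}{\Gl{N_1}}}}\gamma\Pred{\gamma}
  =
  \CodeEq{\lambda j.\,\Pred{A}\parens*{\gamma.j}\parens*{\Pred{\gamma}.j}}{\Pred{N_0}\gamma\Pred{\gamma}}{\Pred{N_1}\gamma\Pred{\gamma}},
\]
using the dimension-extension operations $(-.-)$ of \cref{con:nerve-ext-dim}. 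For \emph{introduction}, given $\Gl{M}=(M,\Pred{M})\in\CwfEl<\CPred>{\ReIx{\Wk{\imath}}{\Gl{\Gamma}}}{\Gl{A}}$, I would take $\CwfPathLam[i.A]{M}$ with realizer $\lambda j.\,\Pred{M}\parens*{\gamma.j}\parens*{\Pred{\gamma}.j}$; for \emph{elimination}, given $\Gl{M}\in\CwfEl<\CPred>{\Gl{\Gamma}}{\CwfPath{i.\Gl{A}}{\Gl{N_0}}{\Gl{N_1}}}$ and $r\in\CwfDim{\Gl{\Gamma}}$, I would take $\CwfPathApp[i.A]{M}{r}$ with realizer $\parens*{\Pred{M}\gamma\Pred{\gamma}}\parens*{\gamma[r]}$, using the dimension projection $\gamma[r]:\Dim$ of \cref{con:nerve-dim-proj}. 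The endpoint equations $\CwfPathApp[i.\Gl{A}]{\Gl{M}}{\epsilon}=\Gl{N_\epsilon}$ then fall out of the boundary side-condition built into $\UPred{\CodeEq{-}{-}{-}}$ together with $\gamma[\epsilon]=\epsilon$.

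The remaining data of \cref{def:path-types}---\emph{computation}, \emph{unicity}, \emph{level restriction}, \emph{naturality}, and \emph{coercion}---are equations, and their syntactic components all hold because $\Cx$ already has dependent path types; so it suffices to verify the equations on realizers internally to $\AugCSet$. Computation, $\CwfPathApp[i.\Gl{A}]{\CwfPathLam[i.\Gl{A}]{i.\Gl{M}}}{r}=\ReIxCleave*{r/i}{\Gl{M}}$, and unicity follow from $\beta$- and $\eta$-conversion for dependent function types in $\AugCSet$ together with the coherence of \cref{con:nerve-ext-dim,con:nerve-dim-proj}: one uses that $\gamma$ extended by the dimension $\gamma[r]$ is, by construction, the reindexing of $\gamma$ along the substitution $r/i$, so that the introduction realizer applied at $\gamma[r]$ recovers $\Pred{M}$ suitably reindexed. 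Level restriction is immediate, since lifting leaves realizers untouched and $\CodeEq$ is stable under lifting by \cref{lem:type-code-lift}; naturality reduces to the naturality statements of the two constructions and the fact that $\CodeEq{-}{-}{-}$ is natural internally to $\AugCSet$.

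For the \emph{coercion} law I would invoke \cref{lem:cpred-coe}: the realizer of $\CwfCoe{j.\CwfPath{i.\Gl{A}}{\Gl{N_0}}{\Gl{N_1}}}{r}{r'}{\Gl{M}}$ is, by construction, the clause
\[
  \coe{i.\CodeEq{\FmtCode{A}}{\Pred{N_0}}{\Pred{N_1}}}{r}{r'}{\Pred{M}}
  =
  \lambda k.\,\com{i.\FmtCode{A}k}{r}{r'}{\Pred{M}k}{\etcsys{k}{\_.\Pred{N_\epsilon}}}
\]
of \cref{sec:universe-and-decoding}, which was designed precisely to mirror the path-abstracted composition on the right-hand side of the coercion equation of \cref{def:path-types}; the check thus amounts to unfolding the heterogeneous composition $\CwfCom$ into $\CwfHcom$ and $\CwfCoe$---both syntactically in $\Cx$ and at the level of realizers---and matching the two definitional expansions. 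Combined with the boundary separation of \cref{lem:inductive-path-unicity}, this upgrades $\CPred$ to a cwf with dependent \emph{equality} types.

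\textbf{Main obstacle.} The genuinely delicate point is the coherence bookkeeping surrounding \cref{con:nerve-ext-dim,con:nerve-dim-proj}. In particular, the $\CodeEq$ realizer in the formation clause must be well-formed \emph{on the nose}: the component $\Pred{N_\epsilon}\gamma\Pred{\gamma}$ is required to inhabit $\UPred{\Pred{A}\parens*{\gamma.\epsilon}\parens*{\Pred{\gamma}.\epsilon}}\parens*{\ReIx{\gamma}{N_\epsilon}}$, which forces $\Pred{A}\parens*{\gamma.\epsilon}\parens*{\Pred{\gamma}.\epsilon}$ to coincide strictly with the realizer of the restricted type $\ReIxCleave*{\epsilon/i}{\Gl{A}}$. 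Establishing these strict commutations of the dimension-extension and dimension-projection operations with reindexing is where the real work lies; once they are in hand, the path-type laws themselves are mechanical.
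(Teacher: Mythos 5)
Your proposal is correct and matches the paper's proof essentially verbatim: the same choice of syntactic parts inherited from $\Cx$, the same realizers built from $\CodeEq$ via the dimension-extension operations of \cref{con:nerve-ext-dim} and the dimension-projection of \cref{con:nerve-dim-proj}, the same dispatch of the equational laws to $\AugCSet$, and the same reduction of the coercion clause to the $\coe{i.\CodeEq{\cdot}{\cdot}{\cdot}}$ case in \cref{sec:universe-and-decoding}. The paper simply compresses what you spell out (``Computation, boundary, unicity and naturality. Immediate.'' and ``Coercion. Analogous to \cref{thm:cpred-dependent-function-types}.''), and the coherence concerns you flag about $(\gamma.\epsilon)$ and $\gamma[\epsilon]$ are exactly what \cref{con:nerve-ext-dim,con:nerve-dim-proj} are established to discharge.
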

\begin{proof}
  Here, we make use of \cref{con:nerve-ext-dim,con:nerve-dim-proj}.
  \begin{itemize}

    \item \emph{Formation.} Fixing
      $\Gl{A}\in\CwfTy<\CPred>[n]{\ReIx{\Wk{\imath}}{\Gl{\Gamma}}}$ over $\Psi,i$
      and elements
      $\etc{\Gl{N_\e}\in\CwfEl{\Gl{\Gamma}}{\ReIxCleave*{\e/i}{\Gl{A}}}}$, we
      choose $\CwfEq{i.A}{N_0}{N_1}$ for the syntactic part, coding its
      realizer as follows:
      \[
        \Pred{
          \CwfEq{i.\Gl{A}}{\Gl{N_0}}{\Gl{N_1}}
        }\gamma\Pred{\gamma}
        =
        \CodeEq{
          \lambda j.\,
          \Pred{A}\parens*{\gamma.j}\parens*{\Pred{\gamma}.j}
        }{
          \Pred{N_0}\gamma\Pred{\gamma}
        }{
          \Pred{N_1}\gamma\Pred{\gamma}
        }
      \]

    \item \emph{Introduction.} Given
      $\Gl{M}\in\CwfEl<\CPred>{\ReIx{\Wk{\imath}}{\Gl{\Gamma}}}{\Gl{A}}$, we choose
      $\CwfPathLam[i.A]{i.M}$ for the syntactic part, and exhibit its realizer as
      follows:
      \[
        \Pred{\CwfPathLam[i.\Gl{A}]{i.\Gl{M}}}\gamma\Pred{\gamma} =
        \lambda j.\,
        \Pred{M}\parens*{\gamma.j}\parens*{\Pred{\gamma}.j}
      \]

    \item \emph{Elimination.} Fixing
      $\Gl{M}\in\CwfEl<\CPred>{\Gl{\Gamma}}{\CwfPath{i.\Gl{A}}{\Gl{N_0}}{\Gl{N_1}}}$ and $\IsDim<\Psi>{r}$, we
      choose $\CwfPathApp[i.A]{M}{r}$ for the syntactic part; its realizer is analogous:
      \[
        \Pred{\CwfPathApp[i.\Gl{A}]{\Gl{M}}{r}}\gamma\Pred{\gamma} =
        \Pred{M}\gamma\Pred{\gamma}\parens*{\gamma[r]}
      \]

    \item \emph{Computation, boundary, unicity and naturality.} Immediate.

    \item \emph{Coercion.} Analogous to \cref{thm:cpred-dependent-function-types}.
      \qedhere

  \end{itemize}

\end{proof}

\begin{lemma}
  $\CPred$ has universes \`a la Russell in the sense of \cref{def:universes-russell}.
\end{lemma}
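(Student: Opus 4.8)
The plan is to peel back the definitions of the type and element structure of $\CPred$ from \cref{sec:closed-computability} and exhibit $\Gl{\CwfUniv[k]}$ by hand. Fix levels $k<l$ and a context $\Gl{\Gamma}:\CPred$. Since $\CwfTy<\CPred>[l]{\Gl{\Gamma}}$ is \emph{defined} to be $\CwfEl<\Pred>{\Gl{\Gamma}}{\Gl{\CwfUniv[l]}}$, producing $\Gl{\CwfUniv[k]}\in\CwfTy<\CPred>[l]{\Gl{\Gamma}}$ amounts to producing a glued element of $\Gl{\CwfUniv[l]}$ over $\Gl{\Gamma}$. For the syntactic component I take $\CwfUniv[k]\in\CwfEl{\Gamma}{\CwfUniv[l]}=\CwfTy[l]{\Gamma}$, which is legitimate because $\Cx$ has universes \`a la Russell and $k<l$. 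For the realizer, recall from \cref{sec:universe-and-decoding} that $\Pred{\CwfUniv[l]}\gamma\Pred{\gamma}X=\Pred{\ClUni{l}}X$ for $X:\NerveEl{\CwfUniv[l]}=\NerveTy[l]$ (using \cref{lem:el-univ-ty}); since $\ReIx{\gamma}{\CwfUniv[k]}=\CwfUniv[k]$ by naturality of universes in $\Cx$, it suffices to give, uniformly in $\gamma$ and $\Pred{\gamma}$, an element of $\Pred{\ClUni{l}}\CwfUniv[k]$. For this I take the constant family $\lambda\gamma\Pred{\gamma}.\,\CodeUni{k}$, which typechecks by the inductive clause $\CodeUni{k}:\Pred{\ClUni{l}}\CwfUniv[k]$ (valid since $k<l$; see \cref{fig:type-codes}). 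This defines $\Gl{\CwfUniv[k]}=(\CwfUniv[k],\,\lambda\gamma\Pred{\gamma}.\CodeUni{k})$.

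Next I would verify the Russell equation $\CwfEl<\CPred>{\Gl{\Gamma}}{\Gl{\CwfUniv[k]}}=\CwfTy<\CPred>[k]{\Gl{\Gamma}}$. By definition $\CwfEl<\CPred>{\Gl{\Gamma}}{\Gl{A}}=\CwfEl<\Pred>{\Gl{\Gamma}}{\Decode{\Gl{A}}}$ and $\CwfTy<\CPred>[k]{\Gl{\Gamma}}=\CwfEl<\Pred>{\Gl{\Gamma}}{\Gl{\CwfUniv[k]}}$, so it is enough to show that the decoding $\Decode{\Gl{\CwfUniv[k]}}$ agrees with the Tarski-style universe $\Gl{\CwfUniv[k]}$ constructed in \cref{sec:universe-and-decoding}. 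Both have syntactic part $\CwfUniv[k]$ (present at every level above $k$ in $\Cx$), and both have realizer $\gamma,\Pred{\gamma}\mapsto\Pred{\ClUni{k}}$: indeed $\Pred{\Decode{\Gl{\CwfUniv[k]}}}\gamma\Pred{\gamma}=\UPred{\CodeUni{k}}$, which is $\Pred{\ClUni{k}}$ by the defining clause $\UPred{\CodeUni{k}}=\Pred{\ClUni{k}}$, matching the realizer of $\Gl{\CwfUniv[k]}$. (The only bookkeeping is that $\Pred{\ClUni{k}}$ nominally lands in $\SemUniv[k+1]$, absorbed into $\SemUniv[l]$ via $\SemUniv[k+1]\subseteq\SemUniv[l]$.) Hence $\Decode{\Gl{\CwfUniv[k]}}=\Gl{\CwfUniv[k]}$, and the Russell equation follows.

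Finally I would discharge the two structural equations of \cref{def:universes-russell}. Naturality $\ReIx{\Gl{\gamma}}{\Gl{\CwfUniv[k]}}=\Gl{\CwfUniv[k]}$ holds because the syntactic part $\CwfUniv[k]$ is natural in $\Cx$ and the realizer $\lambda\gamma\Pred{\gamma}.\CodeUni{k}$ is constant, hence unaffected by reindexing. Level invariance $\TyLift{l}{m}{\Gl{\CwfUniv[k]}}=\Gl{\CwfUniv[k]}$ holds because, in $\CPred$, level lifting only lifts the syntactic part (\cref{sec:closed-computability}) — giving $\TyLift{l}{m}{\CwfUniv[k]}=\CwfUniv[k]$ by universe lifting in $\Cx$ — while retaining the realizer, now viewed as valued in $\Pred{\ClUni{m}}\CwfUniv[k]$, which is the same set by \cref{lem:type-code-lift}. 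I do not anticipate a genuine obstacle: the whole argument is definitional unfolding plus the Russell and level-invariance properties already available for the base cwf $\Cx$. The one step requiring care is the identification $\Decode{\Gl{\CwfUniv[k]}}=\Gl{\CwfUniv[k]}$, since it rests on the precise clause $\UPred{\CodeUni{k}}=\Pred{\ClUni{k}}$ in the definition of the decoding family and on tracking the $\SemUniv$-level bookkeeping correctly.
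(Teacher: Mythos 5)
Your proof is correct and follows exactly the same route as the paper: choose $\CwfUniv[k]$ with constant realizer $\CodeUni{k}$, then reduce the Russell equation $\CwfEl<\CPred>{\Gl{\Gamma}}{\Gl{\CwfUniv[k]}} = \CwfTy<\CPred>[k]{\Gl{\Gamma}}$ to the identity $\Decode{\Gl{\CwfUniv[k]}} = \Gl{\CwfUniv[k]}$ (the Tarski universe in $\Pred$), which is discharged by the defining clause $\UPred{\CodeUni{k}}=\Pred{\ClUni{k}}$. You additionally spell out the naturality and level-invariance equations, which the paper leaves implicit; your argument for those (constant realizer plus \cref{lem:type-code-lift}) is also correct.
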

\begin{proof}
  Fixing $\Gl{\Gamma}:\CPred$ and levels $k<l$, we need a type
  $\Gl{\CwfUniv[k]}\in\CwfTy<\CPred>[l]{\Gl{\Gamma}}$ such that
  $\CwfEl<\CPred>{\Gl{\Gamma}}{\Gl{\CwfUniv[k]}} =
  \CwfTy<\CPred>[k]{\Gl{\Gamma}}$. For the syntactic part, we simply choose
  $\CwfUniv[k]$; for its realizer:
  \[
    \Pred{\CwfUniv[k]}\gamma\Pred{\gamma} =
    \CodeUni{k}
  \]

  To see that the condition is met, we first observe that
  $\CwfEl<\CPred>{\Gl{\Gamma}}{\Gl{\CwfUniv[k]}} =
  \CwfEl<\Pred>{\Gl{\Gamma}}{\Decode{\Gl{\CwfUniv[k]}}}$; and moreover,
  $\CwfTy<\CPred>[k]{\Gl{\Gamma}} = \CwfEl<\Pred>{\Gl{\Gamma}}{\Gl{\CwfUniv[k]}}$. Therefore, it suffices to show that
  $\Decode{\Gl{\CwfUniv[k]}} = \Gl{\CwfUniv[k]}$.
  \begin{align*}
    \Decode{\Gl{\CwfUniv[k]}} &=
    \parens*{\CwfUniv[k],\lambda\gamma\Pred{\gamma}. \UPred*{\Pred{\CwfUniv[k]}\gamma\Pred{\gamma}}}
    \\
    &=
    \parens*{\CwfUniv[k],\lambda\gamma\Pred{\gamma}. \UPred{\CodeUni{k}}}
    \\
    &=
    \parens*{\CwfUniv[k],\lambda\gamma\Pred{\gamma}. \Pred{\ClUni{k}}}
    \\
    &=
    \Gl{\CwfUniv[k]} \qedhere
  \end{align*}

\end{proof}

\begin{lemma}\label{lem:cpred-path-unicity}
  $\CPred$ has boundary separation in the sense of \cref{def:boundary-separation}.
\end{lemma}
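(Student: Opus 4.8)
The statement unfolds to the claim that the presheaves $\CwfTy<\CPred>$ and $\CwfEl<\CPred>$ on $\CPred\times\Lev$ are $\Cov$-separated, where $\Cov$ is, as on $\Cx$, generated by the constraint weakenings $\braces*{\Gl{\Gamma}.r=\e \rTo^{\Wk*{r=\e}} \Gl{\Gamma}}_{\e\in\Two}$ for $r\in\CwfDim{\Gl{\Gamma}}$. The plan is to split each of the two separation claims into a \emph{syntactic} half, handled by the boundary separation of $\Cx$, and a \emph{realizer} half, handled by \cref{lem:inductive-path-unicity}. Recall that an element of $\CwfTy<\CPred>[n]{\Gl{\Gamma}}$ is a pair $(A,\Pred{A})$ with $A\in\CwfTy[n]{\Gamma}$ and $\Pred{A}$ a global element of $\prod_{\gamma:\Nerve{\Gamma}}\prod_{\Pred{\gamma}:\Pred{\Gamma}\gamma}\Pred{\ClUni{n}}\parens{\ReIx{\gamma}{A}}$, and that an element of $\CwfEl<\CPred>{\Gl{\Gamma}}{\Gl{A}}$ is a pair $(M,\Pred{M})$ with $M\in\CwfEl{\Gamma}{A}$ and $\Pred{M}$ a global element of $\prod_{\gamma}\prod_{\Pred{\gamma}}\UPred*{\Pred{A}\gamma\Pred{\gamma}}\parens{\ReIx{\gamma}{M}}$.

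First I would dispose of the syntactic half. Fix $r\in\CwfDim{\Gl{\Gamma}}$ and two glued types (or two glued elements of a common type) that agree after restriction along $\Wk*{r=\e}$ for $\e\in\Two$. Since the cubical and constraint-comprehension structure of $\CPred$ is inherited along the fibration $\CPred\rProjto^{\PiSyn}\Cx$, the restriction maps of $\CwfTy<\CPred>$ and $\CwfEl<\CPred>$ project onto those of $\CwfTy$ and $\CwfEl$ in $\Cx$; hence the syntactic components already agree after restriction along $\braces*{\Gamma.r=\e \rTo \Gamma}_{\e}$. As $\Cx$ is a model of \XTT{}, it has boundary separation (\cref{def:boundary-separation}), so the syntactic components coincide; call the common value $A$, respectively $M$.

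It remains to identify the realizers, and here I would pass to the internal language of $\AugCSet$. Consider the type case, with realizers $\Pred{A},\Pred{B}$ of the common $A$. By function extensionality it suffices to fix $\gamma:\Nerve{\Gamma}$ and $\Pred{\gamma}:\Pred{\Gamma}\gamma$ and show $\Pred{A}\gamma\Pred{\gamma}=\Pred{B}\gamma\Pred{\gamma}$ in $\Pred{\ClUni{n}}\parens{\ReIx{\gamma}{A}}$. The dimension $r$ determines an internal $\gamma[r]:\Dim$ via \cref{con:nerve-dim-proj}. By \cref{lem:inductive-path-unicity} the code $\Pred{A}\gamma\Pred{\gamma}$ is typewise separated, i.e.\ $\LocallyPathUnique{\Pred{A}\gamma\Pred{\gamma}}$ holds; instantiating that formula at $\gamma[r]$, at $\ReIx{\gamma}{A}$, and at $\Pred{B}\gamma\Pred{\gamma}$, the goal reduces to proving the implication $\Boundary{\gamma[r]}\implies\Pred{A}\gamma\Pred{\gamma}=\Pred{B}\gamma\Pred{\gamma}$. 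The element case is strictly analogous: there $A$ is the common type and $M$ the common syntactic element, and one applies instead the elementwise-separation clause $\HasPathUnicity{\Pred{A}\gamma\Pred{\gamma}}$ of \cref{lem:inductive-path-unicity} to reduce to $\Boundary{\gamma[r]}\implies\Pred{M}\gamma\Pred{\gamma}=\Pred{N}\gamma\Pred{\gamma}$.

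Finally, to discharge $\Boundary{\gamma[r]}\implies\Pred{A}\gamma\Pred{\gamma}=\Pred{B}\gamma\Pred{\gamma}$, I would re-run the Kripke--Joyal bookkeeping from the proof of \cref{lem:ty-nerve-path-unicity}: at a stage $\Psi$ at which $\gamma[r]$ evaluates to a constant $\e$, the universal property of the equalizer presenting the constraint comprehension $\Gamma.r=\e$, together with the cleaving of $\CubeFib$, exhibits $\gamma$ as a factorisation $\Wk*{r=\e}\circ\eta$ for a unique $\eta:\Nerve{\Gamma.r=\e}$ at that stage; moreover $\Pred{\gamma}$ is, at that stage, an element of $\Pred{\parens*{\Gl{\Gamma}.r=\e}}$ over $\eta$, its defining constraint being satisfied. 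Then $\Pred{A}\gamma\Pred{\gamma}=\parens{\ReIx{\Wk*{r=\e}}{\Pred{A}}}\eta\Pred{\gamma}=\parens{\ReIx{\Wk*{r=\e}}{\Pred{B}}}\eta\Pred{\gamma}=\Pred{B}\gamma\Pred{\gamma}$ by the restriction hypothesis, and dually for elements. I expect this last step --- reconciling the external $\Cov$-cover $\braces*{\Gl{\Gamma}.r=\e\rTo\Gl{\Gamma}}$ with the internal boundary formula $\Boundary{\gamma[r]}$, including compatibility with the realizer data $\Pred{\gamma}$ --- to be the only delicate point; but it is a direct re-run of the argument already carried out for $\Cx$ in \cref{lem:ty-nerve-path-unicity}, combined with the definition of constraint comprehension in $\CPred$.
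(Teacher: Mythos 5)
Your proof is correct and follows essentially the same route as the paper's: dispose of the syntactic components via boundary separation in $\Cx$, then fix $\gamma,\Pred{\gamma}$ and apply the typewise/elementwise separation from \cref{lem:inductive-path-unicity} to the realizers. Your final paragraph spells out, via Kripke--Joyal, the step the paper leaves as an unexplained ``equivalent to'' (identifying the external $\Cov$-restriction hypothesis with the internal implication $\Boundary{\gamma[r]}\implies\cdots$), which is a welcome elaboration rather than a different argument.
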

\begin{proof}
  Because $\Cx$ has boundary separation, we need only to see that this property lifts
  to the realizers. Therefore, it suffices to show the following:
  \begin{itemize}

    \item \emph{Types.} For all $\Gl{A},\Gl{B}\in\CwfTy<\CPred>{\Gl{\Gamma}}$
      and $r\in\CwfDim<\CPred>{\Gl{\Gamma}}$, we must verify the following implication:
      \[
        \parens{
          \forall\gamma\Pred{\gamma}.\,
          \etc{\Pred{A}(\gamma.\ReIx{\gamma}{r}=\e)\Pred{\gamma}=\Pred{B}(\gamma.\ReIx{\gamma}{}r=\e)\Pred{\gamma}}
        }
        \implies
        \forall\gamma\Pred{\gamma}.\,
        \Pred{A}\gamma\Pred{\gamma}=\Pred{B}\gamma\Pred{\gamma}
      \]
      Fixing $\gamma$ and $\Pred{\gamma}$, it suffices to show:
      \[
        \etc{\Pred{A}(\gamma.\ReIx{\gamma}{r}=\e)\Pred{\gamma}=\Pred{B}(\gamma.\ReIx{\gamma}{r}=\e)\Pred{\gamma}}
        \implies
        \Pred{A}\gamma\Pred{\gamma}=\Pred{B}\gamma\Pred{\gamma}
      \]
      But this is equivalent to the following, which is obtained from the typewise separation of $\Pred{A}\gamma\Pred{\gamma}$, a consequence of
      \cref{lem:inductive-path-unicity}:
      \[
        \parens{
          \Boundary{\ReIx{\gamma}{r}}
          \implies
          \Pred{A}\gamma\Pred{\gamma}=\Pred{B}\gamma\Pred{\gamma}
        }
        \implies
        \Pred{A}\gamma\Pred{\gamma}=\Pred{B}\gamma\Pred{\gamma}
      \]

    \item \emph{Elements.} For all $\Gl{A}\in\CwfTy<\CPred>{\Gl{\Gamma}}$ and
      $M,N\in\CwfTy<\CPred>{\Gl{\Gamma}}{\Gl{A}}$ and
      $r\in\CwfDim<\CPred>{\Gl{\Gamma}}$, we must verify the following
      implication:
      \[
        \parens{
          \forall\gamma\Pred{\gamma}.\,
          \etc{\Pred{M}(\gamma.\ReIx{\gamma}{r}=\e)\Pred{\gamma}=\Pred{N}(\gamma.\ReIx{\gamma}{}r=\e)\Pred{\gamma}}
        }
        \implies
        \forall\gamma\Pred{\gamma}.\,
        \Pred{M}\gamma\Pred{\gamma}=\Pred{N}\gamma\Pred{\gamma}
      \]

      This follows in an analogous way to the above from
      \cref{lem:inductive-path-unicity}, using the elementwise separation of
      $\Pred{A}$. \qedhere

  \end{itemize}

\end{proof}

\begin{lemma}
  $\CPred$ has type-case in the sense of \cref{def:type-case}.
\end{lemma}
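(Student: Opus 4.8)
The plan is to define type-case in $\CPred$ by a single case split on the inductive family of type codes. Given the data of \cref{def:type-case} over some $\Gl{\Gamma}:\CPred$ --- a motive $\Gl{C}$, a scrutinee $\Gl{X}\in\CwfEl<\CPred>{\Gl{\Gamma}}{\Gl{\CwfUniv[k]}}$, and glued branches $\Gl{M_\Pi},\Gl{M_\Sigma},\Gl{M_{\mathbf{Eq}}},\Gl{M_{\CwfBool}},\Gl{M_{\mathbf{U}}}$ --- I take for the syntactic part of the result the element $\CwfUCase{C}{X}{M_\Pi}{M_\Sigma}{M_{\mathbf{Eq}}}{M_{\CwfBool}}{M_{\mathbf{U}}}$ supplied by the type-case structure of $\Cx$ (which has it, being a model of \XTT). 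For the realizer at $(\gamma,\Pred{\gamma})$ I inspect the code $\FmtCode{X}\triangleq\Pred{X}\gamma\Pred{\gamma}:\Pred{\ClUni{k}}(\ReIx{\gamma}{X})$ and dispatch on its head constructor; this is a legitimate definition internally to $\AugCSet$, since $\Pred{\ClUni{k}}$ is an honest indexed inductive type and --- type-case being non-recursive --- only a single layer of case analysis is required.

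In the $\CodeBool$ and $\CodeUni{m}$ cases we have $\ReIx{\gamma}{X}=\CwfBool$ resp.\ $\CwfUniv[m]$, so by the computation and naturality equations for type-case in $\Cx$ the syntactic part reindexed by $\gamma$ equals $\ReIx{\gamma}{M_{\CwfBool}}$ resp.\ $\ReIx{\gamma}{M_{\mathbf{U}}}$, and I take $\Pred{M_{\CwfBool}}\gamma\Pred{\gamma}$ resp.\ $\Pred{M_{\mathbf{U}}}\gamma\Pred{\gamma}$. In the $\CodePi{\FmtCode{A}}{\FmtCode{B}}$ case, $\ReIx{\gamma}{X}=\CwfPi{A}{B}$; I set $\delta\triangleq\Snoc{\Snoc{\gamma}{A}}{\CwfLam[A][\CwfUniv[k]]{B}}:\Nerve{\Gamma.\CwfUniv[k].\CwfPi{\Var}{\CwfUniv[k]}}$ and $\Pred{\delta}\triangleq(\Pred{\gamma},\FmtCode{A},\FmtCode{B})$, where $\FmtCode{A}$ realizes the first variable $\Var$ and $\FmtCode{B}$ realizes the element $\CwfLam[A][\CwfUniv[k]]{B}$ of $\CwfPi{\Var}{\CwfUniv[k]}$ --- observing that the realizer type of that element, $\prod_{N:\NerveEl{A}}\UPred{\FmtCode{A}}{N}\to\Pred{\ClUni{k}}(\ReIx{\Snoc{\Id}{N}}{B})$, is by definition exactly the type $\Pred{\ClUniFam{k}{\FmtCode{A}}}B$ of $\FmtCode{B}$. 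The realizer is then $\Pred{M_\Pi}\delta\Pred{\delta}$, which lands in $\UPred{\Pred{C}\gamma\Pred{\gamma}}{\ReIx{\delta}{M_\Pi}}$ because the motive of $M_\Pi$ is $C$ reindexed along two projections, and $\ReIx{\delta}{M_\Pi}=\ReIx{\gamma}{\CwfUCase{C}{X}{M_\Pi}{M_\Sigma}{M_{\mathbf{Eq}}}{M_{\CwfBool}}{M_{\mathbf{U}}}}$ by the type-case computation rule of $\Cx$. The $\CodeSg{\FmtCode{A}}{\FmtCode{B}}$ case is verbatim, and the $\CodeEq{\FmtCode{A}}{\Pred{N_0}}{\Pred{N_1}}$ case is analogous using the five-fold extended substitution of \cref{def:type-case} and $\Pred{\delta}\triangleq(\Pred{\gamma},\FmtCode{A}(0),\FmtCode{A}(1),\FmtCode{A},\Pred{N_0},\Pred{N_1})$ --- here one uses, via the nerve operations of \cref{con:nerve-ext-dim}, that a realizer of the type-line $\CwfEq{\_.\CwfUniv[k]}{x_0}{x_1}$ at $\CwfPathLam[\_.\CwfUniv[k]]{i.A}$ is precisely an element of $\prod_{j:\Dim}\Pred{\ClUni{k}}$ with the correct boundary, namely $\FmtCode{A}$ itself, whose endpoints $\FmtCode{A}(0),\FmtCode{A}(1)$ realize $x_0,x_1$ and whose partners $\Pred{N_0},\Pred{N_1}$ realize $y_0,y_1$.

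It then remains to verify the equational conditions. The \emph{computation} equations hold because, for each head constructor, both the syntactic part (by $\Cx$'s type-case) and the realizer (by the defining case split) reduce to the prescribed substitution instance of the corresponding branch. \emph{Level restriction} holds because the syntactic parts of $\CwfUCase{\TyLift{k'}{l}{C}}{\CwfUniv[l]}{M_\Pi}{M_\Sigma}{M_{\mathbf{Eq}}}{M_{\CwfBool}}{M_{\mathbf{U}}}$ and $\CwfUCase{C}{\CwfUniv[l]}{M_\Pi}{M_\Sigma}{M_{\mathbf{Eq}}}{M_{\CwfBool}}{M_{\mathbf{U}}}$ agree in $\Cx$, while the realizer in the $\CodeUni{l}$ case is $\Pred{M_{\mathbf{U}}}\gamma\Pred{\gamma}$ in both cases, $\Pred{C}$ being unchanged by lifting. \emph{Naturality} follows by unfolding: reindexing along $\Gl{\gamma}:\Gl{\Delta}\to\Gl{\Gamma}$ acts by precomposition on $\gamma,\Pred{\gamma}$, the head constructor of $\Pred{X}$ is preserved, the branch realizers $\Pred{M_\bullet}$ are themselves natural, and the iterated extended substitutions $\delta$ commute with the lifted substitutions $\gamma_{+2},\gamma_{+5}$ appearing in \cref{def:type-case}.

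The step I expect to be the main obstacle is the bookkeeping needed to confirm that in each case $\Pred{M_\bullet}\delta\Pred{\delta}$ genuinely inhabits $\UPred{\Pred{C}\gamma\Pred{\gamma}}{\ReIx{\gamma}{\CwfUCase{C}{X}{M_\Pi}{M_\Sigma}{M_{\mathbf{Eq}}}{M_{\CwfBool}}{M_{\mathbf{U}}}}}$: this requires tracing the type-case computation rules of $\Cx$ in tandem with the unfolding of $\UPred{(-)}{}$ and of the comprehension realizers through the iterated context extensions, most delicately in the $\CodeEq$ branch, where the line of codes $\FmtCode{A}$ must be threaded correctly through the $\CwfEq{\_.\CwfUniv[k]}{x_0}{x_1}$-slot and its endpoints matched against $\FmtCode{A}(0),\FmtCode{A}(1)$.
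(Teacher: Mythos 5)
Your proposal is correct and follows essentially the same route as the paper's proof: take the syntactic part from $\Cx$'s type-case structure, define the realizer by a single (non-recursive) case split on the head constructor of $\Pred{X}\gamma\Pred{\gamma}$, and assemble the extended nerve substitution and its realizer so as to feed the appropriate branch. The paper leaves the verification of the required equations to ``calculation,'' while you flesh out the level-restriction and naturality checks and the type-fit in each branch (correctly observing that $\Pred{\ClUniFam{k}{\FmtCode{A}}}B$ is exactly the realizer type of $\CwfLam[A][\CwfUniv[k]]{B}$), so your proof is a slightly more explicit version of the same argument.
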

\begin{proof}
  We fix the following glued data:
  \begin{mathpar}
    \Gl{C}\in\CwfTy<\CPred>{\Gl{\Gamma}}
    \and
    \Gl{X}\in\CwfEl<\CPred>{\Gl{\Gamma}}{\Gl{\CwfUniv[k]}}
    \and
    \Gl{M_\Pi},\Gl{M_\Sigma}\in\CwfEl<\CPred>{\Gl{\Gamma}.\Gl{\CwfUniv[k]}.\CwfPi{\Var}{\Gl{\CwfUniv[k]}}}{\ReIx*{\Proj\circ\Proj}{\Gl{C}}}
    \and
    \Gl{M_{\mathbf{Eq}}}\in\CwfEl<\CPred>{
      \Gl{\Gamma}.\Gl{\CwfUniv[k]}.\Gl{\CwfUniv[k]}.%
      \CwfEq{\_.\Gl{\CwfUniv[k]}}{\ReIx{\Proj}{\Var}}{\Var}.%
      \ReIx*{\Proj\circ\Proj}{\Var}.%
      \ReIx*{\Proj\circ\Proj}{\Var}
    }{
      \ReIx*{
        \Proj\circ\Proj\circ\Proj\circ\Proj\circ\Proj
      }{\Gl{C}}
    }
    \and
    \Gl{M_{\CwfBool}}\in\CwfEl<\CPred>{\Gl{\Gamma}}{\Gl{C}}
    \and
    \Gl{M_{\mathbf{U}}}\in\CwfEl<\CPred>{\Gl{\Gamma}}{\Gl{C}}
  \end{mathpar}

  We need to exhibit an element
  $\CwfUCase{\Gl{C}}{\Gl{X}}{\Gl{M_\Pi}}{\Gl{M_\Sigma}}{\Gl{M_{\mathbf{Eq}}}}{\Gl{M_{\CwfBool}}}{\Gl{M_{\mathbf{U}}}}\in\CwfEl<\CPred>{\Gl{\Gamma}}{\Gl{C}}$
  with the specified computation and naturality rules. Inheriting the syntactic
  part from $\Cx$ as
  $\CwfUCase{C}{X}{M_\Pi}{M_\Sigma}{M_{\mathbf{Eq}}}{M_{\CwfBool}}{M_{\mathbf{U}}}\in\CwfEl{\Gamma}{C}$,
  it remains to define its realizer:
  \begin{align*}
    &\Pred{
      \parens*{
        \CwfUCase{\Gl{C}}{\Gl{X}}{\Gl{M_\Pi}}{\Gl{M_\Sigma}}{\Gl{M_{\mathbf{Eq}}}}{\Gl{M_{\CwfBool}}}{\Gl{M_{\mathbf{U}}}}
      }
    }
    \gamma\Pred{\gamma}
    =
    \\
    &\begin{cases}
      \Pred{M_\Pi}
      \angles{\gamma,A,\CwfLam[A][\CwfUniv[k]]{B}}
      \parens{
        \parens{\Pred{\gamma},\FmtCode{A}},
        \FmtCode{B}
      }
      & \mbox{if }
      \Pred{X}\gamma\Pred{\gamma} = \CodePi{\FmtCode{A}}{\FmtCode{B}} : \Pred{\ClUni{k}}\CwfPi{A}{B}
      \\
      \Pred{M_\Sigma}
      \angles{\gamma,A,\CwfLam[A][\CwfUniv[k]]{B}}
      \parens{
        \parens{\Pred{\gamma},\FmtCode{A}},
        \FmtCode{B}
      }
      & \mbox{if }
      \Pred{X}\gamma\Pred{\gamma} = \CodeSg{\FmtCode{A}}{\FmtCode{B}} : \Pred{\ClUni{k}}\CwfSg{A}{B}
      \\
      \!\!\!\begin{array}[t]{l}
        \Pred{M_{\mathbf{Eq}}}
        \\
        \begin{array}[t]{l}
          \angles{
            \gamma,\ReIxCleave*{0/i}{A},\ReIxCleave*{1/i}{A},\CwfPathLam[\_.\CwfUniv[k]]{i.A},N_0,N_1
          }
          \\
          \parens{
            \Pred{\gamma},\FmtCode{A}0,\FmtCode{A}1,\FmtCode{A},\Pred{N_0},\Pred{N_1}
          }
        \end{array}
      \end{array}
      & \mbox{if }
      \Pred{X}\gamma\Pred{\gamma} = \CodeEq{\FmtCode{A}}{\Pred{N_0}}{\Pred{N_1}} : \Pred{\ClUni{k}}{\CodeEq{i.A}{N_0}{N_1}}
      \\
      \Pred{M_{\CwfBool}}\gamma\Pred{\gamma}
      & \mbox{if } \Pred{X}\gamma\Pred{\gamma} = \CodeBool
      \\
      \Pred{M_{\mathbf{U}}}\gamma\Pred{\gamma}
      & \mbox{if } \Pred{X}\gamma\Pred{\gamma} = \CodeUni{k'}
    \end{cases}
  \end{align*}

  The required equations follow by calculation.
\end{proof}

\begin{corollary}
  $\CPred$ is a model of \XTT{} and moreover, $\CPred\rProjto^{\PiSyn}\Cx$
  is a homomorphism of \XTT-algebras.
\end{corollary}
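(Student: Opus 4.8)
The plan is to read the corollary off the battery of lemmas just proved, so the proof is essentially an assembly step. To show $\CPred$ is a model of \XTT{}, I would walk down the list of requirements on an \XTT-algebra from \cref{sec:cwf-structure} and pair each with an established fact: $\CPred$ is an algebraic cumulative cwf (built in \cref{sec:closed-computability}); it carries regular coercion structure (\cref{lem:cpred-coe}) and regular homogeneous composition structure (\cref{lem:cpred-hcom}); and it is closed under booleans, dependent function types (\cref{thm:cpred-dependent-function-types}), dependent pair types, dependent path types, universes \`a la Russell, and type-case, and enjoys boundary separation (\cref{lem:cpred-path-unicity}). Since \XTT{}'s dependent equality types are \emph{defined} as dependent path types in a cwf with boundary separation, the last two facts supply the equality types, and heterogeneous composition is the definitional extension of coercion and homogeneous composition, so it comes for free. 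Every operation of an \XTT-algebra is thus interpreted and every equational axiom verified, so $\CPred$ is a model of \XTT{}.

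For the homomorphism claim, I would use the fact that $\CPred$ was set up so that each piece of structure is a pair of a \emph{syntactic component} taken verbatim from $\Cx$ and a \emph{realizer component} living in $\AugCSet$, and $\PiSyn$ simply discards the realizer. The underlying functor of $\PiSyn$ is already a split fibration preserving the terminal object (from \cref{sec:closed-computability}), and the definitions of $\CwfTy<\CPred>$, $\CwfEl<\CPred>$, the context comprehension $\Gl{\Gamma}.\Gl{A}$, the constraint comprehension, and every connective all take the first coordinate to be the corresponding datum of $\Cx$. Hence $\PiSyn$ strictly preserves comprehension, the Kan operations, the connectives, the universes, and type-case, i.e.\ it is a homomorphism of \XTT-algebras.

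The main obstacle here is not mathematical depth --- every substantive difficulty (regularity of the coercion realizers, the separation property for the inductively defined type codes via \cref{lem:inductive-path-unicity}, closure of the closed-universe cwf under Kan operations) was already dispatched in the preceding lemmas --- but rather the care needed to check that $\PiSyn$ preserves all of this structure \emph{on the nose}, as demanded for homomorphisms of generalized algebraic structures. This strictness is immediate once one notices that the syntactic coordinate of every construction in $\CPred$ is literally the $\Cx$-construction, with no coherence data interposed; so the bookkeeping, though lengthy, is routine.
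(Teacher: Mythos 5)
Your proposal matches the paper's own (implicit) argument: the corollary is stated without an explicit proof precisely because it is a direct assembly of the preceding lemmas establishing coercion, composition, booleans, $\Pi$, $\Sigma$, path types, universes, boundary separation, and type-case for $\CPred$, and because the homomorphism claim is automatic from the fact that every $\CPred$-construction carries the corresponding $\Cx$-construction as its first coordinate, which $\PiSyn$ projects out on the nose. Your remarks on strictness and on equality types being path types plus boundary separation are exactly the right things to highlight.
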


\nocite{martin-lof:1979}
\nocite{taylor:1999}
\nocite{kaposi-kovacs-altenkirch:2019}
\nocite{abel-coquand-dybjer:2008}
\nocite{fiore-simpson:1999}

\end{document}